\newtheorem{theorem}{Theorem}[section]
\newtheorem{proposition}[theorem]{Proposition}
\newtheorem{definition}[theorem]{Definition}
\newtheorem{assumption}[theorem]{Assumption}
\newtheorem{remark}[theorem]{Remark}
\newtheorem*{proposition1}{Proposition~\ref{prop: tail_unconf}}
\newtheorem*{proposition2}{Proposition~\ref{prop:identifiability}}
\newtheorem*{remark1}{Remark~\ref{remark:regular_varying}}
\newtheorem*{remark2}{Remark~\ref{remark:point_process}}
\newtheorem*{remark3}{Remark~\ref{remark:consistency}}
\newcommand{\ind}{\perp\!\!\!\!\perp} 
\newcommand\barbelow[1]{\stackunder[1.2pt]{$#1$}{\rule{.8ex}{.075ex}}}
\title{Treatment Effects in Extreme Regimes}
\author{\textbf{Ahmed Aloui}\textsuperscript{1}  $\,$, $\,$ \textbf{Ali Hasan}\textsuperscript{2} $\,$,$\,$ \textbf{Yuting Ng}\textsuperscript{1}$\,$, $\,$ \textbf{Miroslav Pajic}\textsuperscript{1}$\,$, $\,$   \textbf{Vahid Tarokh}\textsuperscript{1} \\
\textsuperscript{1} Department of Electrical and Computer Engineering, Duke University \\
\textsuperscript{2} Department of Biomedical Engineering, Duke University\\
\texttt{\{ahmed.aloui,ali.hasan,yuting.ng,miroslav.pajic,vahid.tarokh\}@duke.edu}}
\begin{document}
\maketitle

\begin{abstract}
Understanding treatment effects in extreme regimes is important for characterizing risks associated with different interventions. This is hindered by the unavailability of counterfactual outcomes and the rarity and difficulty of collecting extreme data in practice. To address this issue, we propose a new framework based on extreme value theory for estimating treatment effects in extreme regimes. We quantify these effects using variations in tail decay rates of potential outcomes in the presence and absence of treatments. We establish algorithms for calculating these quantities and develop related theoretical results. We demonstrate the efficacy of our approach on various standard synthetic and semi-synthetic datasets.
\end{abstract}

\section{Introduction}
\label{sec:intro}
Understanding the effect of an intervention is a primary objective of causal inference. In the Neyman-Rubin framework~\citep{neyman1923,rubin1974}, the average treatment effect (ATE) and the conditional average treatment effect (CATE) are the quantities of general interest. These metrics are instrumental for informed decision-making ~\citep{abadie2018econometric} and have been widely applied to various areas such as econometrics and epidemiology~\citep{rothman2005causation}. 

Despite their importance, these statistics are based on the \emph{expectation} of the difference of the outcome variables influenced by the intervention. 
However, since they are focused on the expectation, these quantities do not consider the \textit{extreme} behavior of the treatment effect.
In many scenarios, it is equally or perhaps more important to understand the \emph{worst case} behavior of interventions rather than only the expected behavior. 
For example, we may prescribe a policy that is beneficial on average, yet it increases the likelihood of extremely adverse events that we want to completely prevent. 
This is particularly relevant at an individual level, where an intervention can lead to severe adverse effects for a certain subset of members of the population compared to others. 
Applied in a medical setting, a specific patient may experience stronger adverse side effects as a response to treatment than those observed on the population level or in the average case. 
In turn, there is a need to characterize the \emph{maximum} (or similarly the \emph{minimum}) of the treatment effect rather than the average. 

This provides the motivation for the present work where we develop a framework for estimating treatment effects in the \emph{tails} of the distribution. 
Although quantile methods have found application in causal inference~\citep{diaz2017efficient,giessing2021debiased}, their performance becomes notably unstable and inaccurate for high quantile statistics, especially when data is limited. 
Therefore, we use the \textit{extrapolation properties} of Extreme Value Theory (EVT)~\citep{haan2006extreme} to propose a measure quantifying extreme treatment effects.
Under certain assumptions, EVT provides asymptotic theory for the behavior of data within the tail of a distribution through the Generalized Extreme Value (GEV) distributions.
GEVs have location and scale parameters as well as a shape parameter, which is particularly important since it measures the tail decay rate. 
We propose defining the extreme treatment effect as the difference between the shape parameters of potential outcomes. 
This intuitively quantifies the change in the distribution tail decay rate upon treatment administration. 
We refer to these statistics as the \emph{extreme treatment effect} (ETE). 
Related to this, we also examine the differences in the tails of the individual level potential outcomes referred to as the \emph{conditional extreme treatment effect} (CETE).

\paragraph{Motivating Example}
In applications such as healthcare, it may be necessary to ensure that the tail behavior of an outcome (e.g. corresponding to adverse reactions to a drug) does not significantly change under the treatment intervention. 
We can use the ETE to characterize which distribution is likely to have more realizations deep in the tail since this quantity describes the difference in the decay rate of the tails.  % our methodology captures the change in the tail behavior of the different potential outcomes. 
%ATE$_2$ may be seen as proportional to the difference in the tail decay of the outcomes.
Faster tail decay will correspond to fewer extreme events and vice-versa. 
The idea is illustrated in Figure~\ref{fig:ite_vs_itte} where the outcome above $12$ corresponds to a serious adverse event but below is not considered harmful.
On average, the treatment group appears more beneficial (since it has a negative ATE), but it simultaneously induces a greater risk of serious adverse effects (and thus has a higher ETE). 
%It's also worth noting that, while prior studies have explored the tail of potential outcomes, our objective is to investigate individual-level potential outcomes and quantify the associated risks for a given individual.
%This could have significant implications in fields like personalized medicine. 
Additionally, we focus on estimating the CETE for the \emph{individual} level rather than the population level, which has implications in fields such as personalized medicine.
We illustrate this idea in greater detail in Appendix~\ref{sec:motivation}.
Note that for the problems that we are studying, higher order moments such as variance or skew are insufficient for understanding the tail risk.
The reason is due to the fact that many of the underlying phenomena are heavy tailed, and higher order moments may not be defined for these data, resulting in statistics that never converge. 

\begin{figure}[h]
\includegraphics[width=3in]{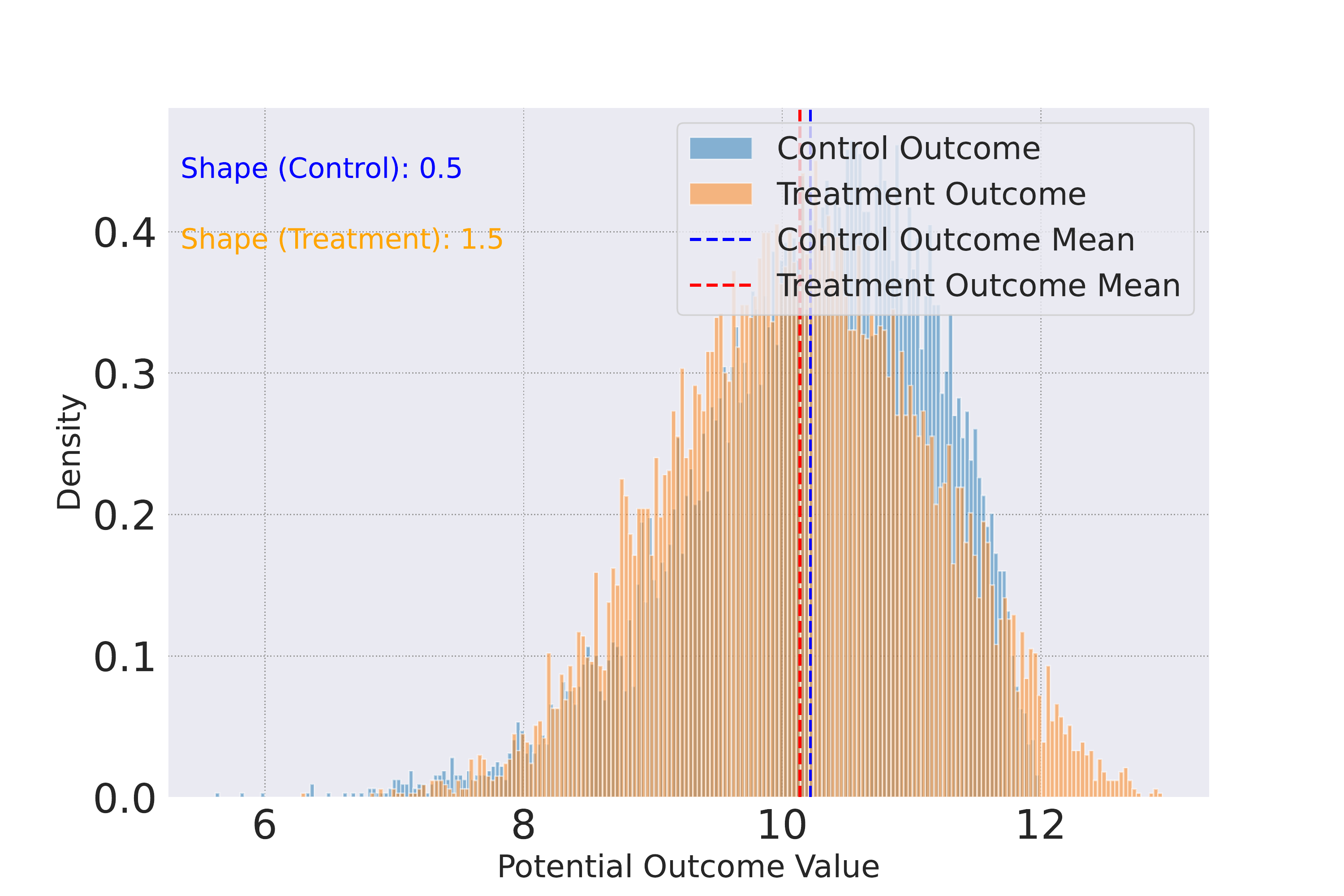}
\caption{Illustration of what the ETE represents. The control group has faster decaying tails with a shape parameter of $0.5$ whereas the treatment group has a shape parameter of $1.5$ with slowly decaying tails. Yet, the treatment group has a lower mean. Hence, ETE $=1$, while the ATE is negative.}
\label{fig:ite_vs_itte}
\end{figure} 

%\A{mention why the variance and moments cannot capture the tail behavior}
While capturing essential characteristics of the potential outcomes, the proposed ETE and CETE have challenges associated with estimation from data. 
The first challenge is due to the lack of observed data in the tails of the distributions. 
By definition, data in the tails are rare, hence we would not expect to have access to many samples from the tail. 
To counteract this challenge, our framework uses EVT to represent the tail distribution as a parametric class of distributions and extrapolate deep into the tails from near-tail data. 
The second challenge arises from the unavailability of counterfactual outcomes, which is referred to as the \emph{fundamental problem of causal inference} ~\citep{rubin1974,holland1986}.
This challenge persists in any causal inference setting and is due to the fact that we cannot observe the outcome for the same individual under two different treatment regimes. 
The third challenge is due to \emph{unobserved confounders}. 
While it is common to assume that the observations are generated such that no unobserved confounders exist, which is known as the unconfoundedness assumption\footnote{Also referred to as conditional exchangeability.}, it is often very hard to test this assumption in practice. 

To address these challenges, we propose a new \emph{tail unconfoundedness} condition that we prove to be strictly weaker than unconfoundedness. Moreover, we develop statistical estimators for the proposed ETE and CETE and we prove identifiability and consistency. 
We illustrate these results numerically in various experiments with both synthetic and semi-synthetic datasets.   

\textbf{Related Work} 
As discussed in the introduction, EVT provides a mathematical theory for extrapolating to the tails of a distribution~\citep{haan2006extreme}. 
%It specifically allows extrapolation to regions with very few observations~\citep{embrechts2013modelling}. 
%EVT has been applied in various fields, including finance, insurance, environmental science, and engineering, where understanding the risk of extreme events is crucial \citep{mcneil2015quantitative}. 
%Its fundamental concepts, such as the Fisher-Tippett-Gnedenko theorem, which classifies all possible limit laws for normalized maxima of a sequence of random variables, form the basis for much of the subsequent developments in the theory \citep{haan2006extreme}.
In causal inference, methods for estimating the Average Treatment Effect (ATE) and Conditional Average Treatment Effect (CATE), extreme average treatment effects, and quantile treatment effects have been explored in the literature.
Methods for estimating the ATE include covariate adjustment of~\citet{rubin1974}, also known as the backdoor adjustment in the do-calculus framework~\citep{pearl}.
%other techniques consist of weighting methods such as Propensity Scores \citep{rosenbaum1983}, Doubly Robust estimators \citep{doubly}, and inverse probability weighting \citep{hirano2003}.
With regards to deep learning methods for estimating CATE, the $s$-learner and $t$-learner neural networks~\citep{kunzel2019metalearners}, counterfactual regression~\citep{johanson}, and TARNet~\citep{shalit} are well-known examples.
Additional machine learning approaches such as random forests \citep{athey} and Gaussian processes \citep{alaa} have also been employed for CATE estimation. Other methods have developed doubly robust learners for CATE estimation~\citep{kennedy2020towards}.
Estimators for extreme quantile treatment effects were developed in~\citep{zhang2018extreme}, but without considering extrapolation outside the range of observations.
Inferring the conditional value at risk of a treatment effect was investigated in~\citet{kallus2023treatment} but the case of tail quantiles was not explicitly considered. 
Estimating the extreme average treatment effect was addressed in~\citep{huang2022extreme} using extreme value theory.
%There has also been a significant body of literature on distributional treatment effects defined in terms of quantiles, or cumulative distribution functions (CDFs), with a similar goal of moving beyond simple mean summaries to study the entire counterfactual distribution. 
Works on estimating a parametric or semi-parametric form of the counterfactual distribution include \cite{kennedy2021semiparametric,chernozhukov2005iv,chernozhukov2013inference}. 
Methods for estimating the quantiles include \citep{diaz2017efficient,firpo2007efficient,frolich2013unconditional,zhang2012causal,giessing2021debiased}. 
These methods do not consider estimating the conditional extreme treatment effect and focus on the population level. Moreover, we provide a different framework to define these quantities of interest and provide a detailed mathematical motivation that exploits the extrapolating behavior of EVT.  %Moreover, the situation where only one observation for every individual is available, as we are considering in our study, and it is unclear how the methods perform for quantiles deep in the tail. 
%Also, while quantile estimation performs well in dense regions, it is necessary to utilize knowledge from extreme value theory to accurately estimate extreme events and characterize decay rates in the tail. 
%Additionally, we consider modeling the tail distribution of individual potential outcomes through conditional extreme value distributions. 
Our approach is closely related to the study of max-stable processes, which have been applied across various disciplines to model extremes over function spaces~\citep{davison2012spatial}. 
The key difference is these distributions are often estimated assuming an abundance of temporal observations, which is an assumption that may not hold in causal inference settings.

% \paragraph{Our contributions:} We summarize the main contributions of our paper below.
% \begin{enumerate}

%     \item First, we propose a theoretical framework for estimating treatment effects in extreme regimes; 
%     \item We propose tail unconfoundedness (which is a milder condition than unconfoundedness) and prove that it guarantees the identifiability of the proposed estimators for treatment effects in extreme regimes;
%     \item We describe a neural network estimator for applying the theoretical framework. 
%     %\item We propose an estimation procedure for the maximum based on feature similarity; 
%     %\item Learning the Limiting Individual Extreme Value Distribution for causal inference scenarios.
% \end{enumerate}
%\section{Related Work}

\section{Background}
%\A{sentence can be removed to save space}
%We introduce the necessary mathematical background from both a causal inference and extreme value theory standpoint.
\subsection{Causal inference}
Consider covariates $X \in \mathcal{X}$ with $\mathcal{X} \subset \mathbb{R}^d$ being the domain.
Let $T \in \mathcal{T} = \{0, 1\}$ denote the treatment variable. 
The group assigned $T=0$ and $T=1$ are called the control and treatment groups, respectively. 
Denote the variable $Y\in \mathbb{R}$ as the factual outcome. 
Potential outcomes are the outcomes that would have been observed if only treatment $T=1$ or $T=0$ was assigned and are respectively represented by $Y_1$ and $Y_0$~\citep{morgan2015counterfactuals}.
%We define a dataset consisting of $N$ samples as the set $S = \{(x^{(i)},t^{(i)},y^{(i)})\}_{i=1}^{N}$. 
%Let $S = \{(x^{(i)},t^{(i)},y^{(i)})\}_{i=1}^{N}$ be a dataset consisting of $N$ samples.
The relationship between the factual outcome and the potential outcomes is $Y = Y_1 T + Y_0 (1-T)$. 
The main challenge in causal inference is that we do not observe the counterfactual outcomes, i.e., the outcomes that would have been observed if the treatment were reversed. 
We call any statistics of the potential outcomes a \emph{causal statistics}. 
A causal statistic is said to be identifiable if it can be reduced to a function of the factual outcomes $Y$, the treatment assignment $T$, and the covariates $X$~\citep{imbens2004}. 
This is formalized in the following definition:

\begin{definition}[Causal Identifiability]
    \label{def:ident}
    A causal statistic (e.g $\mathbb{E}[Y_1]$) is identifiable if it can be computed from a purely observational quantity (e.g. $ \mathbb{E}[\mathbb{E}[Y \mid T=1,X]]$).
\end{definition}
Similarly, an estimator, as a particular statistic, is said to be identifiable when it meets this criterion. We next describe some assumptions that are often made in order to guarantee the causal identifiability of CATE and ATE.
The first is consistency, which requests that the potential outcomes align with the factual outcome:
\begin{assumption}[Consistency]
\label{ass:consistency}
For $t\in \mathcal{T}$, the following holds, $T=t \implies Y_t = Y$.
\end{assumption}
The next is overlap\footnote{Also known as the positivity assumption.} which requires that the different treatment groups have the same support: 
\begin{assumption}[Overlap]
\label{ass:pos}
$$
\forall x \in \mathcal{X}, \; 0 < P(T=1\mid X=x) < 1 
$$
\end{assumption}

%The overlap assumption\footnote{Also known as the positivity assumption.} implies that different treatment groups share a common support, and every individual has a non-zero probability of being assigned to the treatment group.

Finally we assume unconfoundedness, which requires that the potential outcomes are independent of the treatment given the covariates:
\begin{assumption}[Unconfoundedness]
\label{ass:unconf}
    $\left(Y_1,Y_0\right)\ind T \mid X$
\end{assumption}
We will later relax this assumption to a weaker \emph{tail unconfoundedness} assumption, which intuitively states that this property only needs to hold in the tail of the outcome variable. 
%In our work, we replace this assumption with a weaker \emph{tail unconfoundedness} assumption, which we will later define. 

\subsection{Extreme value theory}
EVT provides a principled mathematical framework for extrapolating to the tails of a distribution. 
The Fisher-Tippett-Gnedenko theorem states that if the shifted and scaled versions of maximum order statistics of a distribution $P(\cdot)$ converge to a non-degenerate distribution, then it is guaranteed to converge to a family of well-specified distributions known as the generalized extreme value (GEV) distributions~\citep{haan2006extreme,fisher1928,gnedenko1943}.
If this convergence holds, the distribution $P(\cdot)$ is said to be in the \emph{maximum domain of attraction (MDA)}\footnote{The MDA of a distribution $p$ is the distribution to which the maximum of samples from $p$ converges to.} of the GEV.
The distribution is characterized by three parameters: $\mu$, the location parameter; $\sigma$, the scale parameter; and, $\xi$ the \emph{shape} or \emph{tail index} parameter.
%In particular, the Fisher-Tippett-Gnedenko theorem states that if the maximum of a distribution $P(\cdot)$ converges to a non-degenerate distribution, then it is guaranteed to converge to a family of well-specified distributions known as the generalized extreme value (GEV) distributions~\citep{haan2006extreme}.
%If this convergence holds, the distribution $P(\cdot)$ is said to be in the \emph{maximum domain of attraction (MDA)} of the GEV.
%The distribution is characterized by three parameters: $\mu$, the location parameter; $\sigma$, the scale parameter; and, $\xi$ the shape or \emph{tail index parameter}.
\begin{definition}%[Generalized Extreme Value Distribution]
\label{def:gev}
The GEV distribution is defined as:
\begin{equation}
G_\xi(y)=\exp \left(-(1+\xi \bar{y})^{-1 / \xi}\right), \:\: 1+\xi \bar{y} > 0
\label{eq:gev}
\end{equation}
where 
$\bar{y} = \frac{y-\mu}{\sigma}$
with parameters $\mu, \sigma, \xi$
The case $\xi =0$ is defined as the limit where $\xi$ approaches zero.
\end{definition}
%Note that the shape parameter $\xi$ governs the rate of the tail decay. When $\xi > 0$, (respectively, $\xi < 0$)  the underlying phenomenon has a heavy tail (respectively, bounded tail), whereas when $\xi = 0$ the tail decays exponentially. 
For our purposes the shape parameter $\xi$ is of interest, which governs the rate of tail decay.
When $\xi > 0$ the distribution exhibits a heavy tail, whereas when $\xi < 0$ the tail is bounded.
If $\xi = 0$, the tail decays exponentially.
The Fr\'echet, Weibull, and Gumbel distributions respectively correspond to positive, negative, and zero values of $\xi$. 
\begin{theorem}[Fisher–Tippett–Gnedenko]
    \label{thm:gev}
    Let $Y^{(1)}, Y^{(2)}, \ldots, Y^{(n)}$ be a sequence of iid real random variables. If there exist two sequences of real numbers $a_n>0$ and $b_n \in \mathbb{R}$ such that the following limits converge to a non-degenerate distribution function:
$$
\lim _{n \rightarrow \infty} P\left(\frac{\max \left\{Y^{(1)}, \ldots, Y^{(n)}\right\}-b_n}{a_n} \leq y\right)=G(y),
$$
then the limiting distribution $G$ is the GEV distribution.
\end{theorem}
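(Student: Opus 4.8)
The plan is to follow the classical route through \emph{max-stability}: first argue that any admissible limit law $G$ must reproduce itself (up to an affine change of variable) when raised to an integer power, and then classify all distributions with this self-similarity, showing that they are exactly the GEV family $G_\xi$ of Definition~\ref{def:gev}.

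\textbf{Step 1: the limit $G$ is max-stable.} Write $F$ for the common distribution function of the $Y^{(i)}$ and $M_n = \max\{Y^{(1)},\dots,Y^{(n)}\}$, so that the hypothesis reads $F^n(a_n y + b_n) \to G(y)$ at every continuity point. Fix an integer $k \ge 1$. Applying the hypothesis along the subsequence $nk$ gives $F^{nk}(a_{nk} y + b_{nk}) \to G(y)$, while $F^{nk}(a_n y + b_n) = \bigl(F^{n}(a_n y + b_n)\bigr)^k \to G(y)^k$, and $G^k$ is again a non-degenerate distribution function. Thus the single sequence of distribution functions $\bigl(F^{nk}\bigr)_n$, affinely renormalized in two different ways, converges to the two non-degenerate limits $G$ and $G^k$. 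By the convergence of types (Khinchin's) theorem these two normalizations must agree asymptotically up to an affine transformation, which yields constants $\alpha_k > 0$ and $\beta_k \in \mathbb{R}$ with
\begin{equation}
G(y)^k = G(\alpha_k y + \beta_k) \qquad \text{for all } y .
\label{eq:maxstable}
\end{equation}

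\textbf{Step 2: classification of max-stable laws.} Equation~\eqref{eq:maxstable} is the defining property of a max-stable law. Put $\psi(y) = -\log G(y)$ on the interval where $0 < G(y) < 1$; then $\psi$ is positive and monotone and~\eqref{eq:maxstable} becomes $\psi(\alpha_k y + \beta_k) = k\,\psi(y)$. A composition/monotonicity argument promotes $(\alpha_k,\beta_k)_{k\in\mathbb{N}}$ to a one-parameter family $(\alpha(s),\beta(s))_{s>0}$ with $\psi(\alpha(s) y + \beta(s)) = s\,\psi(y)$ for all real $s>0$ and $\alpha$ multiplicative, $\alpha(st)=\alpha(s)\alpha(t)$. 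Solving this Cauchy-type functional equation under monotonicity gives $\alpha(s) = s^{-\xi}$ for some $\xi \in \mathbb{R}$; substituting back and integrating the resulting relation for $\psi$ shows that, after the affine reparametrization $\bar y = (y-\mu)/\sigma$, one necessarily has $\psi(y) = (1+\xi\bar y)^{-1/\xi}$ on $\{1+\xi\bar y>0\}$ when $\xi \neq 0$, and $\psi(y) = e^{-\bar y}$ when $\xi = 0$ (which is exactly the $\xi\to 0$ limit). Exponentiating recovers $G = G_\xi$ as in~\eqref{eq:gev}.

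\textbf{Main obstacle.} The load-bearing steps are, first, the appeal in Step~1 to the convergence of types theorem — one must check its hypotheses carefully, namely that both limits are non-degenerate and arise from the \emph{same} underlying sequence $F^{nk}$ — and, second, the analysis in Step~2: extending~\eqref{eq:maxstable} from integer $k$ to all real $s>0$, solving the functional equation while excluding non-measurable/pathological solutions, and then treating the three tail regimes ($\xi>0$ Fréchet, $\xi<0$ Weibull, $\xi=0$ Gumbel) uniformly, including the correct identification of the support $\{1+\xi\bar y>0\}$ in each case. As this is a classical result, an alternative is simply to cite~\citet{haan2006extreme,fisher1928,gnedenko1943}.
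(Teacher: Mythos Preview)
Your sketch is essentially the standard route (convergence of types $\Rightarrow$ max-stability, then solve the Cauchy-type functional equation for $-\log G$), and the concerns you flag are the genuine ones. However, the paper does \emph{not} prove Theorem~\ref{thm:gev} at all: it is presented purely as background in the extreme value theory subsection, with the proof delegated to the classical references~\citep{haan2006extreme,fisher1928,gnedenko1943}. So there is no ``paper's own proof'' to compare against; your last sentence---simply citing those sources---is exactly what the paper does, and your Steps~1--2 go well beyond what the authors include.
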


%\A{thinking of moving the block maxima method to the algorithm part as now we only use it for the main text}
%The random variable $Z$ following the limiting distribution $G$ is called the maximum domain of attraction of $Y^{(i)}$. 
In general, fitting such distributions requires many repeated observations such that the maximum can be computed, thus obtaining samples approaching the tail.
The \emph{block maxima} method refers to the approach when the data is subdivided into blocks and the maximum quantity per block is used for fitting~\citep{coles2001introduction}.
Specifically, $K \times n$ observations $\left\{Y^{(i,j)}\right\}_{i,j=1}^{K \times n}$ observations are divided into $K$ blocks of size $n$.
%As an example, $K$ could correspond to the number of months of observations whereas $n$ could correspond to the number of observations within each month.
Let $M^{(i)}= \underset{j=1,\ldots, K}{\max} Y^{(i,j)}$ for $i = 1,\ldots,n$.
The values $\left\{M^{(i)}\right\}_{i=1}^{n}$ are then used to fit the GEV model.
In the causal inference setting, multiple observations per individual is often prohibitively expensive to obtain, so we introduce alternatives to the block maxima method later in the text.
%\A{mention what is a max domain of attraction}

%For additional details on this technique, see the overview in. 

\section{Theoretical Framework}
% For Estimating Treatment Effects in Extreme Regimes
We now present the proposed framework for estimating treatment effects in extreme regimes.
We begin by defining the estimation problem motivated by the previous examples.
Then, we establish an identifiability theorem under the condition of \textit{tail unconfoundedness}, which is a weaker requirement than conditional unconfoundedness.
%Finally, we establish rates on the convergence of the proposed $\varepsilon$-max-sampler to the true max sampler.
Combining these results leads to a numerical algorithm that can be readily implemented. 
The proofs of our results are presented in Appendix~\ref{sec:proofs}.

\subsection{Problem setup}

Given a dataset $S = \{(x^{(i)},t^{(i)},z^{(i)})\}_{i=1}^{N}$, sampled from a $(X,T,Z)$ where $Z$ is the limiting tail distribution for $Y$. Our goal is to estimate the effect of the treatment assignment on the tail indices of the potential outcomes $Y_0$ and $Y_1$ as well as the conditioned potential outcomes $Y_0|X=x$ and $Y_1|X=x$.
This estimation faces the major challenge in causal inference that the potential outcomes are only partially observed and are subject to \textit{selection bias}. 
To address this, we investigate the identifiability of the conditional maximum likelihood estimator under tail unconfoundedness. 
When presenting the method, we assume that the domain of attraction of each individual can be estimated, and we assume repeated observations for each individual. 
In practice, only one factual outcome per individual is usually observed, making it difficult to estimate the maxima through standard preprocessing methods such as the block maxima method. 
We provide a practical way of dealing with this in Appendix~\ref{sec:max_sampler}.
% TEH EPSILON MAX SAMPLER THEORY SECTION IS CALLED: {sec:max_sampler}
Let \( x \in \mathcal{X} \) and consider the potential outcomes \( Y_1 \) and \( Y_0 \). The outcomes conditioned on the individual \( X=x \) are denoted as \( Y_1(x) = (Y_1 \mid X=x) \) and \( Y_0(x) = (Y_0 \mid X=x) \). We assume both the potential outcomes and their conditionals converge to non-degenerate distributions within the domain of attraction of the Generalized Extreme Value distributions: \( \text{GEV}(\barbelow\mu_1,\barbelow\sigma_1,\barbelow\xi_1) \), \( \text{GEV}(\barbelow\mu_0,\barbelow\sigma_0,\barbelow\xi_0) \), \( \text{GEV}(\mu_1(x),\sigma_1(x),\xi_1(x)) \), and \( \text{GEV}(\mu_0(x),\sigma_0(x),\xi_0(x)) \). Thus, the conditions of the Fisher-Tippet-Gnedenko theorem are satisfied.
 
%Related to the GEV are the quantities that we want to 
\begin{definition}[ETE and CETE]
\label{def:extreme_treatment_effect}
The extreme treatment effect is defined as:
$$
\barbelow{\tau}_{ext} = \barbelow{\xi}_1 - \barbelow{\xi}_0.
$$
The conditional extreme treatment effect is defined as:
$$
\tau_{ext}(x) = \xi_1(x) - \xi_0(x).
$$
\end{definition}
We introduce two metrics: \( \varepsilon_{\text{ETE}} \) which quantifies the error in estimating ETE, and \( \varepsilon_{\text{CETE}} \), measuring the precision in estimating CETE.

\begin{definition}
\label{def:performance}
Given estimators $\hat {\barbelow \tau}_{ext}$ for ETE and $\hat{\tau}_{ext}(x)$ for CETE, the performance errors are respectively defined as:
$$
\varepsilon_{\text{ETE}} = |\hat{\barbelow\tau}_{ext} - \barbelow \tau_{ext}|
$$
and
$$
\varepsilon_{\text{CETE}}  = \mathbb{E}\left[\left(\hat{\tau}_{ext}(x) - \tau_{ext}(x)\right)^{2}\right].    
$$
\end{definition}

For our proposed methods, we will use this precision measure for quantifying the performance of different methods in our empirical studies. 

\subsection{Mathematical Motivation}
In this section, we provide two mathematical interpretations of the ETE and CETE to motivate why these quantities are interesting in a causal inference setting.

\paragraph{Regularly Varying Perspective.}
In this perspective, we can relate the ETE and CETE to the different rates of tail decays of a function through regular variation.
The basic interpretation is that when both tail indices are positive, the difference between them relates to the ratio of the survival distributions and can be used to study which group is more likely to have outcomes exceed a certain threshold.
We review some definitions on regularly varying functions from~\citet[Chapter 1]{resnick2007heavy} and refer the reader to the chapter for additional information.
A function $U: \mathbb{R}_{+} \rightarrow \mathbb{R}_{+}$ is regularly varying ($RV$) if $\exists \, \xi \in \mathbb{R}$ such that
$$
\forall x>0, \lim _{t \rightarrow \infty} \frac{U(t x)}{U(t)}=x^\xi .
$$
This definition is due to Karamata~\citep{bingham1989regular}.
The parameter $\xi$ is called the exponent of variation, and we write $U \in R V(\xi)$ to mean that $U$ is a regularly varying function with exponent of variation $\xi$. 
%If $\rho=0, U$ is called slowly varying.
We will present the results for the Fr\'echet case where $\xi>0$. 
The following remark characterizes our motivation~\cite{resnick1988association}
\begin{remark}
\label{remark:regular_varying}
    Let $\barbelow{\xi}_0, \barbelow{\xi}_1 > 0$. 
    Let $S_1(Y_1), S_0(Y_0)$ be the survival distributions of the potential outcome variables with positive support for the treatment and control groups respectively.
    Then if $\barbelow{\tau}_{ext} > 0$, then ${S_0(Y_0)}/{S_1(Y_1)} \in RV(\barbelow\xi_1 - \barbelow\xi_0)$.
    Moreover, $S_1(Y_1) < S_0(Y_0)$ for $Y_t \gg 1$.
\end{remark}
%\begin{theorem}
%  Let $F$ be a CDF. Then, the following are equivalent
%  \begin{enumerate}[(i)]
%      \item $F$ belongs to the max-domain of attraction of a Fréchet distribution
%      \item $1-F \in R V(-\xi)$ 
%  \end{enumerate}
%where $\xi$ is the shape parameter of the Fr\'echet distribution.
%\end{theorem}
The proof for this remark relies on the Karamata representation theorem~\citep{galambos1973regularly}, with further details in Appendix~\ref{sec:proofs}. 
%and it implies that taking the ratio of the survival functions of the potential outcomes is regularly varying $\frac{1-F_{Y^0}}{1-F_{Y^1}} \in RV(\xi_1 - \xi_0)$. 
Therefore, the ETE and CETE can be interpreted as characterizing the ratio of the survival functions between the treatment and control groups.

\paragraph{Point Process Perspective.}
Another practical viewpoint of EVT involves interpreting the GEV in terms of the intensity of a Poisson point process~\citep[Chapter 7]{coles2001introduction}.
The intensity of the Poisson process is given by the negative log of the density in~\eqref{eq:gev}.
One can use this form to calculate the expected number of extreme points within a region of the state space.
We formalize this in the following remark:
\begin{remark}[Expected Number of Events]
\label{remark:point_process}
    Consider a causal inference task with potential outcomes variable $Y_t$, $t\in\{0,1\}$, that $Y_t$ is in the maximum domain of attraction of a GEV with rate $\barbelow{\xi}_t$, and ETE given by $\barbelow{\tau}_{ext} > 0$.
    Suppose that $\mathcal{R} \subset \mathbb{R}$ is the Borel set of interest that should be quantified for the causal inference task with $|\inf \mathcal{R}| \gg 0$.
    Then, $\mathbb{E}[N_{Y_1}(\mathcal{R})] > \mathbb{E}[N_{Y_0}(\mathcal{R})]$ where $N_{Y_t}(\mathcal{R}) = \sum_{i=1}^\infty \delta_{Y_t^{(i)} \in \mathcal{R}}$ for $t\in \{0,1\}$ is a counting process over outcomes on some Borel subset of $\mathbb{R}$.
\end{remark}

The analog of this remark can easily be made for the CETE case by conditioning on the individual covariates. 
This is particularly important, for example, in a medical setting where one may want to consider the number of severe adverse events in a treatment group versus a control group.
If the outcome variable being above a certain threshold corresponds to a serious adverse event, then the ETE and CETE can be used to quantify the number of events. 

\subsection{Tail unconfoundedness and identifiablity}
%For interventions $t \in \{0,1\}$, let $Z^{m}_t = \underset{1\leq k \leq m}{\text{max}}Y^{(k)}_t$, with $\left\{Y^{(k)}_t\right\}_{k=1}^{m}$ independent and identically distributed family having the same distribution as $Y_t$.
In order for the problem of estimating ETE and CETE to be tractable, we need to establish identifiability of the statistics involving counterfactual quantities in Definition~\ref{def:ident} before these quantities can be estimated from observational quantities.  
To this end, in addition to Assumptions~\ref{ass:consistency} and~\ref{ass:pos}, we will introduce a new assumption in order to guarantee identifiability. 
%The first is tail unconfoundedness, defined as
\begin{assumption}[Tail Unconfoundedness]
\label{ass:tail_unconf}
$$
\forall t \in \mathcal{T}, \; {P(Z_{t}\mid X,T) = P(Z_{t}\mid X)}
$$
where $Z_t$ is the MDA of the potential outcome $Y_t$.
\end{assumption}
Note that Assumption~\ref{ass:tail_unconf} is a weaker condition than the unconfoundedness presented in Assumption~\ref{ass:unconf}, which is assumed in ATE and CATE estimation~\citep{imbens2004}. 
Assumption~\ref{ass:unconf} requires that all the confounding variables are observed.
On the other hand, Assumption~\ref{ass:tail_unconf} only requires observing the confounding variables that affect the tail. This is an important relaxation since, even if some covariates that do not affect the tail are missing, ETE and CETE can still be estimated. %our proposed estimation procedure can be computed. 

\paragraph{Example where tail unconfoundedness is strictly weaker than unconfoundedness.} 
To illustrate how the relaxation in the assumption is important, we present discrete and continuous examples of distributions that satisfy Assumption~\ref{ass:tail_unconf} but not Assumption~\ref{ass:unconf}. 
For the discrete example let $P(Y_1 \mid T=t,X)$ be a Bernoulli distribution with a non-binary treatment parameter $t \in \left(0,1\right)\}$. 
For the continuous example let $P(Y_1 \mid X=x,T=t)$ follow a Beta distribution $\mathcal{B}(\alpha,\beta)$ with parameters $\alpha = x$ and $\beta = t+1$ for $x,t>0$.
Additional details are provided in Appendix~\ref{sec:tail_examples}.
%Since both converge to a constant distribution in the maximum, the effects of $t, x$ are not realized.
%$$
%\begin{aligned}
%   &P(Z_1^{n} =1|T=t,X) \\&=  1 - P(Y^1_1=0,\ldots,Y^n_1=0|T=t,X) \\
%   & = 1 - P^n(Y_1|T=t,X) \\
%   & = 1 - t^{n}
%\end{aligned}
%$$
%Which is obviously independent of t as n goes to infinity (since $t \in (0,1)$). 
Moreover, Assumption~\ref{ass:unconf} implies Assumption~\ref{ass:tail_unconf} formalized in the proposition below.
\begin{proposition}
    \label{prop: tail_unconf}
    If unconfoundedness (Assumption~\ref{ass:unconf}) holds then so does the tail unconfoundedness (Assumption~\ref{ass:tail_unconf}). 
\end{proposition}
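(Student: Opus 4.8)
The plan is to reduce tail unconfoundedness to a statement purely about conditional laws and then read it off directly from Assumption~\ref{ass:unconf}. First I would observe that $(Y_1,Y_0)\ind T\mid X$ implies the componentwise relation $Y_t\ind T\mid X$ for each $t\in\mathcal{T}$: conditional independence of a random vector from $T$ given $X$ is inherited by each coordinate, since the marginal conditional distribution of $Y_t$ given $(X,T)$ is obtained from that of $(Y_0,Y_1)$ by a fixed marginalization that does not involve $T$. Equivalently, $P(Y_t\mid X=x,T=t')=P(Y_t\mid X=x)$ for $P$-almost every $x$ and every $t'\in\mathcal{T}$.

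Second, I would make the object $Z_t$ precise. Under the standing assumption that $Y_t$ lies in the maximum domain of attraction of a GEV, $Z_t$ is produced from $Y_t$ by a fixed measurable recipe: form i.i.d.\ copies of $Y_t$, take normalized block maxima $a_n^{-1}\big(\max_{j\le n}Y_t^{(j)}-b_n\big)$, and pass to the weak limit supplied by Theorem~\ref{thm:gev}. The point is that this recipe consumes only $Y_t$ (together with, at most, auxiliary randomness independent of $(X,T)$); hence $Z_t$ can be written as $Z_t=\phi(Y_t)$ for a measurable $\phi$, or, at the level of laws, $\mathrm{Law}(Z_t\mid X,T)=\Psi\big(\mathrm{Law}(Y_t\mid X,T)\big)$ for a functional $\Psi$ that does not otherwise depend on $(X,T)$.

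Third, the two steps combine immediately. In the functional form, Step~1 says $\mathrm{Law}(Y_t\mid X,T)$ does not depend on $T$, hence neither does $\Psi\big(\mathrm{Law}(Y_t\mid X,T)\big)=\mathrm{Law}(Z_t\mid X,T)$, which is exactly $P(Z_t\mid X,T)=P(Z_t\mid X)$, i.e.\ Assumption~\ref{ass:tail_unconf}. In the pathwise form, $Y_t\ind T\mid X$ together with $Z_t=\phi(Y_t)$ gives $Z_t\ind T\mid X$ by the standard fact that conditional independence is preserved under measurable transformations of one of the arguments, and this conditional independence is equivalent to the claimed equality of conditional distributions.

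I expect the only subtle point to be the modeling of $Z_t$: one must commit to viewing the MDA attached to $Y_t$ as a function of $Y_t$ alone (equivalently, as a functional of the conditional law of $Y_t$ given the conditioning variables), so that the transformation lemma applies and $T$ never enters except through $Y_t$. Once that is fixed in the problem setup, the proof is a one-line consequence of Assumption~\ref{ass:unconf}, with no additional input needed from overlap or consistency.
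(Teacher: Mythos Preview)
Your proposal is correct and follows essentially the same approach as the paper: both argue that $Y_t\ind T\mid X$ is inherited by $Z_t$ because $Z_t$ is constructed from (the law of) $Y_t$ alone, with the paper phrasing this as ``take the limit on both sides'' of $P(Z_t^m\mid X,T)=P(Z_t^m\mid X)$. Your version is more explicit about the mechanism---the functional $\Psi$ and the transformation lemma for conditional independence---whereas the paper leaves the passage from the copies $Y_t^{(k)}$ to their normalized maxima implicit.
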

%Proposition \ref{prop: tail_unconf} states that the Tail Unconfoundedness assumption is a weaker assumption that the Uncounfoundedness assumption. 
%Therefore if we are only interested in estimating the extremes of individual treatment effects we only need the weaker assumption to hold.

%\subsection{Probabilistic Model of the Tails}
Next, we define our conditional maximum likelihood estimator for estimating the individual tail indices. We assume that the location and the scale parameters are known.
%Denote $(\mu,\sigma,\xi)$ as the parameters of a GEV distribution. 
Let $\mathcal{H}: \mathcal{X}\times \mathcal{T}\to \mathbb{R}$ be a class of hypothesis functions estimating the shape parameter $\xi_t(x)$ for $x,t \in \mathcal{X}\times \mathcal{T}$.
In this section, we will assume that the location and scale parameters $\mu, \sigma$ are known.
In our empirical results, $\mathcal{H}$ is considered to be a class of neural networks with parameters $\theta \in \Theta$, i.e $\mathcal{H} = \{g_{\theta}: \mathcal{X}\times \mathcal{T}\to \mathbb{R}|\theta\in \Theta\}$.

\begin{definition}[CETE Estimator]
\label{def:estimator}
For $t\in\{0,1\}$, let $Z_t(x)$ be the conditional random variable $Z_t|X=x$ for the maximum domain of attraction of $Y_t|X=x$. The shape parameter $\xi_t(x)$ estimator is defined as:
$$
    \theta^* = \arg \max_{\theta \in \Theta} \sum_{j=1}^{N} \log p\left(Z_t(x^{(j)})=z_t^{(j)} \; \big | \; g_\theta\left(x^{(j)},t\right)\right)
$$
and for every $x \in \mathcal{X}$,
$$
\hat{\xi}_t(x) = g_{\theta^*}(x,t)
$$
Accordingly, we define our CETE estimator as:
$
\hat{\tau}_{ext}(x) = \hat{\xi}_1(x) - \hat{\xi}_0(x).
$
\end{definition}
Since $Z_1|X=x$ is given as the tail distribution, $p$ is taken to be the density of the conditional GEV in Definition~\ref{def:gev}. 
A naive estimator for the ETE can be defined as follows:
$$
\Tilde{\barbelow\tau} = \Tilde{\barbelow\xi}_1 - \Tilde{\barbelow\xi}_0.
$$
where,
$$
\Tilde{\barbelow{\xi}}_t = \arg \max_{\phi} \sum_{j=1}^N \log p\left(Z =z^{(j)}|T=t;\phi\right)
$$
Therefore, the naive estimator takes the samples from the GEV of the outcomes conditioned on the treatment and fits a GEV to it. This results in a biased estimator of the ETE because of the selection bias introduced by the treatment selection.
Now we define our proposed ETE estimator based on the the CETE estimator.
\begin{definition}[ETE Estimator]
\label{def:estimator_ate}
For $t\in\{0,1\}$, let $\{\hat{\xi}_t(x^{(j)})\}$ be the estimated conditional shape parameter using the CETE estimator. We sample $\{\hat{z}^{(j)}_t\}_{j=1}^N$ from the GEV with the estimated shape parameters.
The tail index $\barbelow\xi_t$ estimator is defined as:
$$
\widehat{\barbelow{\xi}}_t = \arg \max_{\phi} \sum_{j=1}^N\log p\left(\hat{Z}_t =\hat{z}^{(j)}_t;\phi\right)
$$
% $$
% \widehat{\barbelow{\xi}}_t = \log\left(\left(-\log\frac{1}{N}\sum_{i=1}^N\exp\left(\phi_i\right) \right)^{-1}\right)^{-1},
% $$
% with $
% \phi_i = \exp\left({-e^{-1/\hat{\xi}_{t^{(i)}}}}\right)
% $
%-\left(1+\xi \left(\frac{e - \mu(x^{(i)})}{\sigma(x^{(i)})}\right)\right)^{-1/\hat{\xi}_{t^{(i)}}(x^{(i)})}.

Accordingly, we define our ETE estimator as:
$
\hat{\barbelow\tau} = \hat{\barbelow\xi}_1 - \hat{\barbelow\xi}_0.
$
\end{definition}
The following theorem makes the defined causal quantities statistically tractable. 
\begin{proposition}[Identifiability]
\label{prop:identifiability}
Under Assumptions~\ref{ass:consistency}, \ref{ass:pos}, and~\ref{ass:tail_unconf}, the CETE and ETE estimators defined in Definition~\ref{def:estimator} and Definition~\ref{def:estimator_ate} respectively are causally identifiable. 
% Moreover, for $t \in \{0,1\}$, the proposed estimator satisfies 
% $$
%     \hat{\xi}_t(x^{(j)}) = \arg \max_{\theta\in \Theta} \sum_{j=1}^{N}  \log p\left( Z^{(j)} \mid g_\theta(x^{(j)}),t^{(j)}\right) \mathbbm{1}_{t^{(j)} = t}
% $$
\end{proposition}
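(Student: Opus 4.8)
The plan is to show that the objective functions defining the two estimators --- which a priori involve the \emph{counterfactual} objects $Z_t$ and $Z_t(x)=Z_t\mid X=x$ (the maximum domains of attraction of $Y_t$ and of $Y_t\mid X=x$) --- can be rewritten entirely in terms of the observed triple $(X,T,Z)$, where $Z$ is the limiting tail distribution of the \emph{factual} outcome $Y$. Once this rewriting is in place, causal identifiability in the sense of Definition~\ref{def:ident} follows at once: the $\arg\max$ over $\theta$ (resp.\ over $\phi$) of an observational objective is itself an observational quantity, and $\hat{\tau}_{ext}(x)$ and $\hat{\barbelow\tau}$ are differences of such quantities.

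First I would treat the CETE estimator of Definition~\ref{def:estimator}. The only term that is not manifestly a function of $(X,T,Z)$ is the conditional law $P(Z_t(x^{(j)})=\cdot)$ inside the log-likelihood, and for it the factual samples we actually have are those $j$ with $t^{(j)}=t$. Fix $t\in\{0,1\}$ and $x\in\mathcal{X}$ and argue in two steps. Step one uses tail unconfoundedness (Assumption~\ref{ass:tail_unconf}): $P(Z_t\mid X=x,T=t)=P(Z_t\mid X=x)$, so we may condition on the arm $T=t$ for free. Step two uses consistency (Assumption~\ref{ass:consistency}): on the event $\{T=t\}$ we have $Y_t=Y$ as random variables, and since the maximum domain of attraction / limiting GEV of a variable is a functional of its (conditional) law, this transfers to the tail limits, giving $Z_t\mid(X=x,T=t)\stackrel{d}{=}Z\mid(X=x,T=t)$. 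Chaining the two steps, $P(Z_t(x)=\cdot)=P(Z\mid X=x,T=t)$, a function of $(X,T,Z)$ alone. Overlap (Assumption~\ref{ass:pos}) enters precisely here to guarantee $P(T=t\mid X=x)>0$ for every $x$, so that the conditioning event has positive probability and the right-hand side is well-defined on all of $\mathcal{X}$. Substituting back, $\theta^{*}$ is the $\arg\max$ of an observational objective, $\hat{\xi}_t(x)=g_{\theta^{*}}(x,t)$ is observational, and hence so is $\hat{\tau}_{ext}(x)=\hat{\xi}_1(x)-\hat{\xi}_0(x)$.

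Next I would turn to the ETE estimator of Definition~\ref{def:estimator_ate}, which is a two-stage plug-in: the conditional shapes $\hat{\xi}_t(x^{(j)})$ come from the CETE estimator, pseudo-samples $\hat{z}^{(j)}_t$ are drawn from $\mathrm{GEV}$ with those shapes (and the assumed-known location/scale), and a single marginal $\mathrm{GEV}$ is then fit by maximum likelihood. The first stage is observational by the previous paragraph; the sampling step is an exogenous randomization depending on the data only through the already-identified $\hat{\xi}_t(\cdot)$; and the final $\arg\max$ over $\phi$ is a deterministic functional of the pseudo-sample. A composition of observational maps is observational, so $\widehat{\barbelow{\xi}}_t$ --- and therefore $\hat{\barbelow\tau}=\widehat{\barbelow{\xi}}_1-\widehat{\barbelow{\xi}}_0$ --- is causally identifiable. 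It is worth remarking, to contrast with the naive estimator $\Tilde{\barbelow\tau}$, that the first stage cannot be skipped: fitting a $\mathrm{GEV}$ directly to $\{z^{(j)}:t^{(j)}=t\}$ identifies the law of $Z\mid T=t$, not that of $Z_t$, and the two differ under selection bias precisely because tail unconfoundedness holds conditionally on $X$ rather than marginally.

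I expect the main obstacle to be making step two of the CETE argument fully rigorous, i.e.\ justifying that consistency ``commutes'' with passing to the maximum domain of attraction: the equality $Y_t=Y$ on $\{T=t\}$ is a statement about $Y$, while the estimator is phrased in terms of the limiting object $Z_t$, so one must show that the normalizing sequences $(a_n,b_n)$ and the limiting GEV in Theorem~\ref{thm:gev} depend on the underlying variable only through its conditional distribution. This is exactly where the standing assumption that all four potential-outcome laws lie in a GEV domain of attraction is used; the remaining ingredients --- positivity of the conditioning events and composition of observational maps --- are routine.
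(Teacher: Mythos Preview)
Your proposal is correct and follows essentially the same two-step approach as the paper: first invoke tail unconfoundedness to insert the conditioning on $T=t$, then invoke consistency to replace the counterfactual tail variable $Z_t$ by the factual $Z$ on that event. Your write-up is in fact more careful than the paper's --- you make explicit where overlap is used (to ensure the conditioning event $\{T=t,X=x\}$ has positive probability), you separately handle the ETE estimator as a composition of identified maps, and you flag the subtlety that consistency must ``commute'' with passing to the MDA --- whereas the paper's proof addresses only the CETE estimator and leaves the role of overlap and the ETE plug-in implicit.
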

The identifiability theorem reduces our original causal statistics (that depend on tails of the potential outcomes) to quantities that only depend on observable variables. 
However, even under an identifiability assumption, the estimator in Definition~\ref{def:estimator} may not be easy to compute due to the fact that only one observation per individual may exist. 
We include a theoretical discussion as well as empirical results when only one outcome is observed per individual in Appendix~\ref{sec:max_sampler}.

We now prove the following statistical inference results for the proposed estimators.
% \begin{proposition}
% \label{prop:ate_identifiability}
% The proposed ETE estimator in Definition~\ref{def:estimator_ate} is causally identifiable. 
% \end{proposition}
\begin{remark}
\label{remark:consistency} 
The ETE estimator defined in Definitions~\ref{def:estimator_ate} is consistent and asymptotically normal. %Moreover, the proposed CETE estimator is asymptotically efficient.  
\end{remark}
This remark asserts that, with an increasing number of observed data points, the ETE and CETE estimators converge to their true values and their distributions approach a Gaussian.

\section{Computational Method}
\label{sec:method}
In this section, we present the computational method employed in our study. Given a dataset $\{(x_i,t_i,y_i)\}_{i=1}^N$, we first apply a variation of the block maxima method. We then learn the different parameters of the limiting GEV by maximizing the conditional log-likelihood of the maximum of the outcomes to the covariates. The CETE and ETE can then be computed using the estimated quantities on $\xi$, a pseudocode of the algorithm is provided in Appendix~\ref{sec:algo}.

\textbf{Empirical $\varepsilon$-max-sampler.}
As noted, the standard procedure of using the block maxima in time cannot be used for practical scenarios of interest in the causal setting since we do not have a temporal component --- we only observe one outcome for every individual.
To approach the task of approximating the maximum of an outcome variable across multiple copies, we instead consider the analogy of the block maxima in \emph{space} to exploit the property that individuals with similar characteristics tend to have similar outcomes.
As a result, we can use realizations of individuals locally to approximate independent and identical realizations of a single individual. 
As the number of points increases and the ball size around every individual converges to zero the approximation of multiple copies of individual outcomes becomes increasingly accurate when taking the maximum of their outcomes.

To do this, we use the $k$-means algorithm to identify clusters of individuals with similar feature profiles. 
Then, within each cluster, we select the individual with the largest outcome as an approximation of the maximum outcome for that group of similar individuals. We provide a detailed description of the $\varepsilon$-max-sampler procedure in Appendix~\ref{sec:algo} as well as a theoretical analysis of this empirical approach in Appendix~\ref{sec:max_sampler}.
%Figure~\ref{fig:max_sampler} depicts a visual representation of this approach where individuals are represented as squares and circles within the larger balls.
\paragraph{Likelihood-Based Estimator.}
Since we assume that the potential outcomes are in the maximum domain of attraction of a GEV, we maximize the likelihood of a GEV parameterized by functions mapping the observations to the parameters. 
Let $\theta_t = (\mu_t,\sigma_t,\xi_t)$ 
where $\mu_t : \mathcal{X} \to \mathbb{R}$, $\sigma_t : \mathcal{X} \to \mathbb{R}_+$ and $\xi_t : \mathcal{X} \to \mathbb{R}$ for $t\in \{0,1\}$. 
Let $\{(x^{(m_i)},t^{(m_i)},y^{(m_i)})\}_{i=1}^{K}$ $K$ be the data points after computing the $\varepsilon$-max-sampler. Then we have for $t \in \{0,1\}$,
$$
\begin{aligned}
&\hat{\theta}_t = \arg \max_{\theta} \sum_{i=1}^{K} \log(\ell(\theta \mid x^{(m_i)},t^{(m_i)},y^{(m_i)})) \mathbbm{1}_{(t^{m_i} = t)}
\end{aligned}
$$
where
$$
\ell(\theta \mid x,t,y) = \frac{\Phi(y\mid x,t)^{\xi_t(x)+1}}{\sigma_t(x)}  e^{-\Phi(y \mid x,t)} \; $$
and
$$
\Phi(y\mid x,t)= \left(1+\xi_t(x)\left(\frac{y-\mu_t(x)}{\sigma_t(x)}\right)\right)^{-1 / \xi_t(x)}.
$$

%In order to learn this from observational data we utilize the following 4 models: The S-Learner, the T-Learner, TARNet, and CFR. 
After fitting the parameters, we can estimate the CETE as the following
$
\hat{\tau}_{ext}(x) = \hat{\xi}_1(x) - \hat{\xi}_0(x)
$. 
In our synthetic experiments, we note that fitting all three conditional parameters of the Generalized Extreme Value (GEV) distribution can be numerically unstable. 
Given that our primary objective is to investigate the variability of the shape parameter, we assume that the location and scale parameters are known. 
Our focus is to estimate the shape parameter as a function of the covariates, in order to accurately estimate the CETE and the ETE. 
It is crucial to highlight that the shape parameter is our parameter of interest since it characterizes the heaviness of the tail—a feature that is not encapsulated by pointwise information. 
This characteristic cannot be adequately described by the moments of the distribution or by its location and scale parameters alone.

%The numerical treatment of our proposed framework is summarized in Appendix
%To summarize the numerical treatment of the proposed framework, we illustrate the full procedure in Algorithm 1  \ref{alg}.
%the GEV parameters estimation algorithm

\begin{figure}
    \centering
    \includegraphics[width=0.45\textwidth]{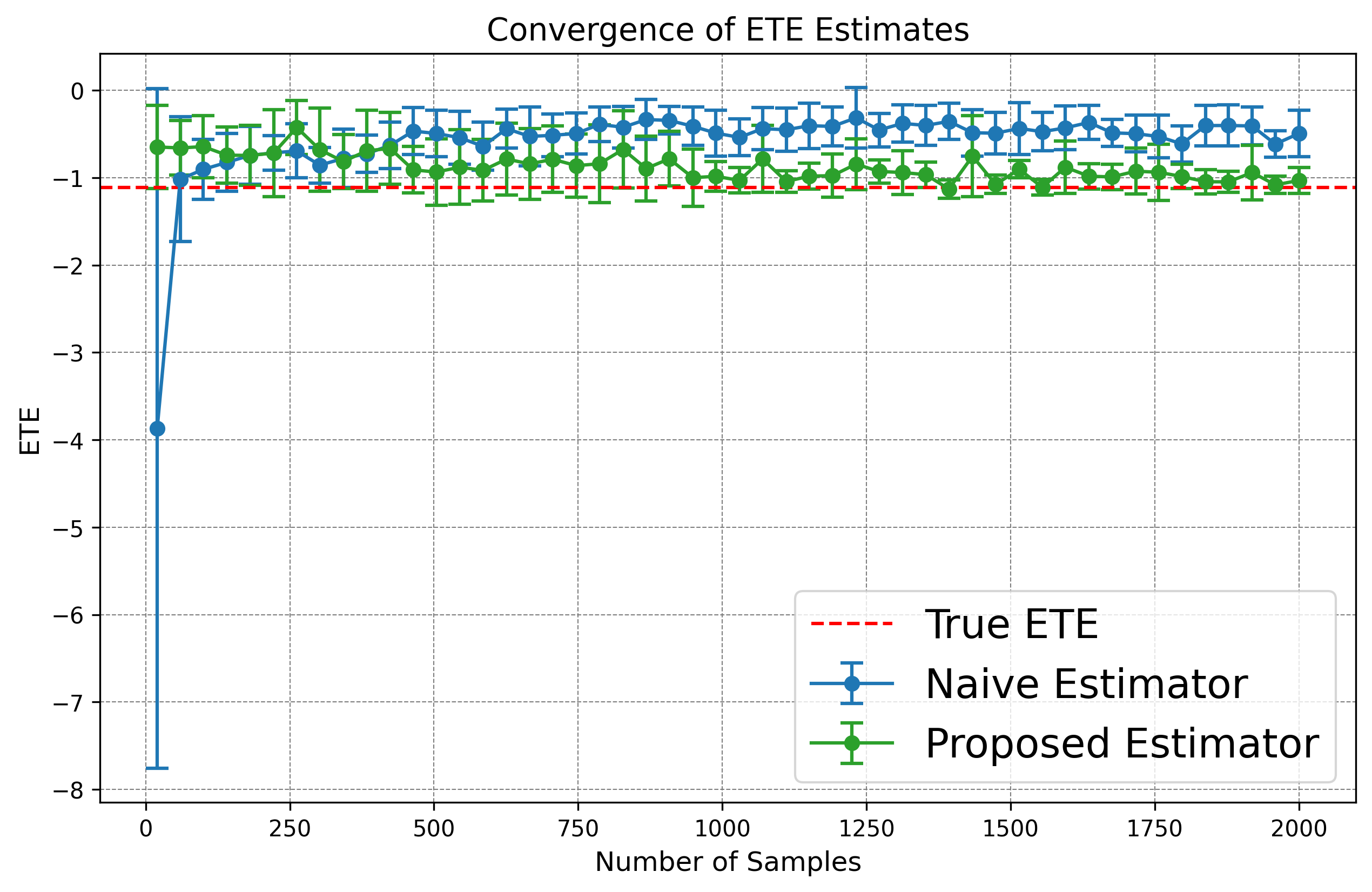}
    \caption{Convergence of different Extreme Treatment Effect (ETE) estimators over increasing sample sizes, showcasing the performance of the Naive Estimator and Proposed Estimator compared to the True ETE. The error bars represent the variability in the estimates across different simulations.}
    \label{fig:ete_convergence}
\end{figure}

\section{Experiments}
\begin{figure*}[t!]
    \centering
    \includegraphics[width=\textwidth]{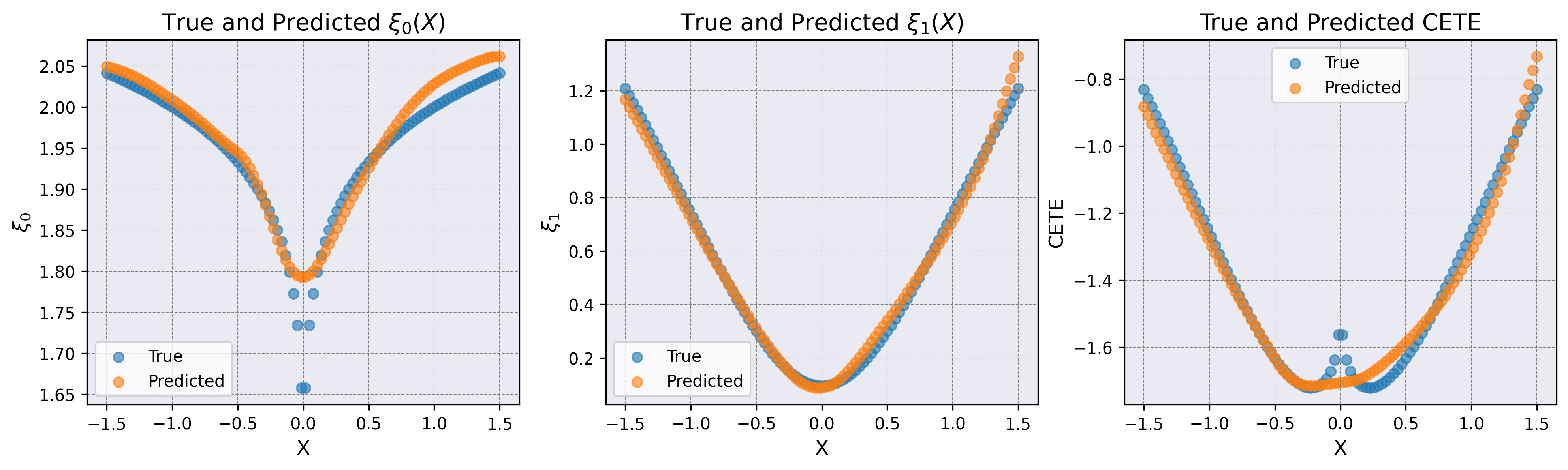}
    \caption{From left to right, the figures represent the estimation of the individual tail parameter under the control condition, the tail parameter under the treatment condition, and the Conditional Extreme Treatment Effect (CETE) using our proposed method.}
    \label{fig:exp_synthetic}
\end{figure*}
In this section, we demonstrate the empirical efficacy of the proposed framework. First, to verify the estimation procedure for conditional GEV distributions, we evaluate the proposed algorithm in a synthetic 1D scenario. 
Subsequently, we apply our approach to several semi-synthetic high-dimensional datasets.

\textbf{Baseline Treatment Effect Models.} The proposed framework can be used in conjunction with existing CATE estimation methods. As such, we consider a few models used in the literature and adapt them to the proposed framework for estimating the CETE. The {\bf $s$-learner} method treats the treatment variable as another covariate and builds a neural network $f(x,t)$ to estimate the potential outcomes and then estimates the ITE as $\hat{\tau}(x) = f(x,1) - f(x,0)$.  
The {\bf $t$-learner} builds two neural networks, one for each treatment group $f_1(x)$ and $f_0(x)$, and the individual outcome is given by $\hat{\tau}(x) = f_1(x) - f_0(x)$ ~\citep{kunzel2019metalearners}.
The {\bf TARNet} model is a framework for estimating conditional average treatment effects with counterfactual balancing~\citep{shalit}.
It consists of a pair of functions, $(\Phi, h)$, where $\Phi$ is a representation learning function and $h$ is a function that learns the two potential outcomes in the representation space. 
CATE is then estimated as $\hat{\tau}(x) = h(\Phi(x),1) - h(\Phi(x),0)$ 
The TARNet model minimizes the objective $\mathcal{L}(\Phi,h)$, which balances the model's performance on factual data and the similarity of the representations in the latent space using the integral probability metric (IPM).
The balancing weight, $\alpha$, controls the trade-off between the similarity and model performance. When $\alpha =0$ the model is denoted by TARNet, when $\alpha>0$ the model is known as counterfactual regression (CFR). It corresponds to the 1-Wasserstein distance for 1-Lipschitz functions \citet{villani}.

\subsection{Synthetic Data Experiments}

We commence by validating our proposed method through an experiment involving synthetic data. We generate $5000$ samples for $Y_0|X=x$ from a Generalized Extreme Value (GEV) distribution, $\text{GEV}(\exp(x), 1, \xi_0(x))$, where $\xi_0(x) = 1 + |x|^{0.1}$, and for $Y_1|X=x$ from $\text{GEV}(x^2, 1, \xi_1(x))$, with $\xi_1(x) = \log(1.1 + x^2)$. The feature variable $X$ is sampled from a normal distribution, $\mathcal{N}(0,1)$, and the propensity score is set to $0.3$ for $X>0$ and $0.7$ otherwise. We evaluate the model on data sampled from $\mathcal{N}(0,1)$ and with a propensity score $p(t=1|x) = \sigma(50x^2 - 5)$, where $\sigma$ denotes the sigmoid function. 

The results for this experiment are presented in Figure~\ref{fig:exp_synthetic}. 
We observe that the model accurately learns the shape parameter as a function of the covariates, which leads to precise estimation of the Conditional Extreme Treatment Effect (CETE). 
We then use this trained neural network to estimate the Extreme Treatment Effect (ETE) by sampling from the learned GEV distribution for potential outcomes, since we have learned the shape parameter. 
We fit a GEV to the marginal samples of the outcomes $\{\hat{z}^{(j)}_t\}_{j=1}^N$. 
We also evaluate the naive estimator by fitting a GEV to the observed treatment outcomes $\{z^{(j)}_1\}_{j=1}^{N_1}$ and $\{z^{(j)}_0\}_{j=1}^{N_0}$, where $N_1$ and $N_0$ are the cardinalities of the treatment and control groups, respectively. 
We observe that our estimator exhibits unbiased behavior, unlike the naive estimator. 
The convergence behavior of both the naive and the proposed ETE estimators are presented in Figure~\ref{fig:ete_convergence}.

\begin{table*}[t]
\centering
\caption{Comparison of the CETE performance of our proposed estimator. The error bars are calculated for various random seeds.}
\label{tab:results_cete}
\begin{tabular}{@{}lccc@{}}
\toprule
\textbf{Model} & \textbf{IHDP (original)} & \textbf{IHDP (Fr\'echet)} & \textbf{IHDP (Weibull)} \\
\midrule
S-Learner & $0.048 \pm 0.028$ & $0.396 \pm 0.060$ & $0.307 \pm 0.070$ \\
T-Learner & $0.188 \pm 0.053$ & $0.634 \pm 0.307$ & $0.345 \pm 0.004$ \\
TARNet & $0.119 \pm 0.071$ & $0.223 \pm 0.105$ & $0.278 \pm 0.172$ \\
CFR(Wass) & $0.069 \pm 0.004$ & $0.211 \pm 0.104$ & $0.261 \pm 0.061$ \\
\bottomrule
\end{tabular}
\end{table*}

\subsection{Semi-synthetic data experiments}

Having validated the proposed method in estimating conditional GEVs, we now proceed to apply this in the causal inference setting to compute the ETE and the CETE.

\textbf{Dataset descriptions.} We present a representative family of semi-synthetic datasets tailored for estimating the CETE. 
Specifically, we use the Infant Health and Development Program (IHDP) dataset~\citep{ramey1992infant} and two variants corresponding to different MDAs. 
% We will also consider another semi-synthetic dataset based on the cervical cancer (risk factors) dataset~\citep{fernandes2017transfer}.
Both datasets consider potential outcomes generated based on known functions of the observed covariates. 
The selection bias is the same in the three datasets, provided by the real-world experiment, and we produce the potential outcomes following different GEVs. 
We provide more detailed descriptions of these datasets in Appendix~\ref{sec:description}.

\paragraph{CETE results.} We present the performance for CETE estimation in Table \ref{tab:results_cete}. 
The results suggest that the CFR(Wass) model performs the best for the Fr\'echet and the Weibull cases while the SNet performs best for the IHDP original case (the Gumbel case).
% As a benchmark, we compare the proposed framework to a naive framework of training the existing causal inference models ($s$-Learner, $t$-Learner, TARNet, and CFR) with the data in the full distribution. 
% We then use these trained models to sample individual potential outcomes.
% From the maxima of the sampled outcomes, we fit a GEV to the individual-level potential outcomes and compare the estimated parameters to the ground truth parameters.
% We denote this as the naive approach since it does not include any notion of the tail during training --- the models are fit according to standard methods from the bulk of the distribution.
% We compare the performance of the different models based on the $\sqrt{\varepsilon_{\text{CETE}}}$ metric.
% The results are presented in~\autoref{tab:results_cete} and demonstrate that the proposed approach which leverages knowledge about the tail behavior provides a better estimate for the CETE than the naive approach. 
% Additionally, the proposed framework generally achieves lower variance than the naive approach. 

\paragraph{ETE results.} 
We consider the performance of our proposed ETE estimator compared to that of a naive estimator, which is based on the block maxima method applied directly to the potential outcomes. 
The naive estimator is then given by: $\barbelow\tau_{\text{naive}} = \barbelow \xi_{1,\text{naive}} - \barbelow \xi_{0,\text{naive}}$, where $\xi_{t,\text{naive}}$ correspond to the estimated shape parameter for the MDA of the conditional outcome $Y|T=t$. It is important to note that the conditional outcome $Y|T=t$ is distinct from the potential outcome $Y_t$.
%\begin{wraptable}{r}{.5\textwidth}
%\vspace{-10pt}
\renewcommand*{\arraystretch}{0.5}
\begin{table}[H]
\footnotesize
\begin{sc}
\begin{tabular}{lcc}
\bf{Dataset} & {\bf $\varepsilon_{\text{ETE}}$(Naive)} & {\bf $\varepsilon_{\text{ETE}}$(Ours) }\\ 
\bottomrule \\
IHDP (original) & $0.094$ & $0.001$  \\
\midrule
IHDP (Fr\'echet) & $0.346$ & $0.095$  \\
\midrule
IHDP (Weibull) & $0.931$ & $0.827$  \\
\bottomrule
\end{tabular}
\end{sc}
\caption{ETE Performance comparison under the naive and proposed estimators across different datasets using the CFR(Wass) model.}
\label{tab:results_ete}
\end{table}

\begin{table}
\caption{Overview of the causal inference datasets for ETE and CETE Estimation.}
%\label{dataset_table}
\begin{center}
\begin{small}
\begin{sc}
\begin{tabular}{l|cc}
\toprule
\multicolumn{1}{l}{\bf Name}  &\multicolumn{1}{c}{\bf Type} &\multicolumn{1}{c}{\bf IPO MDA}\\ 
\midrule
IHDP (original)&Semi-Synthetic &Gumbel  \\
IHDP-F &Semi-Synthetic & Fréchet  \\
IHDP-W &Semi-Synthetic & Weibull\\
\bottomrule
\end{tabular}
\end{sc}
\end{small}
\end{center}
\footnotesize{\textbf{IPO MDA}: Individual Potential Outcomes Maximum Domain of Attraction.}
\end{table}
%We observe that our proposed approach achieves a better performance for estimating the ETE compared to the naive approach. 

\section{Conclusion}
In this work, we propose a new framework for estimating extreme treatment effects. We define new statistical quantities CETE and ETE capturing extreme treatment effects. 
We describe conditions to guarantee the identifiability and the consistency of the proposed estimators.
To circumvent the lack of tail data, we propose a practical approach to sample from the MDAs and introduce a neural network approach for learning the GEV parameters. 
Our contributions offer new insights into the estimation of extreme treatment effects and provide a basis for further research in this area. Additionally, further analysis in developing a computational method leveraging the point process perspective should be considered.

\paragraph{Limitations}
The main limitation of the method is we require regularity on the space of conditional outcomes to be able to use our block maxima surrogate.
Circumventing this requires multiple outcomes per individual, which is often impossible in practice. 
Understanding any additional structure that can be used for the block maxima is a direction for future research to address this limitation. 

\subsubsection*{Acknowledgments}
This work was supported in part by the National Science Foundation (NSF) under the National AI Institute for Edge Computing Leveraging Next Generation Wireless Networks Grant \#2112562. Ali Hasan was supported in part by the Air Force Office of Scientific Research under award number FA9550-20-1-0397.

\bibliography{ref}

\begin{thebibliography}{40}
\providecommand{\natexlab}[1]{#1}
\providecommand{\url}[1]{\texttt{#1}}
\expandafter\ifx\csname urlstyle\endcsname\relax
  \providecommand{\doi}[1]{doi: #1}\else
  \providecommand{\doi}{doi: \begingroup \urlstyle{rm}\Url}\fi

\bibitem[Abadie and Cattaneo(2018)]{abadie2018econometric}
Alberto Abadie and Matias~D Cattaneo.
\newblock Econometric methods for program evaluation.
\newblock \emph{Annual Review of Economics}, 10:\penalty0 465--503, 2018.

\bibitem[Alaa and Van Der~Schaar(2017)]{alaa}
Ahmed~M Alaa and Mihaela Van Der~Schaar.
\newblock Bayesian inference of individualized treatment effects using
  multi-task gaussian processes.
\newblock \emph{Advances in neural information processing systems}, 30, 2017.

\bibitem[Bingham et~al.(1989)Bingham, Goldie, and Teugels]{bingham1989regular}
Nicholas~H Bingham, Charles~M Goldie, and Jef~L Teugels.
\newblock \emph{Regular variation}.
\newblock Number~27. Cambridge university press, 1989.

\bibitem[Chernozhukov and Hansen(2005)]{chernozhukov2005iv}
Victor Chernozhukov and Christian Hansen.
\newblock An iv model of quantile treatment effects.
\newblock \emph{Econometrica}, 73\penalty0 (1):\penalty0 245--261, 2005.

\bibitem[Chernozhukov et~al.(2013)Chernozhukov, Fern{\'a}ndez-Val, and
  Melly]{chernozhukov2013inference}
Victor Chernozhukov, Iv{\'a}n Fern{\'a}ndez-Val, and Blaise Melly.
\newblock Inference on counterfactual distributions.
\newblock \emph{Econometrica}, 81\penalty0 (6):\penalty0 2205--2268, 2013.

\bibitem[Coles et~al.(2001)Coles, Bawa, Trenner, and
  Dorazio]{coles2001introduction}
Stuart Coles, Joanna Bawa, Lesley Trenner, and Pat Dorazio.
\newblock \emph{An introduction to statistical modeling of extreme values},
  volume 208.
\newblock Springer, 2001.

\bibitem[Davison et~al.(2012)Davison, Padoan, and Ribatet]{davison2012spatial}
A.~C. Davison, S.~A. Padoan, and M.~Ribatet.
\newblock {Statistical Modeling of Spatial Extremes}.
\newblock \emph{Statistical Science}, 27\penalty0 (2):\penalty0 161 -- 186,
  2012.

\bibitem[D{\'\i}az(2017)]{diaz2017efficient}
Iv{\'a}n D{\'\i}az.
\newblock Efficient estimation of quantiles in missing data models.
\newblock \emph{Journal of Statistical Planning and Inference}, 190:\penalty0
  39--51, 2017.

\bibitem[Firpo(2007)]{firpo2007efficient}
Sergio Firpo.
\newblock Efficient semiparametric estimation of quantile treatment effects.
\newblock \emph{Econometrica}, 75\penalty0 (1):\penalty0 259--276, 2007.

\bibitem[Fisher and Tippett(1928)]{fisher1928}
Ronald~Aylmer Fisher and Leonard Henry~Caleb Tippett.
\newblock Limiting forms of the frequency distribution of the largest or
  smallest member of a sample.
\newblock In \emph{Mathematical proceedings of the Cambridge philosophical
  society}, volume~24, pages 180--190. Cambridge University Press, 1928.

\bibitem[Fr{\"o}lich and Melly(2013)]{frolich2013unconditional}
Markus Fr{\"o}lich and Blaise Melly.
\newblock Unconditional quantile treatment effects under endogeneity.
\newblock \emph{Journal of Business \& Economic Statistics}, 31\penalty0
  (3):\penalty0 346--357, 2013.

\bibitem[Galambos and Seneta(1973)]{galambos1973regularly}
Janos Galambos and Eugene Seneta.
\newblock Regularly varying sequences.
\newblock \emph{Proceedings of the American Mathematical Society}, 41\penalty0
  (1):\penalty0 110--116, 1973.

\bibitem[Giessing and Wang(2021)]{giessing2021debiased}
Alexander Giessing and Jingshen Wang.
\newblock Debiased inference on heterogeneous quantile treatment effects with
  regression rank-scores.
\newblock \emph{arXiv preprint arXiv:2102.01753}, 2021.

\bibitem[Gnedenko(1943)]{gnedenko1943}
Boris Gnedenko.
\newblock Sur la distribution limite du terme maximum d'une serie aleatoire.
\newblock \emph{Annals of mathematics}, pages 423--453, 1943.

\bibitem[Haan and Ferreira(2006)]{haan2006extreme}
Laurens Haan and Ana Ferreira.
\newblock \emph{Extreme value theory: an introduction}, volume~3.
\newblock Springer, 2006.

\bibitem[Hill(2011)]{hill}
Jennifer~L Hill.
\newblock Bayesian nonparametric modeling for causal inference.
\newblock \emph{Journal of Computational and Graphical Statistics}, 20\penalty0
  (1):\penalty0 217--240, 2011.

\bibitem[Holland(1986)]{holland1986}
Paul~W Holland.
\newblock Statistics and causal inference.
\newblock \emph{Journal of the American statistical Association}, 81\penalty0
  (396):\penalty0 945--960, 1986.

\bibitem[Huang et~al.(2022)Huang, Li, and Peng]{huang2022extreme}
Wei Huang, Shuo Li, and Liuhua Peng.
\newblock Extreme continuous treatment effects: Measures, estimation and
  inference.
\newblock \emph{arXiv preprint arXiv:2209.00246}, 2022.

\bibitem[Imbens(2004)]{imbens2004}
Guido~W Imbens.
\newblock Nonparametric estimation of average treatment effects under
  exogeneity: A review.
\newblock \emph{Review of Economics and statistics}, 86\penalty0 (1):\penalty0
  4--29, 2004.

\bibitem[Imbens and Rubin(2015)]{imbens2015causal}
Guido~W Imbens and Donald~B Rubin.
\newblock \emph{Causal inference in statistics, social, and biomedical
  sciences}.
\newblock Cambridge University Press, 2015.

\bibitem[Johansson et~al.(2016)Johansson, Shalit, and Sontag]{johanson}
Fredrik Johansson, Uri Shalit, and David Sontag.
\newblock Learning representations for counterfactual inference.
\newblock In \emph{International conference on machine learning}, pages
  3020--3029. PMLR, 2016.

\bibitem[Kallus(2023)]{kallus2023treatment}
Nathan Kallus.
\newblock Treatment effect risk: Bounds and inference.
\newblock \emph{Management Science}, 69\penalty0 (8):\penalty0 4579--4590,
  2023.

\bibitem[Keele(2015)]{keele2015statistics}
Luke Keele.
\newblock The statistics of causal inference: A view from political
  methodology.
\newblock \emph{Political Analysis}, 23\penalty0 (3):\penalty0 313--335, 2015.

\bibitem[Kennedy(2020)]{kennedy2020towards}
Edward~H Kennedy.
\newblock Towards optimal doubly robust estimation of heterogeneous causal
  effects.
\newblock \emph{arXiv preprint arXiv:2004.14497}, 2020.

\bibitem[Kennedy et~al.(2021)Kennedy, Balakrishnan, and
  Wasserman]{kennedy2021semiparametric}
Edward~H Kennedy, Sivaraman Balakrishnan, and Larry Wasserman.
\newblock Semiparametric counterfactual density estimation.
\newblock \emph{arXiv preprint arXiv:2102.12034}, 2021.

\bibitem[K{\"u}nzel et~al.(2019)K{\"u}nzel, Sekhon, Bickel, and
  Yu]{kunzel2019metalearners}
S{\"o}ren~R K{\"u}nzel, Jasjeet~S Sekhon, Peter~J Bickel, and Bin Yu.
\newblock Metalearners for estimating heterogeneous treatment effects using
  machine learning.
\newblock \emph{Proceedings of the national academy of sciences}, 116\penalty0
  (10):\penalty0 4156--4165, 2019.

\bibitem[Morgan and Winship(2015)]{morgan2015counterfactuals}
Stephen~L Morgan and Christopher Winship.
\newblock \emph{Counterfactuals and causal inference}.
\newblock Cambridge University Press, 2015.

\bibitem[Neyman(1923)]{neyman1923}
Jerzy Neyman.
\newblock Sur les applications de la th{\'e}orie des probabilit{\'e}s aux
  experiences agricoles: Essai des principes.
\newblock \emph{Roczniki Nauk Rolniczych}, 10\penalty0 (1):\penalty0 1--51,
  1923.

\bibitem[Pearl(2009)]{pearl}
Judea Pearl.
\newblock \emph{Causality: Models, Reasoning and Inference}.
\newblock Cambridge University Press, USA, 2nd edition, 2009.
\newblock ISBN 052189560X.

\bibitem[Ramey et~al.(1992)Ramey, Bryant, Wasik, Sparling, Fendt, and
  La~Vange]{ramey1992infant}
Craig~T Ramey, Donna~M Bryant, Barbara~H Wasik, Joseph~J Sparling, Kaye~H
  Fendt, and Lisa~M La~Vange.
\newblock Infant health and development program for low birth weight, premature
  infants: Program elements, family participation, and child intelligence.
\newblock \emph{Pediatrics}, 89\penalty0 (3):\penalty0 454--465, 1992.

\bibitem[Resnick(1988)]{resnick1988association}
Sidney~I Resnick.
\newblock Association and multivariate extreme value distributions.
\newblock \emph{Australian Journal of Statistics}, 30\penalty0 (1):\penalty0
  261--271, 1988.

\bibitem[Resnick(2007)]{resnick2007heavy}
Sidney~I Resnick.
\newblock \emph{Heavy-tail phenomena: probabilistic and statistical modeling}.
\newblock Springer Science \& Business Media, 2007.

\bibitem[Roncalli(2020)]{roncalli2020handbook}
Thierry Roncalli.
\newblock \emph{Handbook of financial risk management}.
\newblock CRC press, 2020.

\bibitem[Rothman and Greenland(2005)]{rothman2005causation}
Kenneth~J Rothman and Sander Greenland.
\newblock Causation and causal inference in epidemiology.
\newblock \emph{American journal of public health}, 95\penalty0 (S1):\penalty0
  S144--S150, 2005.

\bibitem[Rubin(1974)]{rubin1974}
Donald~B Rubin.
\newblock Estimating causal effects of treatments in randomized and
  nonrandomized studies.
\newblock \emph{Journal of educational Psychology}, 66\penalty0 (5):\penalty0
  688, 1974.

\bibitem[Shalit et~al.(2017)Shalit, Johansson, and Sontag]{shalit}
Uri Shalit, Fredrik~D Johansson, and David Sontag.
\newblock Estimating individual treatment effect: generalization bounds and
  algorithms.
\newblock In \emph{International Conference on Machine Learning}, pages
  3076--3085. PMLR, 2017.

\bibitem[Villani(2009)]{villani}
C{\'e}dric Villani.
\newblock \emph{Optimal transport: old and new}, volume 338.
\newblock Springer, 2009.

\bibitem[Wager and Athey(2018)]{athey}
Stefan Wager and Susan Athey.
\newblock Estimation and inference of heterogeneous treatment effects using
  random forests.
\newblock \emph{Journal of the American Statistical Association}, 113\penalty0
  (523):\penalty0 1228--1242, 2018.

\bibitem[Zhang(2018)]{zhang2018extreme}
Yichong Zhang.
\newblock {Extremal quantile treatment effects}.
\newblock \emph{The Annals of Statistics}, 46\penalty0 (6B):\penalty0 3707 --
  3740, 2018.

\bibitem[Zhang et~al.(2012)Zhang, Chen, Troendle, and Zhang]{zhang2012causal}
Zhiwei Zhang, Zhen Chen, James~F Troendle, and Jun Zhang.
\newblock Causal inference on quantiles with an obstetric application.
\newblock \emph{Biometrics}, 68\penalty0 (3):\penalty0 697--706, 2012.

\end{thebibliography}

\appendix
\onecolumn
{\Huge \textbf{Appendix}}
\section{Simulating GEV Potential Outcomes}
\label{sec:motivation}
In this illustration, we model potential outcomes under control and treatment scenarios for a population of $n=5$ individuals.

Each individual $k \in \{1,\ldots,n\}$ is associated with two potential outcomes. 
For the control scenario, the potential outcome for individual $k$ follows a Generalized Extreme Value (GEV) distribution with a location parameter $\mu^0_k=5+k$, a scale parameter $\sigma^0_k=1$, and a shape parameter $\xi^0_k =0.5$. 
For the treatment scenario, the potential outcome for individual $k$ follows a GEV distribution with a location parameter $\mu^1_k=6+k$, a scale parameter $\sigma^1_k=1.5$, and a shape parameter $\xi^1_k =1.5+0.1k$.

Figure~\ref{fig:individual_extrmes} illustrates the distribution of the potential outcomes for the $n=5$ individuals, each with distinct location and shape parameters.
In our work, we study the differences in the $\xi$ parameter, which, as the figure shows, dictates how the tail of the distribution behaves. 
\begin{figure}[h]
    \centering  
    \includegraphics[width=5in]{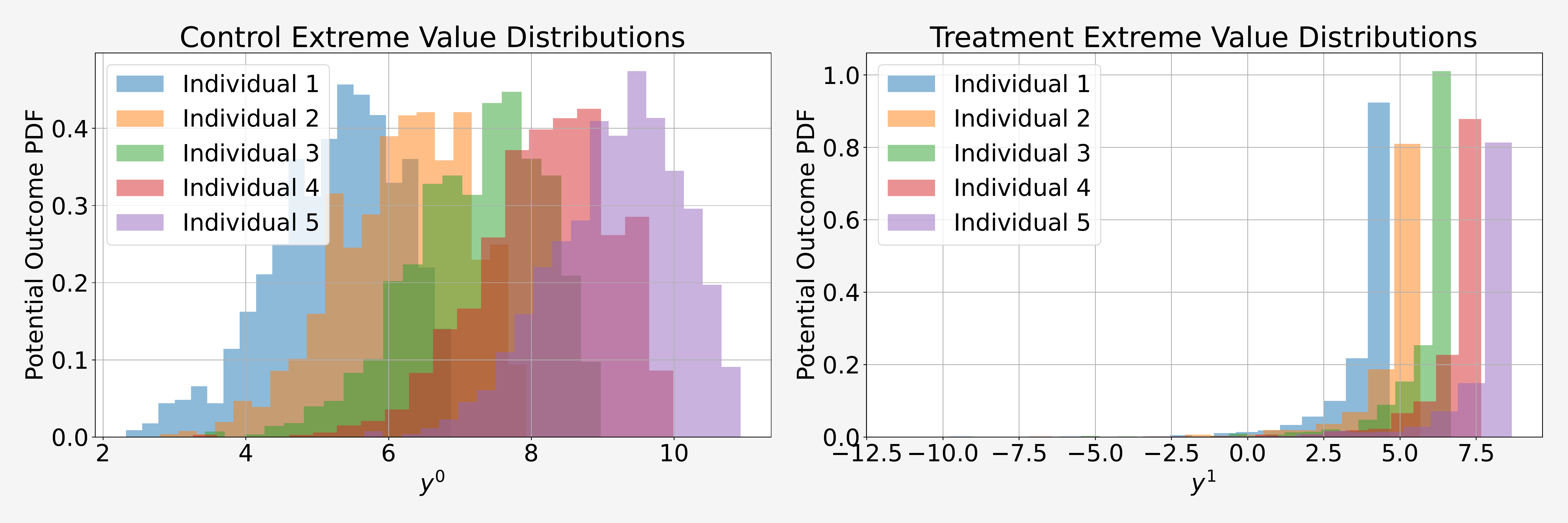}
    \caption{
 Different conditional outcomes for different treatment groups. The goal of our method is to estimate the difference in the shape parameters of these limiting GEVs.}
    \label{fig:individual_extrmes}
\end{figure}

 \section{Theoretical Results}
\label{sec:proofs}
In this section, we begin by providing proofs for the remarks presented in the paper, which support the proposed ETE and CETE definitions. This is followed by a discussion of other potential definitions for the extreme treatment effects. Subsequently, we present proofs for the causal inference results presented in the main text. Finally, we provide a theoretical analysis of the  $\varepsilon$-max sampler algorithm utilized in our empirical settings.
\subsection{Proof of Regularly Varying Remark}
\label{sec:proof_rv}

\begin{remark1}
    Let $\barbelow{\xi}_0, \barbelow{\xi}_1 > 0$. 
    Let $S_1(Y_1), S_0(Y_0)$ be the survival distributions of the potential outcome variables with positive support for the treatment and control groups respectively.
    Then if $\barbelow{\tau}_{ext} > 0$, then ${S_0(Y_0)}/{S_1(Y_1)} \in RV(\barbelow\xi_1 - \barbelow\xi_0)$.
    Moreover, $S_1(Y_1) < S_0(Y_0)$ for $Y_t \gg 1$.
\end{remark1}
\begin{proof}
    Define $F_0, F_1$ as the CDFs of the outcome variables for the control and treatment groups respectively. 
    Then $S_0 = 1- F_0$ and $S_1 = 1-F_1$. 
    $F_0, F_1$ are also assumed to be regularly varying with indices of regular variation as $\xi_0, \xi_1$. 
    By the Karamata characterization theorem, we can write $F_0, F_1$ in terms of a slowly-varying function, that is, a function where $\xi =0$.
    We will call this function $L(Y)$ and then write
    $S_0 = Y^{-\xi_0} L_0(Y)$ and $S_1 = Y^{-\xi_1} L_1(Y)$. 
    As $Y\to \infty$, the polynomial component dominates, and we can compare the ratio of the polynomial components. 
    Taking the ratio, we get  
    $$
    \lim_{Y\to\infty} \frac{S_0}{S_1} \sim \frac{Y^{-\xi_0}}{Y^{-\xi_1}} = Y^{\xi_1 - \xi_0}.
    $$
    Thus, the difference in the tail indices can be seen as the ratio of the survival functions for regions deep in the tail. 
\end{proof}

\subsection{Proof of Point Process Remark}
\label{sec:proof_pp}
\begin{remark2}[Expected Number of Events]
    Consider a causal inference task with potential outcomes variable $Y_t$, $t\in\{0,1\}$, that $Y_t$ is in the maximum domain of attraction of a GEV with rate $\xi$, and ETE given by $\barbelow{\tau}_{ext} > 0$.
    Suppose that $\mathcal{R} \subset \mathbb{R}$ is the Borel set of interest that should be quantified for the causal inference task with $|\inf \mathcal{R}| \gg 0$.
    Then, $\mathbb{E}[N_{Y_1}(\mathcal{R})] > \mathbb{E}[N_{Y_0}(\mathcal{R})]$ where $N_{Y_t}(\mathcal{R}) = \sum_{i=1}^\infty \delta_{Y_t^{(i)} \in \mathcal{R}}$ for $t\in \{0,1\}$ is a counting process over outcomes on some Borel subset of $\mathbb{R}$.
\end{remark2}
\begin{proof}
From our assumptions on $\mathcal{R}$, we suppose that the GEV is the appropriate description of the events.
Additionally, suppose that both $Y_0$ and $Y_1$ have the same limiting sequence $a_n, b_n$ and by EVT 
$$
\mathbb{E}[N_{Y_0}(\mathcal{R})] = \left [ 1 + \xi_0 c_\mathcal{R} \right]^{-1/\xi_0}
$$
and 
$$
\mathbb{E}[N_{Y_1}(\mathcal{R})] = \left [ 1 + \xi_1 c_\mathcal{R} \right]^{-1/\xi_1}.
$$
Since $\xi_1 > \xi_0$,
\begin{align*}
1+\xi_1 c_\mathcal{R} &> 1 + \xi_0 c_\mathcal{R} \\
(1+\xi_1 c_\mathcal{R})^{1/\xi_1} &> (1 + \xi_0 c_\mathcal{R})^{1/\xi_1} \\ 
(1+\xi_1 c_\mathcal{R})^{1/\xi_1} &< (1 + \xi_0 c_\mathcal{R})^{1/\xi_0} \\ 
(1+\xi_1 c_\mathcal{R})^{-1/\xi_1} &> (1 + \xi_0 c_\mathcal{R})^{-1/\xi_0}
\end{align*}
with $c_\mathcal{R} \in \mathbb{R}_+ \cup \{0 \}$.
\end{proof}

\subsection{Possible Definitions for Extreme Treatment Effects}
In this section, we note that while we motivate our definition for the extreme treatment effect from an intuitive perspective, a point process perspective, and from a regular variation perspective, it is possible to define other functions of the shape parameter $\Psi(\xi_1(x), \xi_0(x))$ that capture other properties. Our presented methods can be extended to these general functions, as the main problem in causal inference lies in estimating $\xi_1(x)$ and $\xi_0(x)$. Moreover, the choice of the difference between the shape parameters guarantees that the statistical properties of the estimators of shape parameters (e.g., expectation and variance) can be easily extended to the estimator of the extreme treatment effects, which may not be the case for other functions such as the ratio of the shape parameters. It is important to highlight that, at times, quantifying the proportional change of the shape parameter can be essential (thus, taking the ratio would be more justified). In contrast, in some situations, the magnitude of change itself is significant regardless of the initial value (for instance, if having a heavier tail above $2$ is critical and the control shape parameter is $0$, taking the difference provides more information). Therefore, different definitions might be more relevant depending on the context
   
\subsection{Causal Inference Proofs}
We restate the main statements and then state the proofs.
\begin{proposition1}
    If unconfoundedness holds then tail unconfoundedness holds. 
\end{proposition1}
\begin{proof}
Assuming that unconfoundedness holds, and letting $t \in \mathcal{T}$, we have

$$
Y_t \ind T \mid X
$$
Therefore,
$$
P(Y_t\mid X,T) = P(Y_t\mid X).
$$
For $\{Y^{(k)}\}_{k=1}^{m}$ i.i.d copies of $Y$, we have that 
$$
\forall k \in \{1,\ldots,m\}, \; P(Y^{(k)}_t \mid X,T) = P(Y^{(k)}_t \mid X) 
$$
By taking the limit on both sides, we have
$$
\lim_{m\rightarrow \infty}{P(Z_{t}^{m}\mid X,T) = \lim_{m\rightarrow \infty} P(Z_{t}^{m}\mid X)}
$$
which implies tail unconfoundedness holds.
\end{proof}
\begin{proposition2}[Identifiability]
Under Assumptions~\ref{ass:consistency}, \ref{ass:pos}, and~\ref{ass:tail_unconf}, the CETE and ETE estimators defined in Definition~\ref{def:estimator} and Definition~\ref{def:estimator_ate} respectively are causally identifiable. Moreover, for $t \in \{0,1\}$, the proposed estimator satisfies 
$$
    \hat{\xi}_t = \arg \max_{\theta\in \Theta} \sum_{j=1}^{N}  \log p\left( Z^{(j)} \mid g_\theta(x^{(j)}),t^{(j)}\right) \mathbbm{1}_{t^{(j)} = t}.
$$
\end{proposition2}
\begin{proof}
By definition, we have 
$$
\begin{aligned}
\hat{\theta}_t = &\arg \max_{\theta} \lim_{m\to \infty} \sum_{j=1}^{N} \log P\left(Z^{m}_t(x^{(j)}) \; \big | \; \theta\left(x^{(j)}\right)\right) \\
=&\arg \max_{\theta}  \sum_{j=1}^{N}\lim_{m\to \infty} \log P\left(Z^{m}_t(x^{(j)}) \; \big | \; \theta\left(x^{(j)}\right)\right)
\end{aligned}
$$
By tail unconfoundedness, we have 
$$
\lim_{m \to \infty}P\left(Z^{m}_t(x^{(j)}) \; \big | \; \theta\left(x^{(j)}\right)\right) = \lim_{m \to \infty}P\left(Z^{m}_t(x^{(j)}) \; \big | \; \theta\left(x^{(j)}\right),T=t\right) 
$$
From the consistency assumption, it follows that
$$
\lim_{m \to \infty}P\left(Z^{m}_t(x^{(j)}) \; \big | \; \theta\left(x^{(j)}\right),T=t\right) = \lim_{m \to \infty}P\left(Z^{m}(x^{(j)}) \; \big | \; \theta\left(x^{(j)}\right),T=t\right) 
$$
Notice that the right-hand-side quantity does not involve any potential outcomes. 
We then have
$$\lim_{m \to \infty}P\left(Z^{m}_t(x^{(j)}) \; \big | \; \theta\left(x^{(j)}\right)\right) = \lim_{m \to \infty}P\left(Z^{m}(x^{(j)}) \; \big | \; \theta\left(x^{(j)}\right),T=t\right)
$$
Hence, the estimator is identifiable.
\end{proof}

\begin{remark3}
The ETE estimator defined in Definition ~\ref{def:estimator_ate} is consistent and asymptotically normal. 
\end{remark3}

\begin{proof}
Given that the MLE estimator for the shape parameters is consistent, we can apply Slutsky’s theorem to deduce that the CETE estimator is also consistent, for a fixed individual as his potential outcomes observations go to infinity. Additionally, because the MLE estimator for the shape parameters is asymptotically normal, we can use the Delta method to conclude that the ETE estimator is asymptotically normal.
\end{proof}
% Without loss of generality, we consider the unit Fréchet domain of attraction for $Z_t$. Under the GEV convergence assumptions, by invoking the law of total probability, we have:

% \[
% \exp\left(-z^{-1/{\barbelow{\xi}_t}}\right) = \int p(x) P(Z<z|T=t,X=x)\; dx.
% \]

% When taking $z=e$, the equation becomes:

% \[
% e^{-1/\barbelow{\xi}_t} = \log\left(\left(\int p(x) P(Z<e|T=t,X=x)\; dx\right)^{-1}\right).
% \]

% This further simplifies to:

% \[
% \barbelow{\xi}_t = \left(\log\left(-\log\left(\left(\int p(x) P(Z<e|T=t,X=x)\; dx\right)^{-1}\right)\right)\right)^{-1}.
% \]

% The estimator is thus defined as:

% \[
% \widehat{\barbelow{\xi}}_t = \log\left(-\left(\log\frac{1}{N}\sum_{i=1}^N\exp\left(\phi_i\right) \right)^{-1}\right)^{-1},
% \]

% with

% \[
% \phi_i = \exp\left({-e^{-1/\hat{\xi}_{t^{(i)}}}}\right)\]

% Given that the MLE estimator for the conditional shape parameters is consistent and asymptotically normal, we can apply both Slutsky’s theorem and the Delta method to infer that the ETE estimator is consistent and asymptotically normal.
% \end{proof}

% \A{We want to solve for $\xi$ this equation that holds 
% $$
% -\log\left((\frac{1}{n}\sum_{i=1}^n \exp\left(-\left(\max\left\{0,1 + \xi_c(x_i)\left(\frac{z-\mu_c(x_i)}{\sigma_c(x_i)}\right)\right\}\right)^{-\frac{1}{\xi_{c}(x_i)}}\right)\right) = \left(\max\left\{0,1+ \xi \left(\frac{z-\mu}{\sigma}\right)\right\}\right)^{-\frac{1}{\xi}}
% $$
%}
\section{Theoretical Analysis for $\varepsilon$-max-sampler}
\label{sec:max_sampler}
The main idea of this approach is to consider a ball around each individual outcome and then use the maxima of all outcomes within this ball as an estimate of data in the tails.
We prove that this converges to the true maxima under suitable regularity conditions. 
For treatment groups $t \in \{0,1\}$, the conditional potential outcome variable $\{(Y|X=x) = Y_t(x),\; x\in \mathcal{X}\}$ forms a stochastic process over $\mathcal{X}$.
For every $x \in \mathcal{X}$, we assume that $\left(Y^{(1)}_t(x),Y^{(2)}_t(x), \ldots ,Y^{(m)}_i(x)\right)$ are iid random variables with the same distribution as $Y(x)$.
We define the max-sampler: 
$
Z^{m}_t(x) = \max_{1\leq k\leq m}Y^{(k)}_t(x), \quad x \in \mathcal{X}.
$
Let $\varepsilon >0$, and let $B_{\varepsilon}(x) = \{x' \mid x' \in \mathbb{R}^{d} \;\text{and} \;\|x-x'\| < \varepsilon\}$ be the open ball around $x$ with radius $\varepsilon$.
Let $B^{m}_{\varepsilon}(x)$ be a subset of $m$ points of $B_{\varepsilon}(x)$. 
We define the $\varepsilon$-max-sampler as
\begin{equation}
\hat{Z}^{m}_{t,\epsilon}(x) = \max_{x' \in B^{m}_{\epsilon}(x)} Y_t(x')
\label{eq:max_samp}
\end{equation}
We next study the theoretical relationships between $Z^{m}_t(x)$ and $\hat{Z}^{m}_{t,\varepsilon}(x)$ and determine the conditions under which $\hat{Z}^{m}_{t,\varepsilon}(x)$ converges to  $Z^{m}_t(x)$.
%In the context of causal inference, we have two partially observed potential outcomes $\{Y_{0}(x),x\in \mathbb{R}^{d}\}$ and $\{Y_{1}(x),x\in \mathbb{R}^{d}\}$ 

\begin{proposition}
[Convergence of $\varepsilon$-max-sampler]
Let $t \in \{0,1\}$ and $m\in \mathbb{N}$. Assume that for every $x, x'\in \mathcal{X}$, such that $x \neq x'$, we have that $Y(x)
\ind Y(x')$ and $x \mapsto P(Y(x)\leq y)$ is a continuous function of $x$. Then for every limit point, $x \in \mathcal{X}$,  the following convergence holds in distribution
$$
\forall x \in \mathcal{X}, \;\hat{Z}^{m}_{t,\varepsilon}(x) \underset{\varepsilon\to 0}{\longrightarrow} Z^{m}_t(x)
$$
\end{proposition}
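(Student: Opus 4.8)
The plan is to show that as $\varepsilon \to 0$, the finite collection of outcome values $\{Y_t(x') : x' \in B^m_\varepsilon(x)\}$ converges jointly in distribution to an i.i.d.\ collection $(Y^{(1)}_t(x),\ldots,Y^{(m)}_t(x))$ with the common law of $Y_t(x)$, and then transfer this convergence through the (continuous) maximum functional via the continuous mapping theorem. Fix a limit point $x \in \mathcal{X}$ and enumerate $B^m_\varepsilon(x) = \{x'_{1,\varepsilon},\ldots,x'_{m,\varepsilon}\}$, which is possible precisely because $x$ is a limit point so that $B_\varepsilon(x)$ contains infinitely many points of $\mathcal{X}$ for every $\varepsilon>0$; necessarily $x'_{k,\varepsilon} \to x$ as $\varepsilon\to 0$ for each $k$.

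First I would establish the joint convergence in distribution of the vector $\bigl(Y_t(x'_{1,\varepsilon}),\ldots,Y_t(x'_{m,\varepsilon})\bigr)$. By the independence hypothesis, for distinct points the coordinates are mutually independent, so the joint CDF factorizes: for $y_1,\ldots,y_m \in \mathbb{R}$,
$$
P\bigl(Y_t(x'_{1,\varepsilon}) \le y_1, \ldots, Y_t(x'_{m,\varepsilon}) \le y_m\bigr) = \prod_{k=1}^m P\bigl(Y_t(x'_{k,\varepsilon}) \le y_k\bigr).
$$
By the assumed continuity of $x \mapsto P(Y_t(x) \le y)$ and $x'_{k,\varepsilon}\to x$, each factor converges to $P(Y_t(x) \le y_k) = F_x(y_k)$ as $\varepsilon\to 0$. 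Hence the product converges to $\prod_{k=1}^m F_x(y_k)$, which is exactly the joint CDF of $m$ i.i.d.\ copies of $Y_t(x)$. This gives convergence in distribution of the random vector to $(Y^{(1)}_t(x),\ldots,Y^{(m)}_t(x))$ at every continuity point, hence in distribution.

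Next I would apply the continuous mapping theorem. The map $\max : \mathbb{R}^m \to \mathbb{R}$, $(y_1,\ldots,y_m) \mapsto \max_k y_k$, is continuous, so convergence in distribution of the vectors implies
$$
\hat{Z}^m_{t,\varepsilon}(x) = \max_{x'\in B^m_\varepsilon(x)} Y_t(x') \;\xrightarrow[\varepsilon\to 0]{d}\; \max_{1\le k\le m} Y^{(k)}_t(x) = Z^m_t(x),
$$
which is the claimed convergence. One technical point worth spelling out: $B^m_\varepsilon(x)$ is an arbitrary choice of $m$ points, so the argument should be phrased so that the limit holds for \emph{any} sequence of such choices (indexed by $\varepsilon$), since the only property used is $x'_{k,\varepsilon}\to x$, which holds for every selection once $\varepsilon$ is small enough that $B_\varepsilon(x)$ has more than $m$ points.

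The main obstacle I anticipate is not the analytic core — that is a short factorization-plus-continuity argument — but rather making the measurability and well-definedness of $\hat{Z}^m_{t,\varepsilon}(x)$ precise: one must argue that $B_\varepsilon(x)$ indeed contains at least $m$ distinct points of $\mathcal{X}$ (using that $x$ is a limit point, which is why the statement restricts to limit points), that the selected points can be taken distinct so the independence hypothesis applies to all pairs, and that the stochastic process $\{Y_t(x')\}$ admits a version for which finite-dimensional maxima are genuine random variables. A secondary subtlety is that the hypotheses only control \emph{pairwise} independence and marginal continuity, not the joint law over infinitely many sites; but since $\hat{Z}^m_{t,\varepsilon}(x)$ depends on only $m$ sites at a time and pairwise independence among finitely many coordinates with a factorizing joint CDF is equivalent to mutual independence here, this suffices. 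I would close by remarking that no rate is obtained — only convergence in distribution — which is consistent with how the result is used downstream for the $\varepsilon$-max-sampler.
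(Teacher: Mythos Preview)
Your proposal is correct and follows essentially the same approach as the paper: factorize the joint law using the independence assumption, pass to the limit via the continuity of $x \mapsto P(Y(x)\le y)$, and identify the limit as the CDF of $Z^m_t(x)$. The only cosmetic difference is that the paper computes the CDF of the maximum directly (evaluating the joint CDF on the diagonal $(z,\ldots,z)$) rather than first establishing joint convergence of the vector and then invoking the continuous mapping theorem.
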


\begin{proof}
Denote by $F(\cdot \mid x)$ the CDF of $\hat{Z}^{m}_{t,\varepsilon}(x)$ and let $x_1,\ldots,x_n$ be $n$ elements of the $\varepsilon$-ball. 
We have that
$$
\begin{aligned}
F(z|x) &= P(\hat{Z}^{m}_{t,\varepsilon}(x) \leq z) \\
&= P(Y(x_1)\leq z,\ldots,Y(x_n)\leq z)\\
& = \Pi_{k=1}^{n} P(Y(x_k) \leq z)
\end{aligned}
$$
we have that when $\epsilon \to 0$ all the points in the $\varepsilon-$ball $x_k \to x$, therefore,
$$
\Pi_{k=1}^{n} P(Y(x_k) \leq z) \to \Pi_{k=1}^{n} P(Y^{(k)}(x) \leq z) ,\; \text{as} \; \varepsilon \to 0.
$$
We then have that
$$
\Pi_{k=1}^{n} P(Y^{(k)}(x) \leq z) = P(Y^{(1)}(x)\leq z,\ldots,Y^{(n)}(x)\leq z)
$$
and the right-hand side corresponds to the CDF of $Z^{m}_t(x)$. 
\end{proof}

We note that the first independence assumption is often referred to as the no interference assumption between the units (see also the stable unit treatment value assumption~\citep{imbens2015causal,keele2015statistics}). 
From this, we can bound the difference between the $\varepsilon$-max-sampler estimate and the true potential outcome under a Lipschitz assumption on the potential outcomes:
\begin{proposition}
\label{thm:lipschitz}
    Let $t\in \{0,1\}$. Assume that the potential outcomes are positive and that $x\rightarrow Y_t(x)$ is K-Lipschitz almost surely, then
    $$ \forall n \in \mathbb{N}, \forall \varepsilon>0,\;
     \left\|\hat{Z}^{m}_{t,\varepsilon}(x)-Z^{m}_t(x) \right \|\leq K \varepsilon  \; \; \text{a.s.} 
    $$
\end{proposition}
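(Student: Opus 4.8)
The plan is to reduce the claim to a deterministic Lipschitz estimate, once a joint realization of the two random variables involved has been fixed. Write $B^{m}_{\varepsilon}(x)=\{x_1,\dots,x_m\}$ for the $m$ points selected by the $\varepsilon$-max-sampler, and let $Y^{(1)}_t(\cdot),\dots,Y^{(m)}_t(\cdot)$ be $m$ independent copies of the process $x\mapsto Y_t(x)$, each $K$-Lipschitz almost surely by hypothesis. On this common space, realize the $\varepsilon$-max-sampler as $\hat Z^{m}_{t,\varepsilon}(x)=\max_{1\le k\le m}Y^{(k)}_t(x_k)$ and the target as $Z^{m}_t(x)=\max_{1\le k\le m}Y^{(k)}_t(x)$. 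First I would verify that this coupling leaves the marginal laws unchanged: by the no-interference (independence across distinct locations) assumption, the law of $\max_{x'\in B^{m}_{\varepsilon}(x)}Y_t(x')$ factorizes as $\prod_k P(Y_t(x_k)\le\,\cdot\,)$, which is exactly the law of $\max_k Y^{(k)}_t(x_k)$ on this space, matching \eqref{eq:max_samp}, while $Z^{m}_t(x)$ is by construction the maximum of $m$ i.i.d.\ copies of $Y_t(x)$, as in its definition.

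Second, I would apply the Lipschitz property path by path: for almost every $\omega$ and every $k$, since $x_k\in B_{\varepsilon}(x)$, we get $\bigl|Y^{(k)}_t(x_k)-Y^{(k)}_t(x)\bigr|\le K\|x_k-x\|<K\varepsilon$. Third, I would use that $(a_1,\dots,a_m)\mapsto\max_k a_k$ is $1$-Lipschitz in the $\ell^\infty$ norm, i.e.\ $|\max_k a_k-\max_k b_k|\le\max_k|a_k-b_k|$; taking $a_k=Y^{(k)}_t(x_k)$ and $b_k=Y^{(k)}_t(x)$ yields $\bigl\|\hat Z^{m}_{t,\varepsilon}(x)-Z^{m}_t(x)\bigr\|\le\max_k\bigl|Y^{(k)}_t(x_k)-Y^{(k)}_t(x)\bigr|<K\varepsilon$ almost surely, which is the assertion (with room to spare, since the ball is open). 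Positivity of the potential outcomes enters only so that all quantities (and the Fr\'echet-type GEV tail object they approximate) lie in $\mathbb{R}_+$ and $\|\cdot\|$ can be read as the absolute value; it plays no further role.

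The main obstacle is the coupling step rather than the estimate itself: in their raw definitions, $\hat Z^{m}_{t,\varepsilon}(x)$ is a functional of a single sample path whereas $Z^{m}_t(x)$ is a maximum of i.i.d.\ copies, so the almost-sure bound only becomes meaningful once a joint realization is fixed, and one must check that the chosen coupling is legitimate (marginals preserved) and compatible with the assumptions invoked in the preceding convergence-in-distribution proposition. It is worth flagging that exact independence of $Y(x)$ and $Y(x')$ at distinct locations is in tension with almost-sure $K$-Lipschitz sample paths, so the honest content here is that evaluating $m$ independent copies of the process at points within distance $\varepsilon$ of $x$ incurs an error of at most $K\varepsilon$ — an error that vanishes as $\varepsilon\to0$, dovetailing with the convergence result stated just before.
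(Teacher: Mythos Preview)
Your argument is correct and follows essentially the same route as the paper: apply the $K$-Lipschitz bound pointwise inside the $\varepsilon$-ball, then use that $\max$ is $1$-Lipschitz in the $\ell^\infty$ norm to pass to the difference of maxima. You are in fact more careful than the paper's own proof, which neither makes the coupling of $\hat Z^{m}_{t,\varepsilon}(x)$ and $Z^{m}_t(x)$ explicit nor actually needs positivity for the max-inequality step (the paper invokes positivity there, but as you observe, $|\max_k a_k-\max_k b_k|\le\max_k|a_k-b_k|$ holds unconditionally).
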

The theorem states that the error grows at most at the rate of the Lipschitz constant.
\begin{proof}
    Let $n\in \mathbb{N}$ and let $\varepsilon >0$. We have that
$$
\forall x,x' \in \mathcal{X},\; \left\|Y(x) - Y(x')\right\| \leq K \left\|x-x'\right\|
$$
For the $\varepsilon-$ball around $x$ we have that 
$$
\forall x' \in B_{\varepsilon}(x), \;\left\|Y(x) - Y(x')\right\| \leq K \left\|x-x'\right\|.
$$
By taking the max over the right-hand side, we have that
$$
\forall x' \in B_{\varepsilon}(x), \;\left\|Y(x) - Y(x')\right\|\leq K \varepsilon.
$$
Since the potential outcomes are positive we have for $x_1,\ldots,x_m \in B_{\varepsilon}(x) $ \

$$
\left\|\max_{1\leq k\leq m}Y(x_k) - \max_{1\leq k\leq m}Y^{k}(x)\right\| \leq \max_{1\leq k\leq m} \left\|Y(x_k) - Y^{k}(x)\right\|.
$$
Therefore,
$$\left\|\hat{Z}^{m}_{t,\varepsilon}(x)-Z^{m}_t(x) \right \|\leq K \varepsilon.$$
\end{proof}
\section{Tail Unconfoundednes Examples}
\label{sec:tail_examples}
In this section, we present two quantitative examples where tail unconfoundedness is satisfied but unconfoundedness does not hold. 

\paragraph{Example 1:} Let the mass function $P(Y_1 \mid T=t,X)$ be a Bernoulli distribution with a non-binary treatment parameter $t \in (0,1)$. 
To compute the probability that the maximum is equal to 1, we write
$$
\begin{aligned}
   P(Z_1^{n} =1 \mid T=t,X) &=  1 - P(Y^1_1=0,\ldots,Y^n_1=0 \mid T=t,X) \\
   & = 1 - P^n(Y_1 \mid T=t,X) \\
   & = 1 - t^{n}.
\end{aligned}
$$
This is independent of $t$ as $ n \to \infty$ (since $t \in (0,1)$). 

\paragraph{Example 2:} Consider the density function $p(y_1 \mid x,t)$ that follows a Beta distribution $\mathcal{B}(\alpha,\beta)$ with parameters $\alpha = x$ and $\beta = t+1$ for $x,t>0$. 
The domain of attraction of this distribution is $\text{GEV}(0,1,\xi(x))$ with $\xi(x) = - x^{-1}$ with the following normalizing constants \citet{roncalli2020handbook}:  
\begin{align*}
b_m(x) = 1 \quad \text{and}  \quad
a_m(x) = \left(\frac{m \Gamma(x+t+1)}{\Gamma(x)\Gamma(t+1)}\right)^{-1/(t+1)}.
\end{align*}
We see that as $m\to \infty$, $a_{m} \to 0$.
Then, the limiting distribution of $Z_1^{n}$ only depends on $x$ and does not depend on $t$, implying that the tail unconfoundedness holds.

\section{Algorithm}
\label{sec:algo}
We describe the algorithm for the CETE estimation in~\autoref{alg}.
\begin{algorithm}[th!]
\SetKwInput{KwInput}{Input}         % Set the Input
\SetKwInput{KwOutput}{Output}       % set the Output
\SetKwInput{KwData}{Data}
\SetKwFunction{main}{Main}
\SetKwFunction{loss}{Loss}
\SetKwFunction{maxs}{MaxSampler}
\SetKwComment{Comment}{$\triangleright$ }{}
\SetCommentSty{scriptsize}

\DontPrintSemicolon

\KwData{$S = \{\left(x^{(i)},t^{(i)},y^{(i)}\right)\}_{i=1}^{n}$}
\KwInput{A neural network $N_{\theta} = \{(\hat{\mu}_{t},\hat{\sigma}_{t},\hat{\xi}_{t})\}_{t=0}^{1}$ : $s$-Learner, $t$-Learner, TARNet, or CFR}
\KwOutput{A trained neural network $N_{\theta^*}$}

    \SetKwProg{Fn}{Function}{:}{}
    \Fn{\maxs{$S,m$}}{
        Initialize an empty set $S^m$ \;
        Number of clusters:\;$K = n\; \text{div}\; m$ \;
        Run $K$-means to and cluster the data into $K$ clusters.\;
        \For{$i=1,2,\ldots,K$}{
        Take $S_i = \{j \mid (x^{(j)},t^{(j)},y^{(j)})\; \text{in the} \; i^{th} \;\text{cluster}\}$\;
        
        Select $s = \arg \underset{j\in S_i}{\text{max}} y^{(j)}$\;

        Add $(x^{(s)},t^{(s)},y^{(s)})$ to $S^{m}$
        }
        
        \KwRet $S^m$
    }
    
    \SetKwProg{Fn}{Function}{:}{}
    \Fn{\loss{$S^{m},N_{\theta}$}}{
        For each $(x^{(i)},t^{(i)},y^{(i)}) \in S^m$ the negative conditional log-likelihood is   \\  
        $
        \begin{aligned}
        L_i = & - \log(P(y^{(i)} \mid \mu_0(x^{(i)}),\sigma_0(x^{(i)}),\xi_0(x^{(i)}))) \mathbf{1}_{t^{(i)}=0} \\ 
        & - \log(P(y^{(i)} \mid \mu_1(x^{(i)}),\sigma_1(x^{(i)}),\xi_1(x^{(i)}))) \mathbf{1}_{t^{(i)}=1}
        \end{aligned}
        $
        
        \KwRet $\frac{1}{n}\sum_{i=1}^{n} L_i$
    }
    \SetKwProg{Fn}{Function}{:}{}
    \Fn{\main}{
    Choose $m$\;
    Run the max-sampler\;
    $S^m = \maxs{S,m}$\;
    Train the neural network $N_\theta$ by minimizing the loss: \; L = \loss{$S^m,N_{\theta}$}\;

    \KwRet The trained neural network $N_{\theta*}$
    }

\caption{Learning the Parameters of the Domain of Attraction of Individual Potential Outcomes}
\label{alg}
\end{algorithm}

\section{Datasets Descriptions}
\label{sec:description}
In this section, we present a detailed description of the datasets used in the numerical study.
A summary is provided in Table~\ref{dataset_table}.
\paragraph{IHDP} 
The IHDP dataset was first introduced by~\citet{hill} and it is based on real covariates available from the Infant Health and Development Program (IHDP), which studied the effect of development programs on children. 
The features in this dataset come from a randomized controlled trial. 
The potential outcomes are simulated using Setting B. 
The dataset consists of $747$ individuals (specifically, $139$ in the treatment group and $608$ in the control group), each with $25$ features. 
The potential outcomes are generated according to the following distributions for the control group:
$$Y_{0} = \exp(\beta^{T}\cdot(X + W)) + \eta_0, \;\;  \text{with} \; \eta_0 \sim \mathcal{N}(0,1)$$
and for the treatment group:
$$Y_{1} = \beta^{T}(X+W) - \omega + \eta_1, \;\;  \text{with} \; \eta_1 \sim \mathcal{N}(0,1) $$
where $W$ has the same dimension as $X$ with all entries equal $0.5$ and $\omega=4$. 
The regression $25$ dimensional coefficient vector $\beta$ is randomly sampled from a categorical distribution with the support $(0, 0.1, 0.2, 0.3, 0.4)$ and the respective probabilities $\mu = (0.6, 0.1, 0.1, 0.1,0.1)$. 
We refer to this dataset as IHDP (Original), which corresponds to a Gumbel domain of attraction. 
We also generate two new versions of IHDP by changing the noise variables of the potential outcomes. 
The first version is given by:
$$Y_{0} = \exp(\beta^{T}\cdot(X + W)) + \eta_0, \;\;  \text{with} \; \eta_0 \sim \text{GEV}(0,1,0.1+\|X\|_2^{1/10})$$
and
$$Y_{1} = \beta^{T}(X+W) - \omega + \eta_1, \;\;  \text{with} \; \eta_1 \sim \text{GEV}\left(0,1,\log(1.1+\|X\|_2)\right) $$
We refer to this version as IHDP (Fr\'echet). 
The second version is generated with the following potential outcomes:
$$Y_{0} = \exp(\beta^{T}\cdot(X + W)) + \eta_0, \;\;  \text{with} \; \eta_0 \sim \text{GEV}(0,1,-0.1-\|X\|_2^{1/10})$$
and
$$Y_{1} = \beta^{T}(X+W) - \omega + \eta_1, \;\;  \text{with} \; \eta_1 \sim \text{GEV}\left(0,1,-\log(1.1+\|X\|_2)\right) $$
We refer to this version as IHDP (Weibull).

% \paragraph{Cervical Cancer (Gumbel and Fr\'echet)} This dataset is based on the Cervical Cancer (Risk Factors) dataset available online at~\citet{fernandes2017transfer}. 
% It consists of $858$ observations and $34$ feature variables. 
% We only take the feature variables of this dataset and generate a treatment assignment as follows:
% $$
% T \sim \begin{cases}
%   B(0.2), \; \text{if} \;  & \|X\|_2 > 19  \\
%   B(0.8), \; \text{if} \;  &  \|X\|_2 \leq 19
% \end{cases}
% $$
% where $B$ is the Bernoulli mass function. We define the potential outcomes as follows, 
% $$Y_{0} = \exp(\beta^{T}\cdot(X + W)) + \eta_0, \;\;  \text{with} \; \eta_0 \sim \text{GEV}(0,1,\|X\|_2^{1/10})$$
% and
% $$Y_{1} = \beta^{T}(X+W) - \omega + \eta_1, \;\;  \text{with} \; \eta_0 \sim \text{GEV}\left(0,1,\log(\|X\|_2)\right) $$
% where $W$ has the same dimension as $X$ with all entries equal $0.5$ and $\omega=4$. 
% The regression coefficient $\beta$, a vector of length $25$, is randomly sampled from a categorical distribution with the support $(0, 0.1, 0.2, 0.3, 0.4)$ and the respective probabilities $\mu = (0.6, 0.1, 0.1, 0.1,0.1)$. 
% The maximum domain of attraction of the individuals in this dataset corresponds to either a Gumbel or a Fr\'echet distribution depending on $X$.

\begin{table}[t]
\caption{Overview of the causal inference datasets for CETE and ETE Estimation.}
\label{dataset_table}
\begin{center}
\begin{small}
\begin{sc}
\begin{tabular}{l|cc}
\toprule
\multicolumn{1}{l}{\bf Name}  &\multicolumn{1}{c}{\bf Type} &\multicolumn{1}{c}{\bf IPO MDA}\\ 
\midrule
IHDP (original)&Semi-Synthetic &Gumbel  \\
IHDP (Fr\'echet) &Semi-Synthetic & Fréchet  \\
IHDP (Weibull) &Semi-Synthetic & Weibull\\
%Cervical Cancer (Gumbel and Fr\'echet) & Semi-Synthetic & NA \\
\bottomrule
\end{tabular}
\end{sc}
\end{small}
\end{center}
\footnotesize{Individual Potential Outcomes Maximum Domain of Attraction (\textbf{IPO MDA}) if consistent across all individuals.}
\end{table}

% gev likelihoods section
\section{GEV Likelihoods}
\label{sec:gev_likelihoods}
For reference, we provide the correspondences between the bulk distributions and tail distributions for our simulated experiments. 
\subsubsection{Gumbel Distribution}
The density function of a Gumbel is
$$
f(y ; \mu(x), \beta(x))= \frac{1}{\beta(x)} \exp\left({-\left(\frac{y-\mu(x)}{\beta(x)} + e^{- \frac{y-\mu(x)}{\beta(x)}}\right)}\right)
$$
with $\mu(x)$ a real function for the location parameter and $\beta(x)$ a positive real function of the scale.

Then, the conditional negative log-likelihood is
$$
L(\mu,\beta) = \sum_{i=1}^{n}\left( \frac{y_i-\mu(x_i)}{\beta(x_i)} + e^{\frac{y_i-\mu(x_i)}{\beta(x_i)}} \right)+ n \log{\beta(x_i)}.
$$

\subsubsection{Fr\'echet Distribution}
The density function of a Fr\'echet distribution is
$$
\begin{aligned}
f(y ; \alpha(x), s(x),m(x))
= \frac{\alpha(x)}{s(x)} \left(\frac{y-m(x)}{s(x)}\right)^{-1-\alpha(x)} e^{-\left(\frac{y-m(x)}{s(x)}\right)^{-\alpha(x)}}.
\end{aligned}
$$
Then, the conditional negative log-likelihood is
$$
\begin{aligned}
L(\alpha,s,m) = \sum_{i=1}^{n} \Bigg[(1+\alpha(x_i)) \log\left(\frac{y_i-m(x_i)}{s(x_i)}\right) + \left(\frac{y_i-m(x_i)}{s(x_i)}\right)^{-\alpha(x_i)} + \log\left(\frac{s(x_i)}{\alpha(x_i}\right)\Bigg].
\end{aligned}
$$
\subsubsection{Weibull Distribution}
The density function of a Weibull is
$$
\begin{aligned}
f(y ; \lambda(x), k(x)) =\frac{k(x)}{\lambda(x)}\left(\frac{y}{\lambda(x)}\right)^{k(x)-1} \exp{\left(-(y / \lambda(x))^{k(x)}\right)} \mathbbm{1}_{y \geq 0}
\end{aligned}
$$
with $k(x)$ and $\lambda(x)$ two positive functions of the shape and the scale, respectively. 
The conditional negative log-likelihood is
$$
\begin{aligned}
L(\lambda,k) = \sum_{i=1}^{n} \left(\frac{y_i}{\lambda(x_i)}\right)^{k(x_i)} + k(x_i)\log (\lambda(x_i)) - \log(k(x_i)) + (1-k(x_i)) \log(y_i). 
\end{aligned}
$$

% gev converegence section
\section{Convergence to GEV Distributions }
\label{sec:convergence}
\subsubsection{Gaussian to Gumbel}
Let $Y(x) \sim \mathcal{N}(\mu(x),\sigma(x))$.
The limiting GEV is given by
$$
\frac{Z_{m}(x)-b_m(x)}{a_m(x)} \underset{m\to + \infty}{\longrightarrow} \text{Gumbel}(0,1)
$$
where 
$$
b_m = \mu(x) + \sigma(x) \left(\sqrt{2 \log(m)} - \frac{\log(\log(m))+\log(4\pi)}{2\sqrt{\log(m)}}\right)
$$
and 
$$
a_m = \frac{\sigma(x)}{\sqrt{2 \log(m)}}.
$$
This implies that for large enough $m$,
$$
Z_m(x) \approx c_m(x) g + d_m(x)
$$
with $g \sim \text{Gumbel}(0,1)$.

\subsubsection{Log-Gamma to Fr\'echet}
Let $Y(x) \sim \mathcal{LG}(\alpha(x),\beta(x))$. 
The limiting GEV is given by
$$
\frac{Z_{m}(x)-b_m(x)}{a_m(x)} \underset{m\to + \infty}{\longrightarrow} \text{GEV}(0,1,\beta(x)^{-1})
$$
where 
$$
a_m(x) = 0 \; \text{and} \;
b_m(x) = \frac{\left(m \log(m)^{\alpha(x) -1}\right)^{1/\beta(x)}}{\Gamma(\alpha(x))}.
$$
Then,
$$
Z_m(x) \approx a_m(x) \; \text{GEV}(0,1,\xi(x)) + b_m(x)
$$
with 
$$
\xi(x) = \beta(x)^{-1}.
$$

\subsubsection{Beta to Weibull}
Let $Y(x) \sim \mathcal{B}(\alpha(x),\beta(x))$.
We have that
$$
\frac{Z_{m}(x)-b_m(x)}{a_m(x)} \underset{m\to + \infty}{\longrightarrow} \text{GEV}(0,1,-\alpha(x)^{-1})
$$
where 
$$
b_m(x) = 1 \; \; \text{and} \; \;
a_m(x) = \left(\frac{m \Gamma(\alpha(x)+\beta(x))}{\Gamma(\alpha(x))\Gamma(\beta(x)+1)}\right)^{-1/\beta(x)}.
$$
Then,
$$
Z_m(x) \approx a_m(x)\; \text{GEV}(0,1,\xi(x)) + b_m(x)
$$
with 
$$
\xi(x) = - \alpha(x)^{-1}.
$$
\begin{figure*}[t!]
    \centering
    \begin{subfigure}[t]{0.3\textwidth}
    \centering 
    \includegraphics[width=\textwidth]{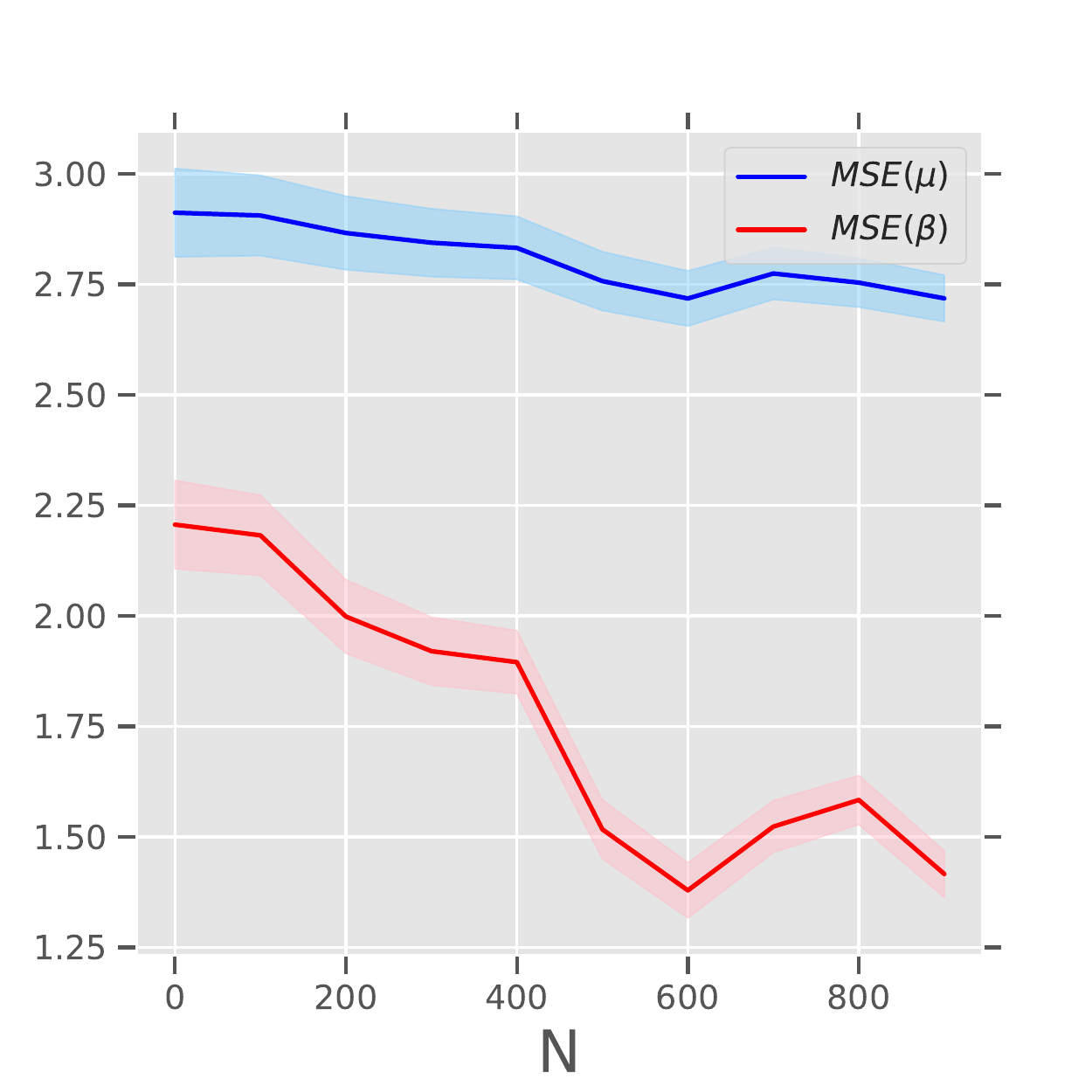}
    \caption{Gumbel, $D=1$}
    \end{subfigure}\hfill
    \begin{subfigure}[t]{0.3\textwidth}
    \centering 
    \includegraphics[width=\textwidth]{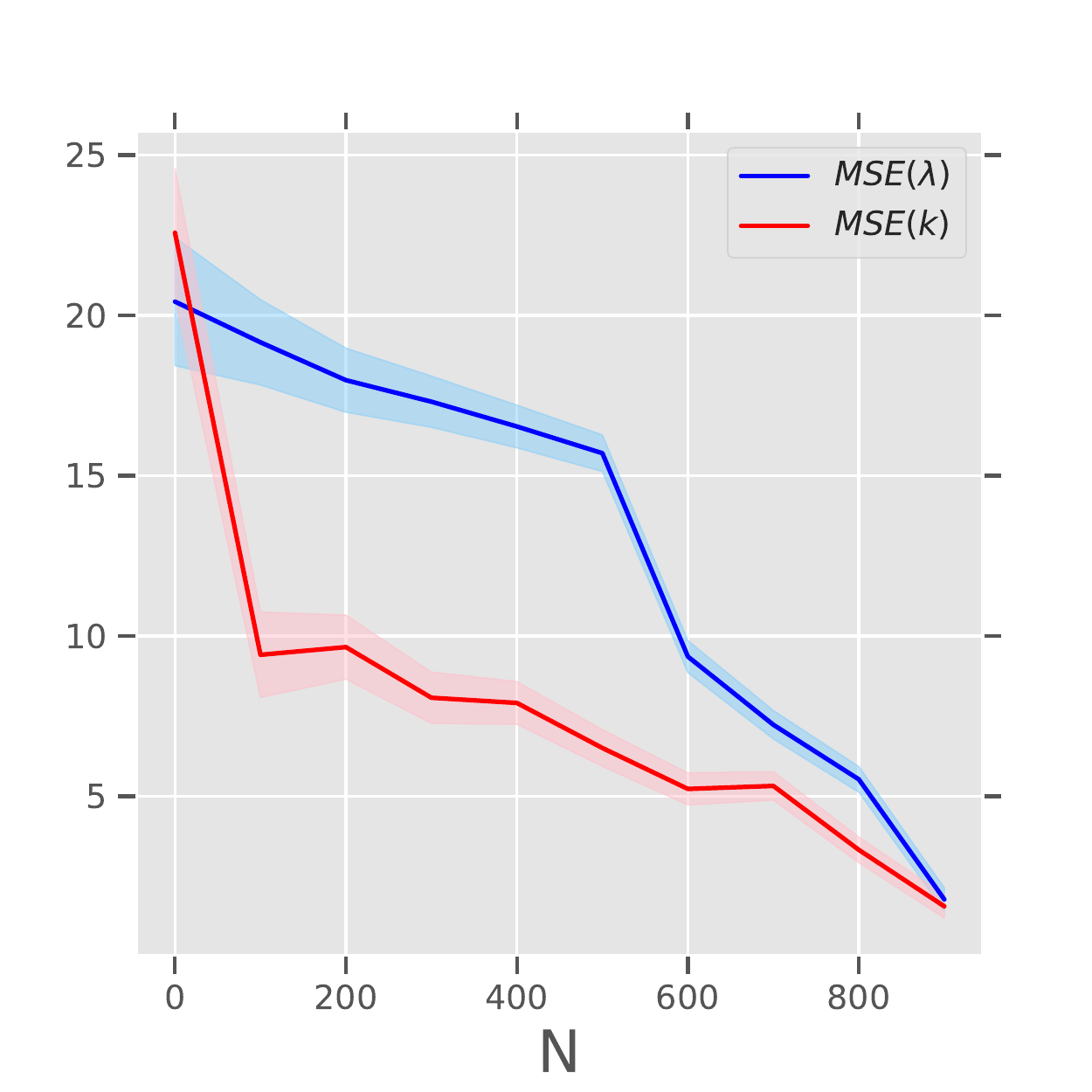}
    \caption{Weibull, $D=1$}
    \end{subfigure}\hfill
    \begin{subfigure}[t]{0.3\textwidth}
    \centering 
    \includegraphics[width=\textwidth]{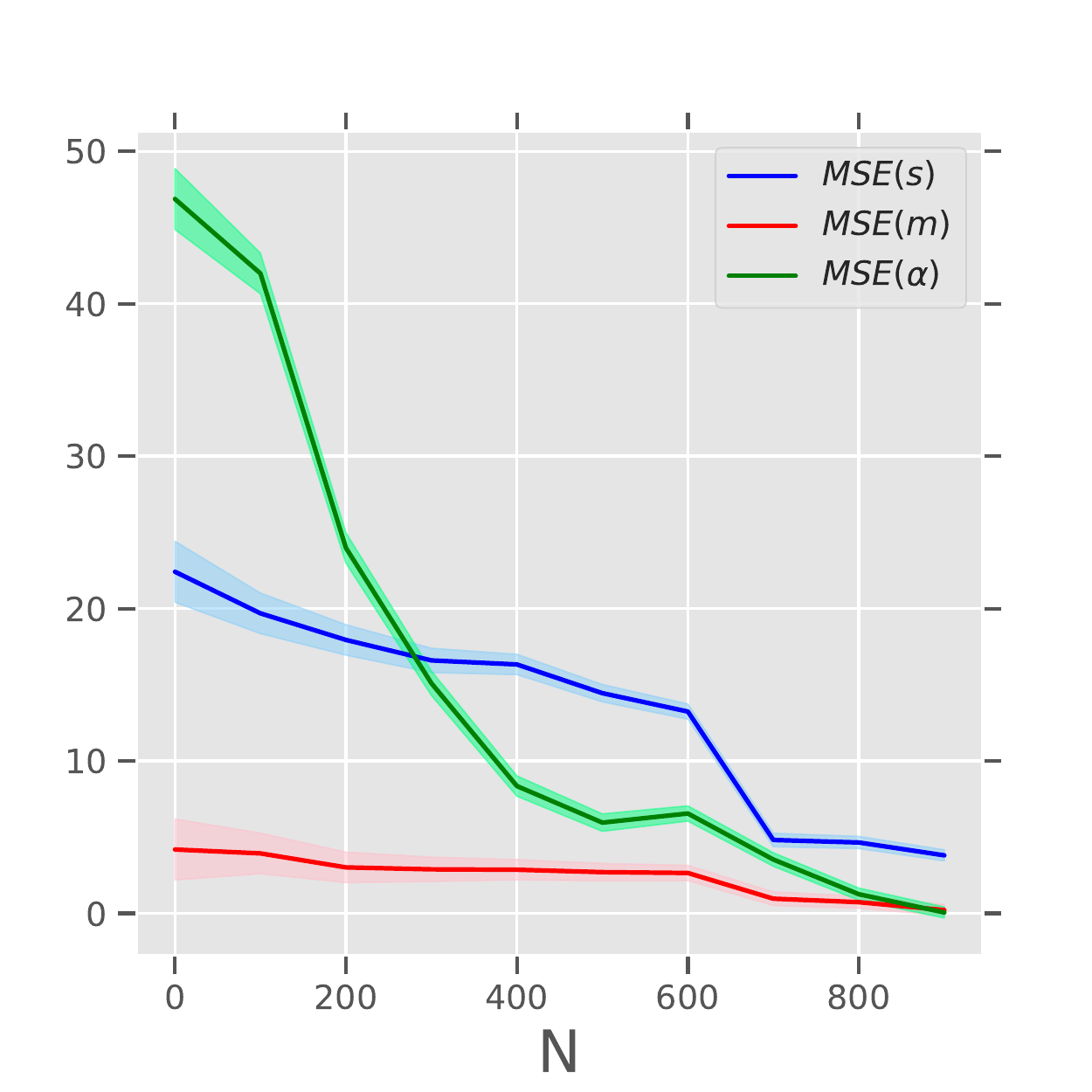}
    \caption{Frechet, $D=1$}
    \end{subfigure}
    \centering
    \begin{subfigure}[t]{0.3\textwidth}
    \centering 
    \includegraphics[width=\textwidth]{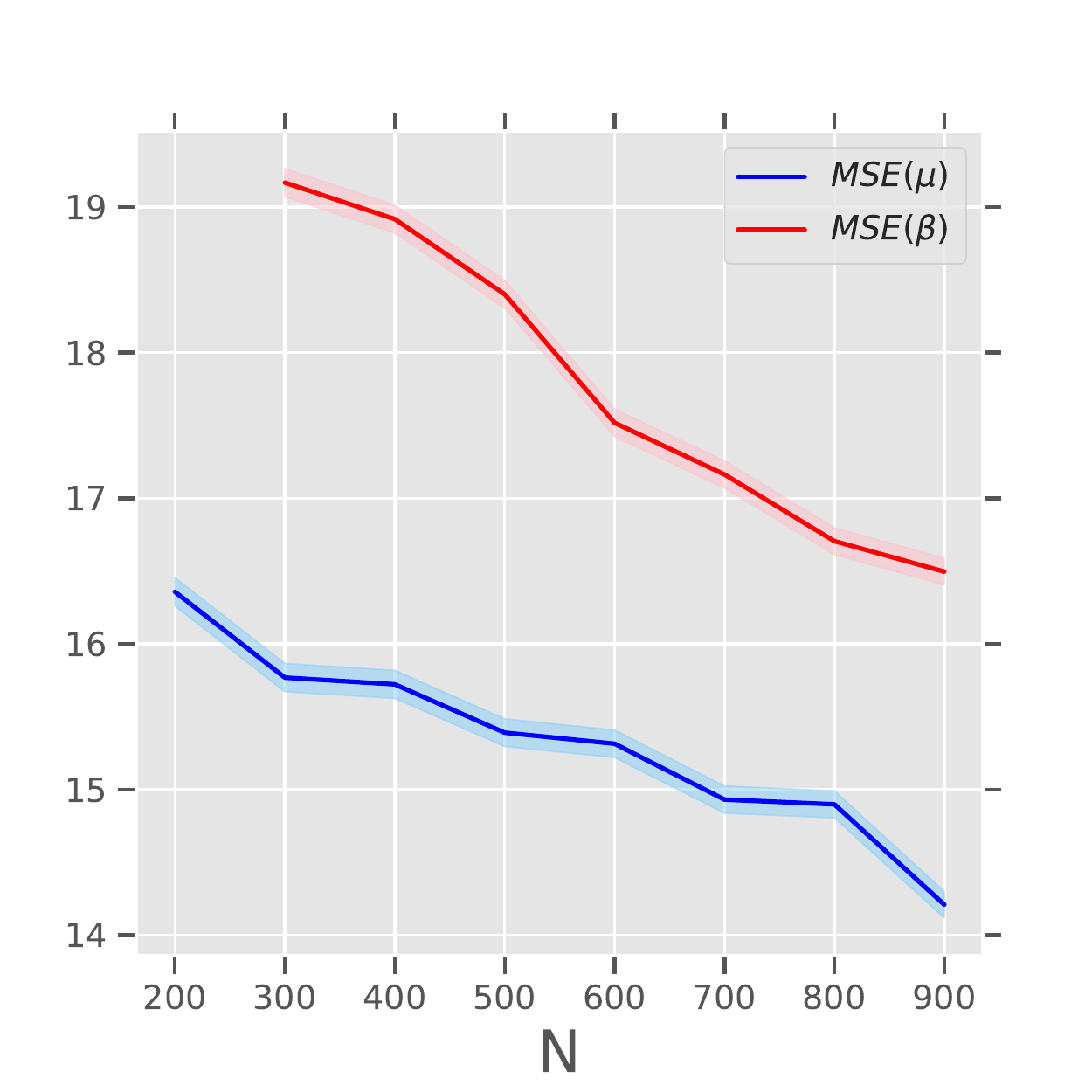}
    \caption{Gumbel, $D=5$}
    \end{subfigure}\hfill
    \begin{subfigure}[t]{0.3\textwidth}
    \centering 
    \includegraphics[width=\textwidth]{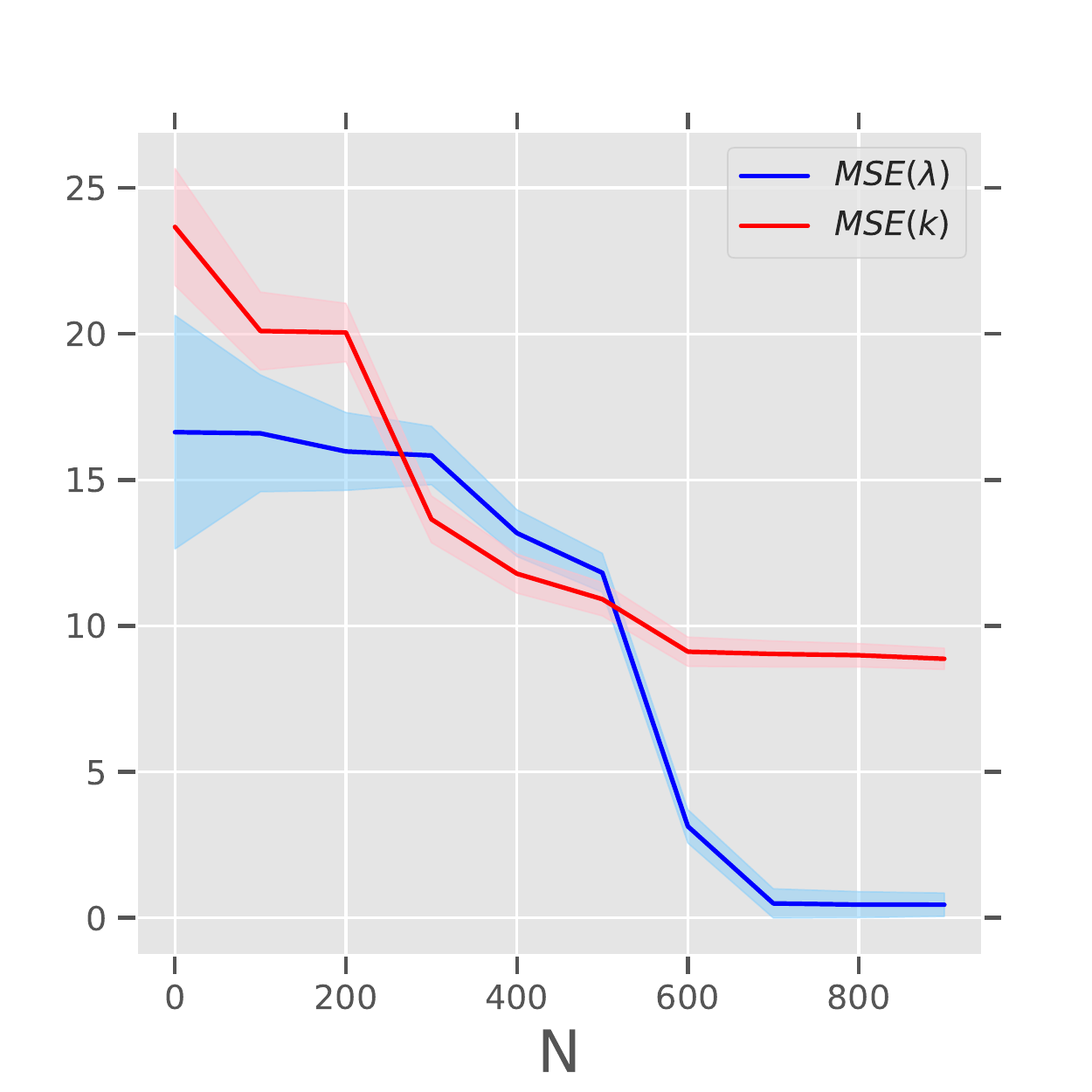}
    \caption{Weibull, $D=5$}
    \end{subfigure}\hfill
    \begin{subfigure}[t]{0.3\textwidth}
    \centering 
    \includegraphics[width=\textwidth]{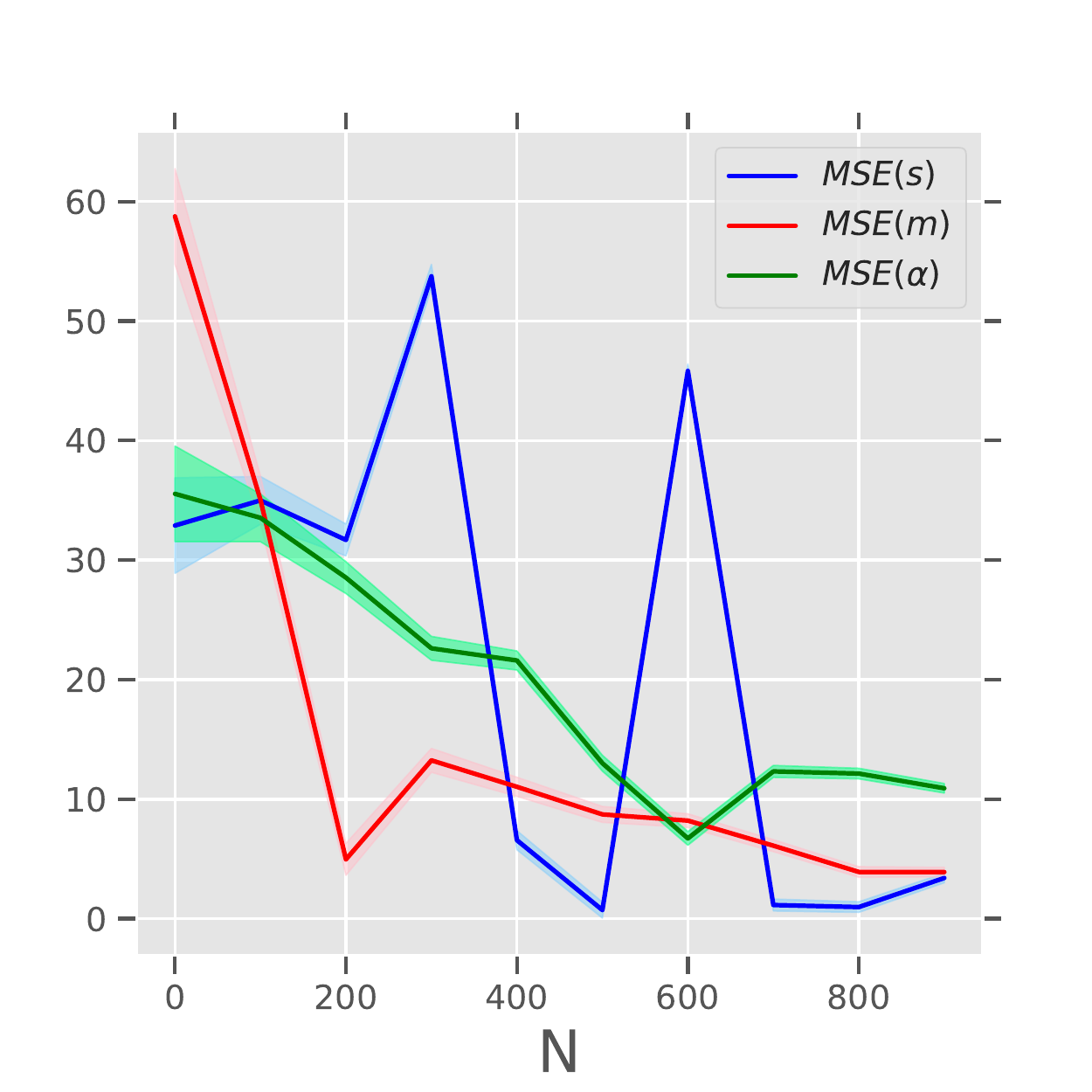}
    \caption{Frechet, $D=5$}
    \end{subfigure}
    \centering
    \begin{subfigure}[t]{0.3\textwidth}
    \centering 
    \includegraphics[width=\textwidth]{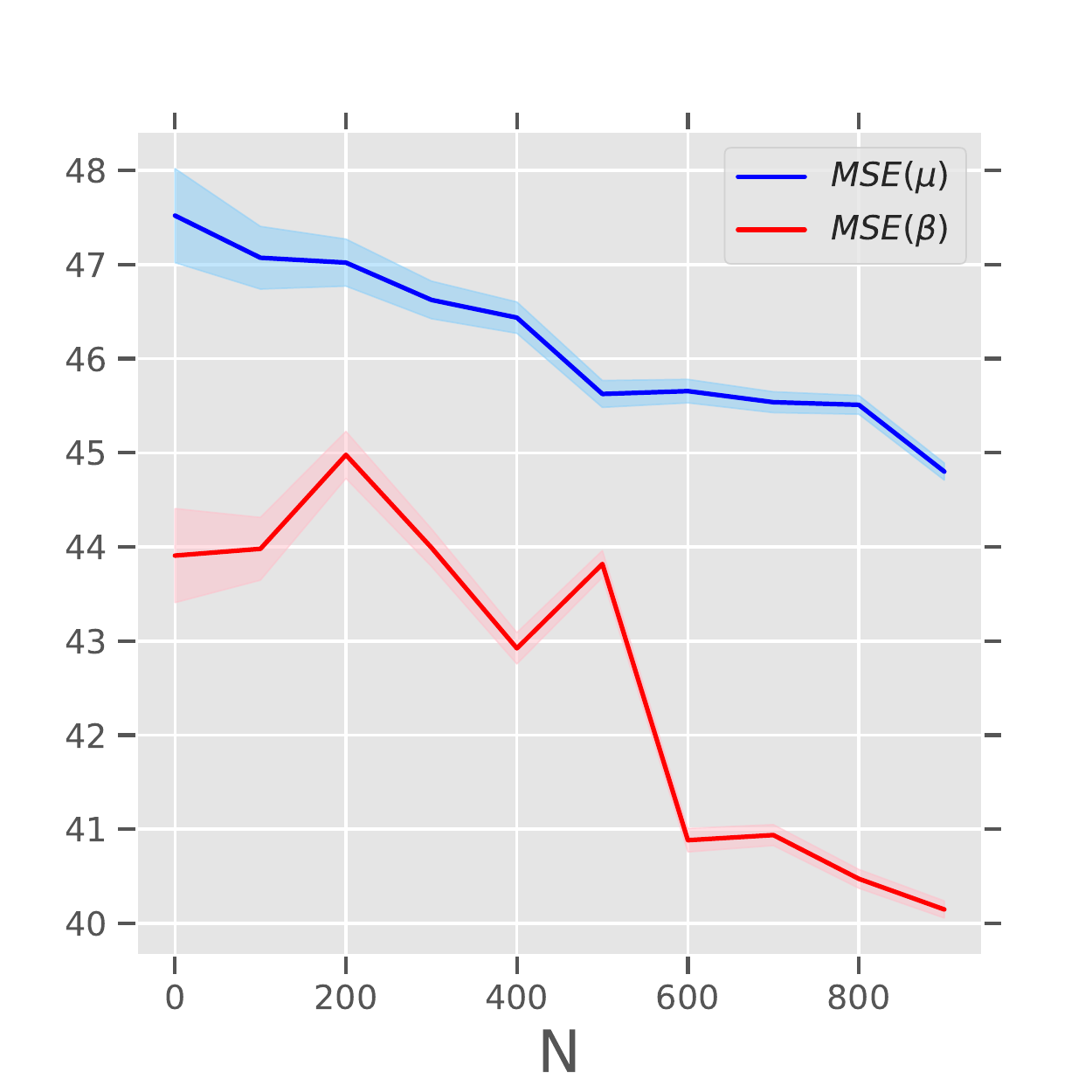}
    \caption{Gumbel, $D=10$}
    \end{subfigure}\hfill
    \begin{subfigure}[t]{0.3\textwidth}
    \centering 
    \includegraphics[width=\textwidth]{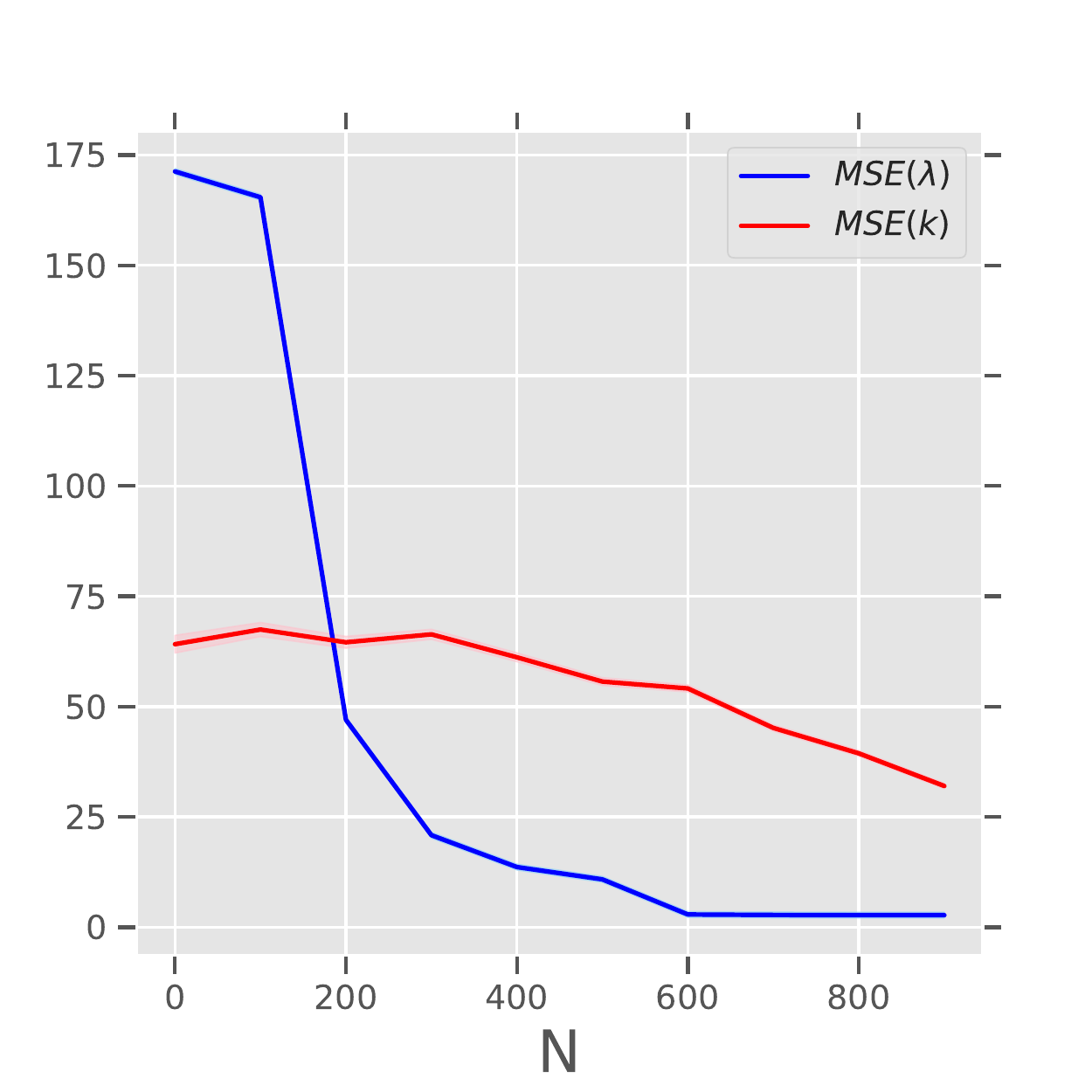}
    \caption{Weibull, $D=10$}
    \end{subfigure}\hfill
    \begin{subfigure}[t]{0.3\textwidth}
    \centering 
    \includegraphics[width=\textwidth]{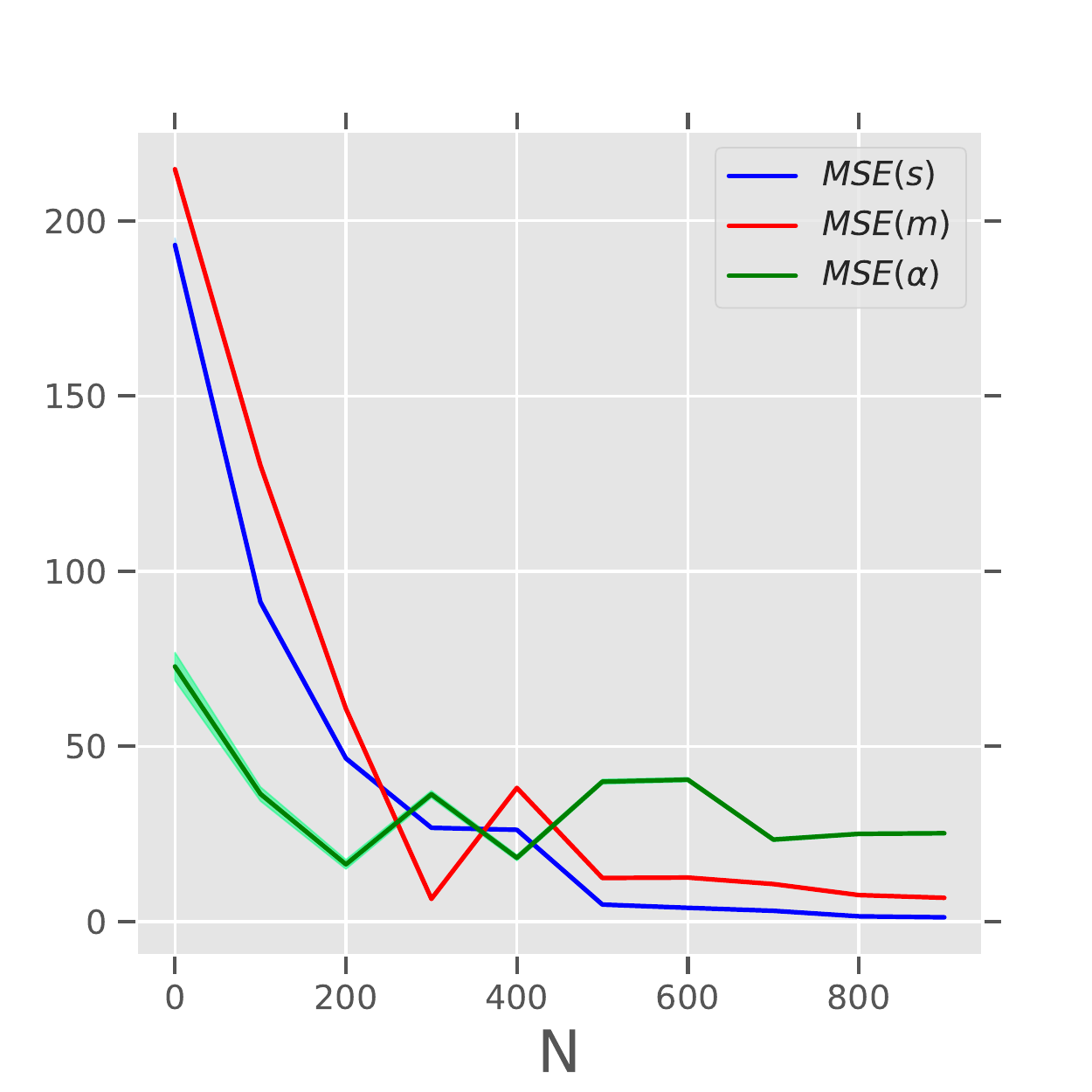}
    \caption{Frechet, $D=10$}
    \end{subfigure}
    \caption{Experiments For Conditional GEV Performance as a function of $N$ and $D$}
\label{fig:ablation_cgev}
\end{figure*}
\begin{figure}[t!]
    \centering
    \begin{subfigure}[t]{0.3\textwidth}
    \centering 
    \includegraphics[width=\textwidth]{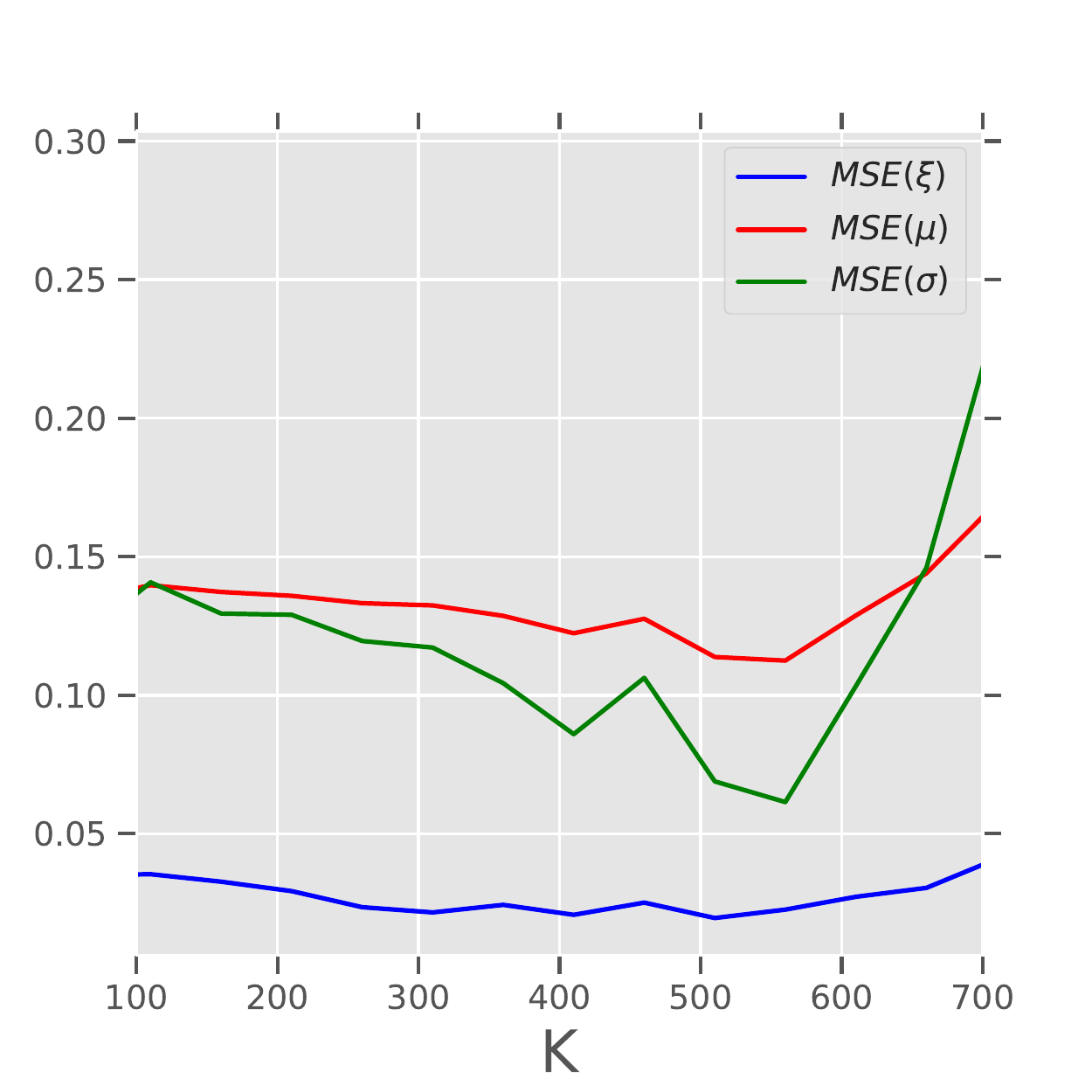}
    \caption{$D=1$}
    \end{subfigure}\hfill
    \begin{subfigure}[t]{0.3\textwidth}
    \centering 
    \includegraphics[width=\textwidth]{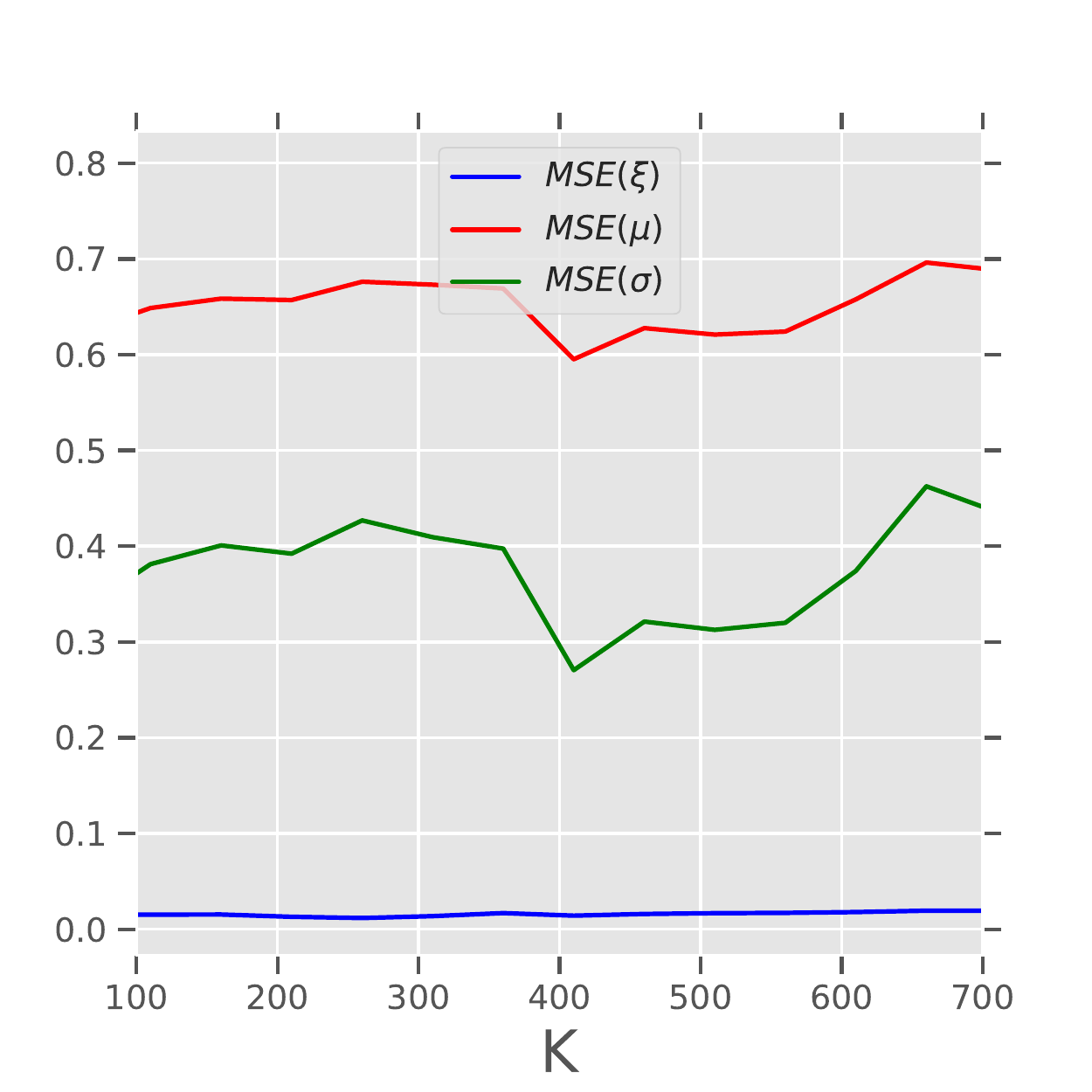}
    \caption{$D=5$}
    \end{subfigure}\hfill
    \begin{subfigure}[t]{0.3\textwidth}
    \centering 
    \includegraphics[width=\textwidth]{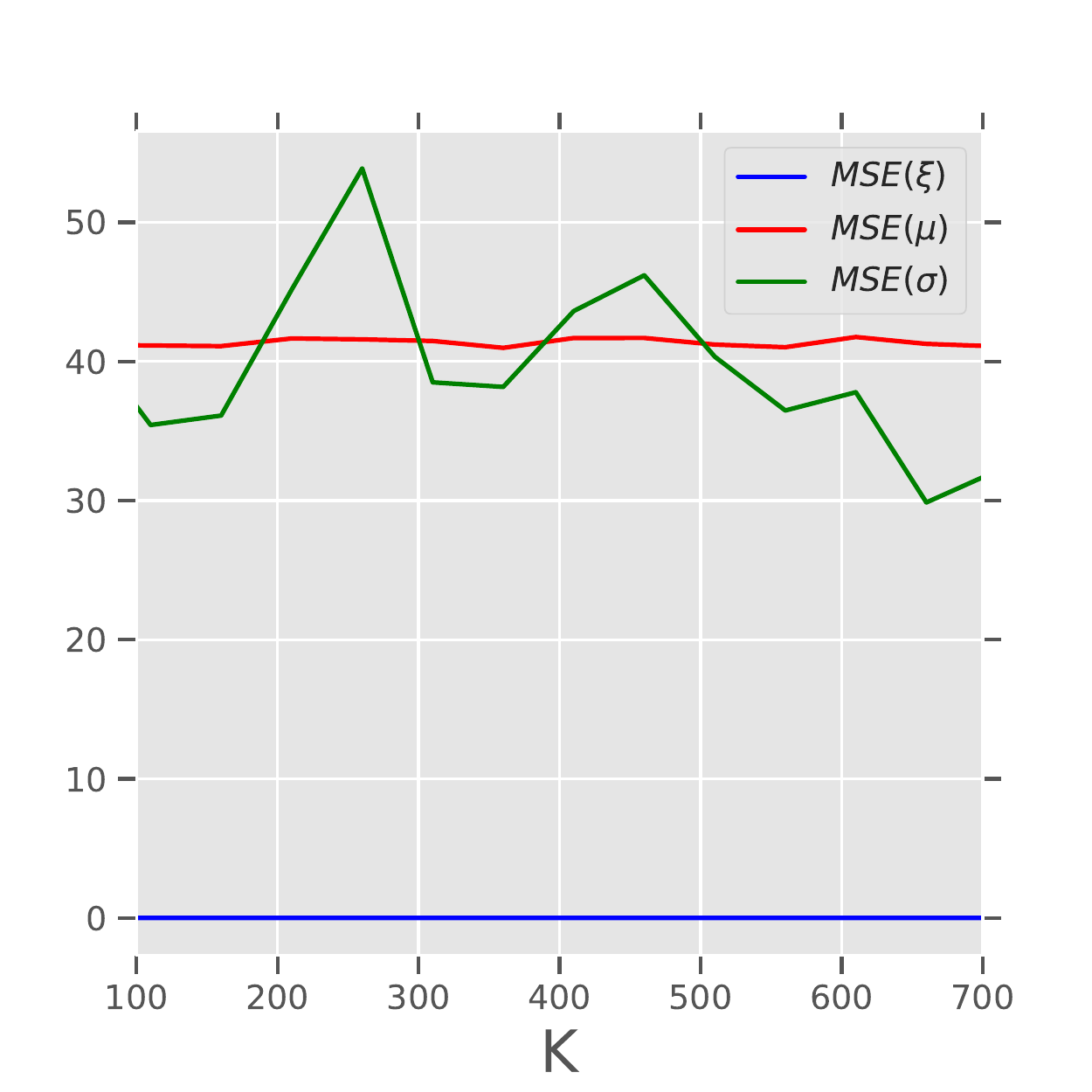}
    \caption{$D=10$}
    \end{subfigure}
    \caption{Experiments for the Gaussian to Gumbel case as a function of K}
\label{fig:ablation_exp1}
\end{figure}

% experiments section
\section{Additional Experiments}
\label{sec:ablation}
In this section, we provide additional experiments to validate the proposed method.
We start with verifying our methods for estimating conditional GEVs and the estimated scale parameter of MDA for the Log-Gamma distribution. 
Then, an ablation is performed on the number of the data size $N$ and the dimensionality $D$. 
Finally, we present an ablation on the number of clusters $K$. 
\subsubsection{Estimation from GEV observations}
\begin{figure*}
    \centering
    \begin{subfigure}[t]{0.3\textwidth}
    \centering 
    \includegraphics[width=\textwidth]{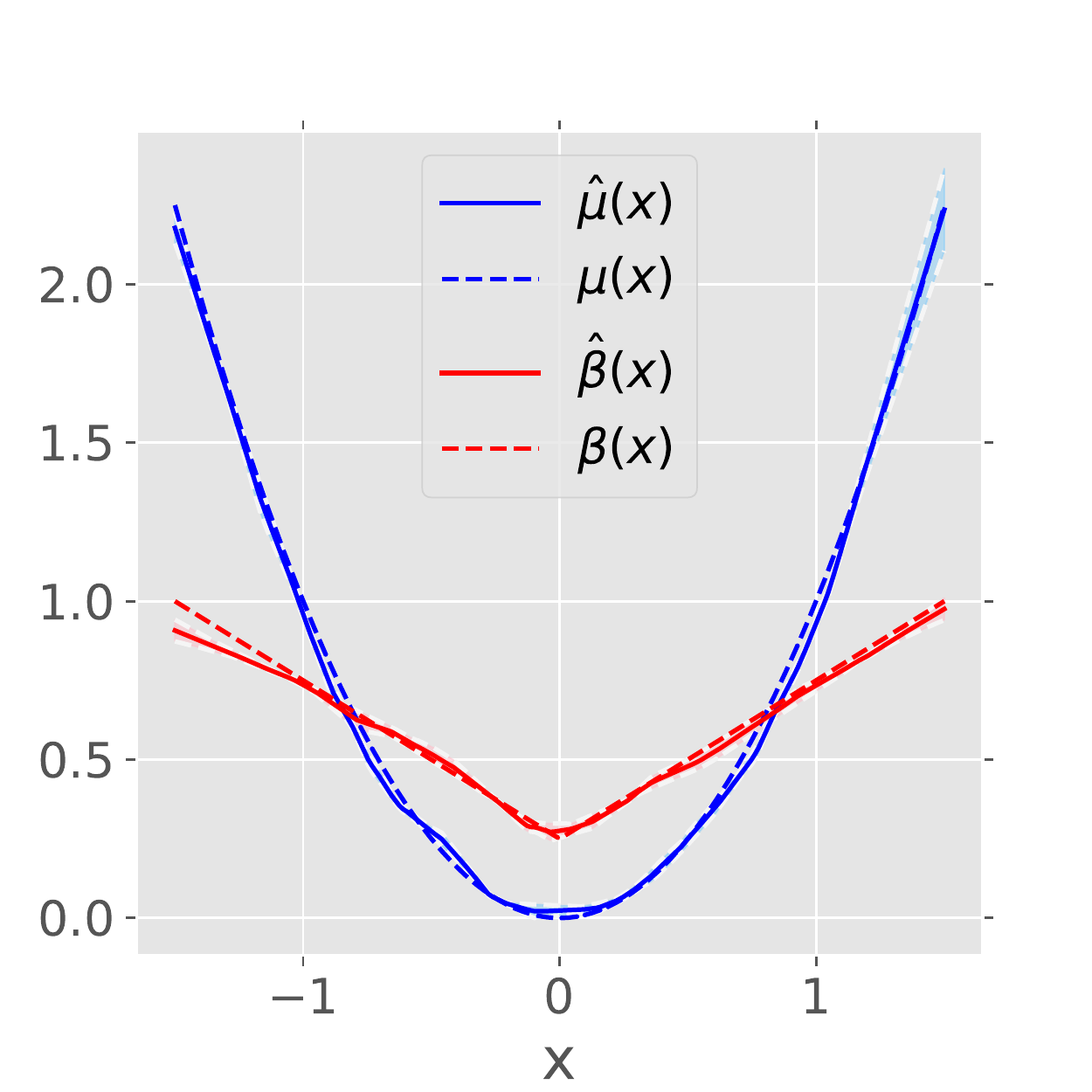}
    \caption{Gumbel distribution with the true location and scale functions $\mu(x) = x^2$ and $\beta(x) = 0.25+ \frac{|x|}{2}$.}
    \label{fig:exp0_gumbel}
    \end{subfigure}\hfill
    \begin{subfigure}[t]{0.3\textwidth}
    \centering 
    \includegraphics[width=\textwidth]{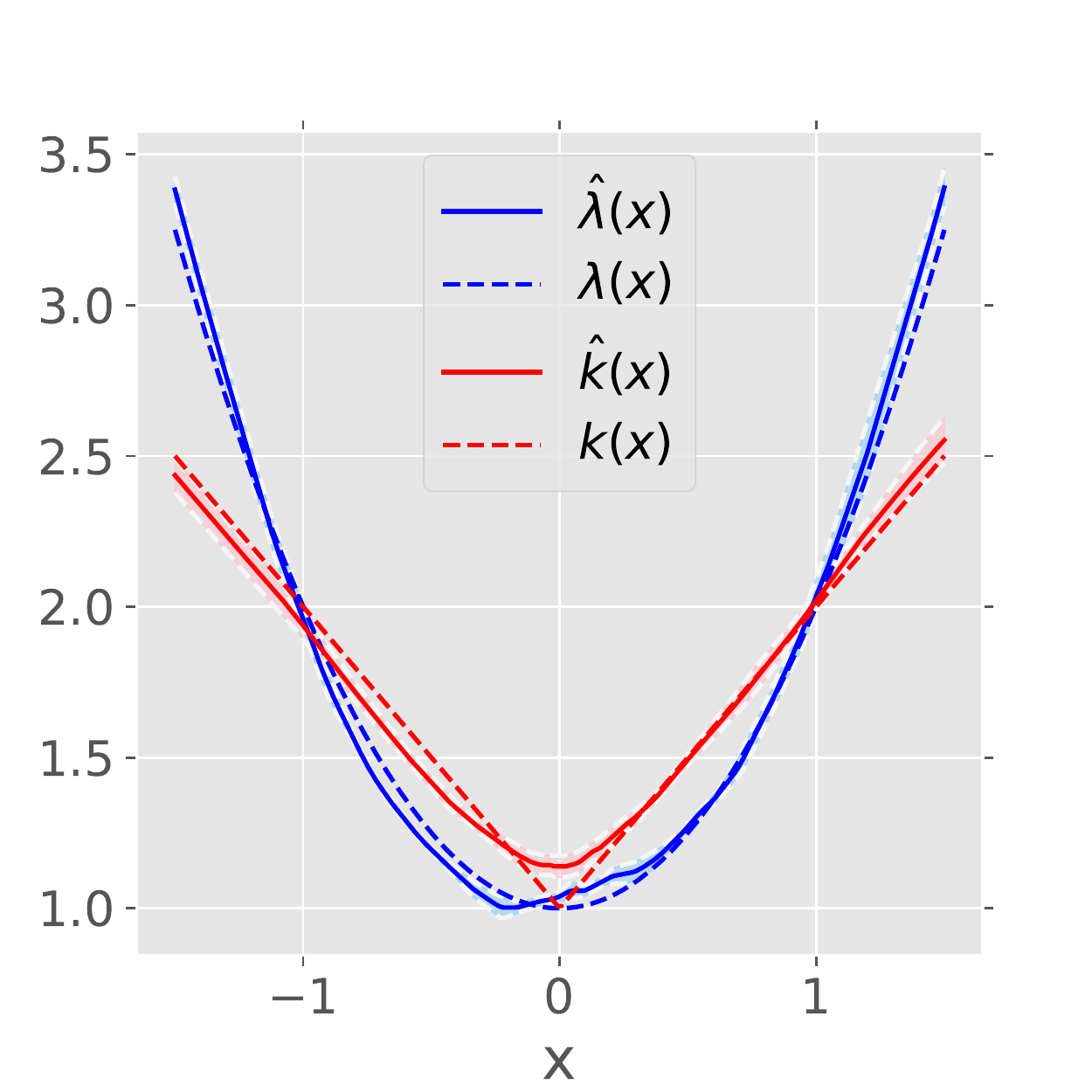}
    \caption{Weibull distribution with the true scale and shape functions $\lambda(x) = 1+x^2$ and $\beta(x) = 1+|x|$.}
    \label{fig:exp0_weibull}
    \end{subfigure}\hfill
    \begin{subfigure}[t]{0.3\textwidth}
    \centering 
    \includegraphics[width=\textwidth]{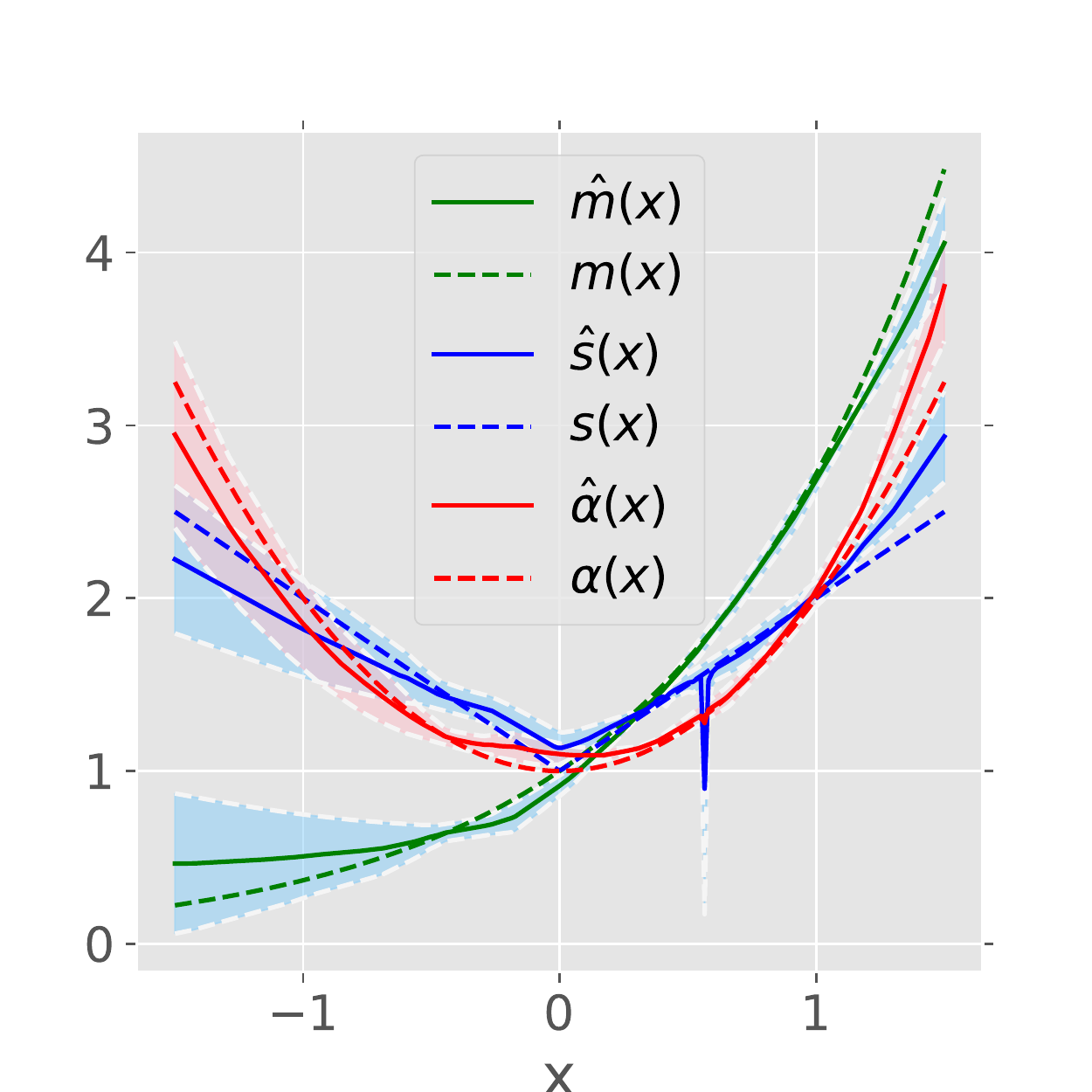}
    \caption{Fr\'echet distribution with the true location and scale functions $m(x) = e^{x}$, $s(x) = 1+|x|$, and $\alpha(x) = 1 + x^{2}$.}
    \label{fig:exp0_frechet}
    \end{subfigure}
    \caption{Estimation of distribution parameters from GEV observations for Gumbel, Weibull, and Fr\'echet distributions.}
    \label{fig:exp0}
\end{figure*}
We estimate the functional parameters of the GEV given a dataset $S = \{(x^{(i)},y^{(i)})\}_{i=1}^n$ where $P(Y \mid X=x)$ follows a GEV distribution. 
We consider the cases of the Weibull, Gumbel, and Fr\'echet distributions by learning the parameters of each of the distributions.
The full likelihood functions for each case were detailed in Appendix~\ref{sec:gev_likelihoods}.
A comparison of the estimated and true functions is given in~\autoref{fig:exp0}.

\paragraph{Conditional Gumbel}
We assume that the data are generated according to
$
Y(x) \sim \text{Gumbel}(\mu(x),\beta(x))
$
with $\mu(x)=x^2$ and $\beta(x) = 0.25 + \frac{|x|}{2}$.
Figure~\ref{fig:exp0_gumbel} shows the performance of our model on this dataset. 

%\begin{figure}
%\centering
%    \includegraphics[width=5cm]{figures/exp0_gumbel.pdf}
 %   \vspace{-0.0cm}
%      \caption{
 %     \ali{ the n at the end of estimation is cut off a bit} Estimation of the parameters of the Gumbel distribution with the true location and scale functions $\mu(x) = x^2$ and $\beta(x) = 0.25+ \frac{|x|}{2}$, respectively. The learned location and scale functions are respectively $\hat{\mu}(x)$ and $\hat{\beta}(x)$.
%      \ali{ What's the white dashed line?}}
%      \label{fig:exp0_gumbel}
%\end{figure}

 \begin{figure*}[t!]
    \centering
    \begin{subfigure}[t]{0.3\textwidth}
    \centering 
    \includegraphics[width=\textwidth]{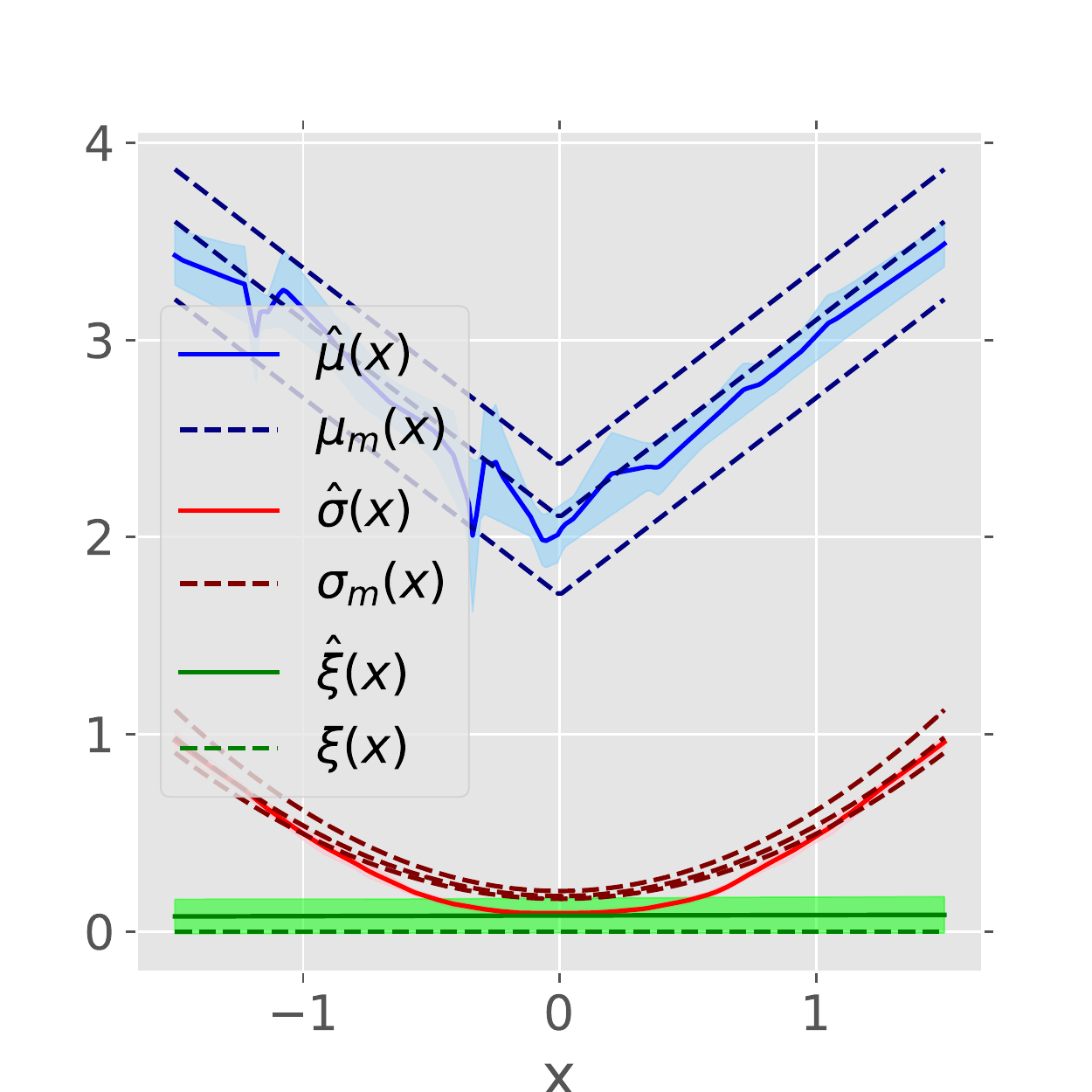}
    \caption{Estimation of parameters of the limiting Gumbel distribution generated by a Gaussian with $\mu(x) = |x|$ and $\sigma(x) = x^2$. }
    \label{fig:exp1-gaussian}
    \end{subfigure}\hfill
    \begin{subfigure}[t]{0.3\textwidth}
    \centering 
    \includegraphics[width=\textwidth]{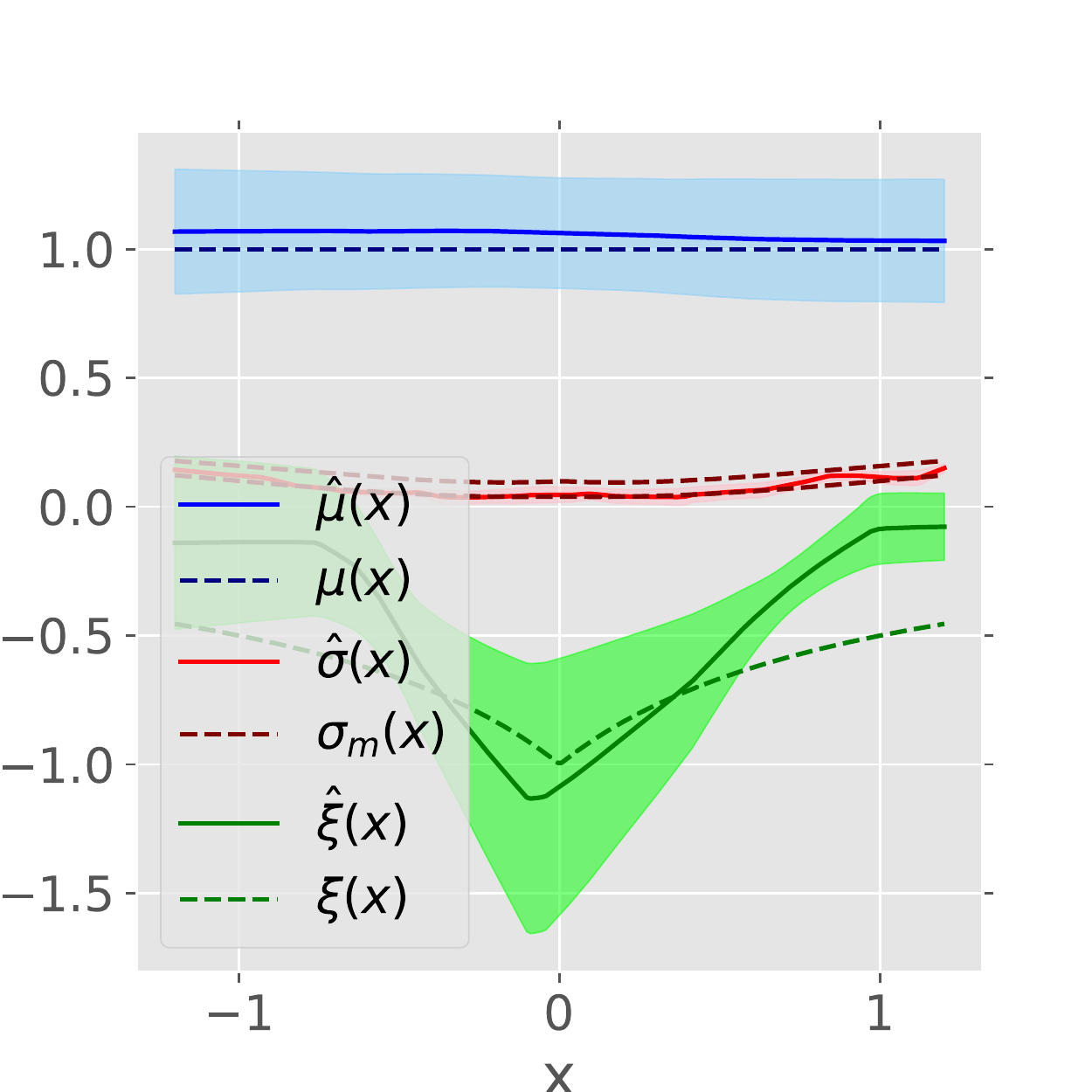}
    \caption{Estimation of parameters of limiting Weibull generated by a beta distribution with the parameters given by $\alpha(x) = |x|+1$ and $\beta(x) = x^2 +1$.}
    \label{fig:exp1-beta}
    \end{subfigure}\hfill
    \begin{subfigure}[t]{0.3\textwidth}
    \centering 
    \includegraphics[width=\textwidth]{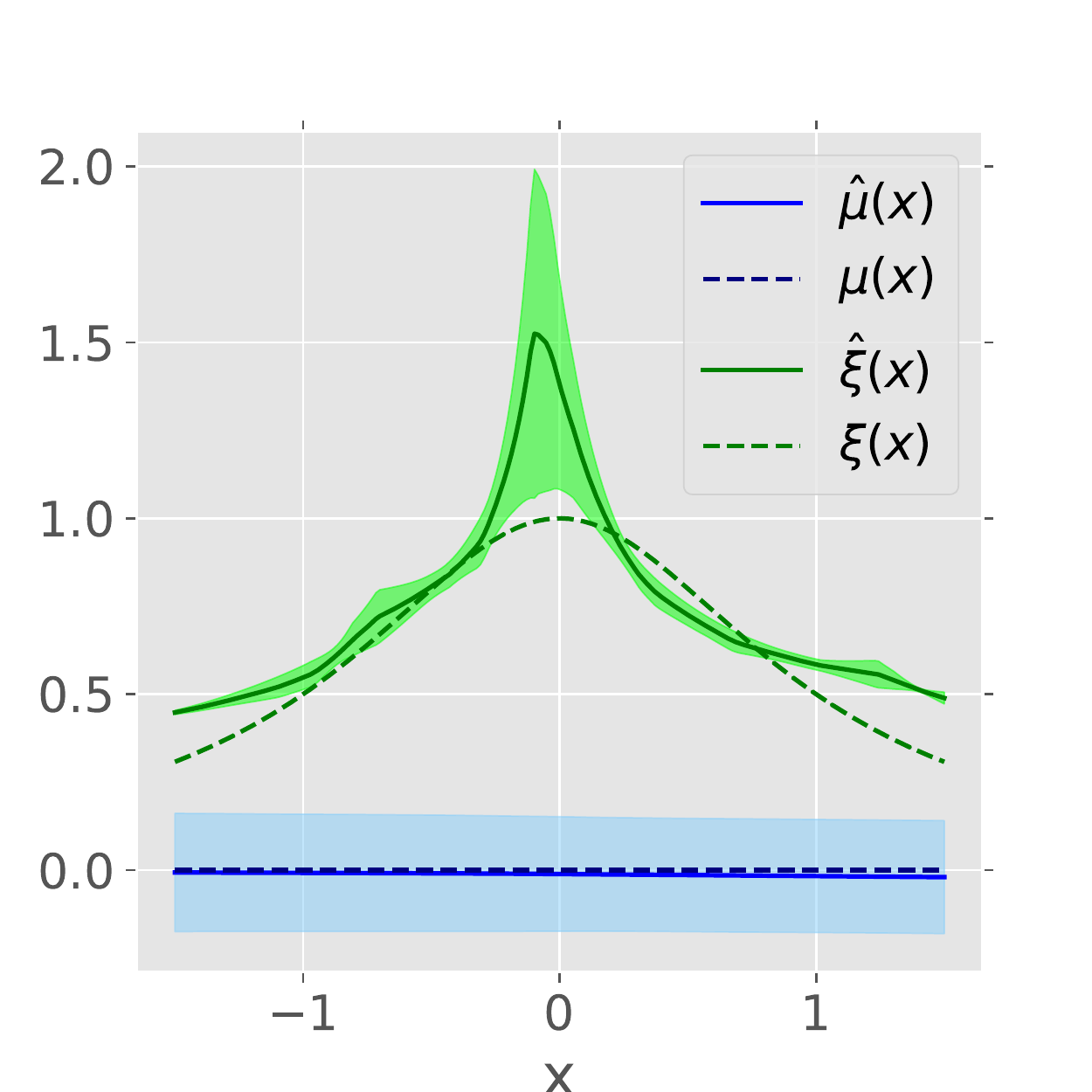}
    \caption{Estimation of parameters of limiting Fr\'echet distribution generated by a log-gamma distribution with parameters given by $\alpha(x) = |x|+1$ and $\beta(x) = x^2 +1$.}
    \label{fig:exp1-log_gamma}
    \end{subfigure}
    \caption{Estimation of limiting GEV distribution parameters from observations of Gaussian, beta, and log-gamma distributions. All parameters are estimated by taking the $\max$ of the $K$-means partition of the feature space.}
    \label{fig:exp1}
\end{figure*}

\paragraph{Conditional Weibull}
We assume that the data are generated according to
$
Y(x) \sim \text{Weibull}(\lambda(x),k(x))
$
with $\lambda(x) = 1+x^2$ and $\beta(x) = 1 + |x|$.
Figure~\ref{fig:exp0_weibull} illustrates the results of fitting the estimator to the conditional data. 
%\begin{figure}
%\centering
 %   \includegraphics[width=5cm]{figures/exp0_weibull.pdf}
 %     \caption{Estimation of the parameters of the Weibull distribution with the true scale and shape functions $\lambda(x) = 1+x^2$ and $\beta(x) = 1+|x|$, respectively. The learned location and scale functions are respectively $\hat{\mu}(x)$ and $\hat{\beta}(x)$.}
 %     \label{fig:exp0_weibull}
%\end{figure}

\paragraph{Conditional Fr\'echet}
We assume that the data are generated according to:
$
Y(x) \sim \text{Fréchet}(\alpha(x),s(x),m(x))
$
where $\alpha(x) = 1 + x^2$, $s(x) = 1 + |x|$, and $m(x) = \exp(x)$. 
Figure~\ref{fig:exp0_frechet} illustrates the results for fitting in the Fr\'echet case.  

These experiments validate the estimation procedure in the main text when observing the extreme data. 
Next, we consider the case where we observe samples of the data and need to use the $\varepsilon$-max-sampler to estimate the extreme observations.
We do this by applying the $\varepsilon$-max-sampler to data from a log-normal distribution and estimate the corresponding Frech\'et distribution. 
The results are presented in~\autoref{fig:exp1-log_gamma_sigma} for estimating the scale $\sigma$ parameter.
%\begin{figure}
%\centering
%    \includegraphics[width=5cm]{figures/exp0_frechet.pdf}
 %     \caption{Estimation of the parameters of the Fr\'echet distribution with the true location and scale functions $m(x) = e^{x}$, $s(x) = 1+|x|$, and $\alpha(x) = 1 + x^{2}$, respectively. The learned location and scale functions are respectively $\hat{\mu}(x)$ and $\hat{\beta}(x)$.}
 %     \label{fig:exp0_frechet}
%\end{figure}

We also estimated the scale parameter for the Log-Gamma distribution in Figure~\ref{fig:exp1-log_gamma_sigma}. 
\begin{figure}[t!]
    \centering
    \includegraphics[width=4cm]{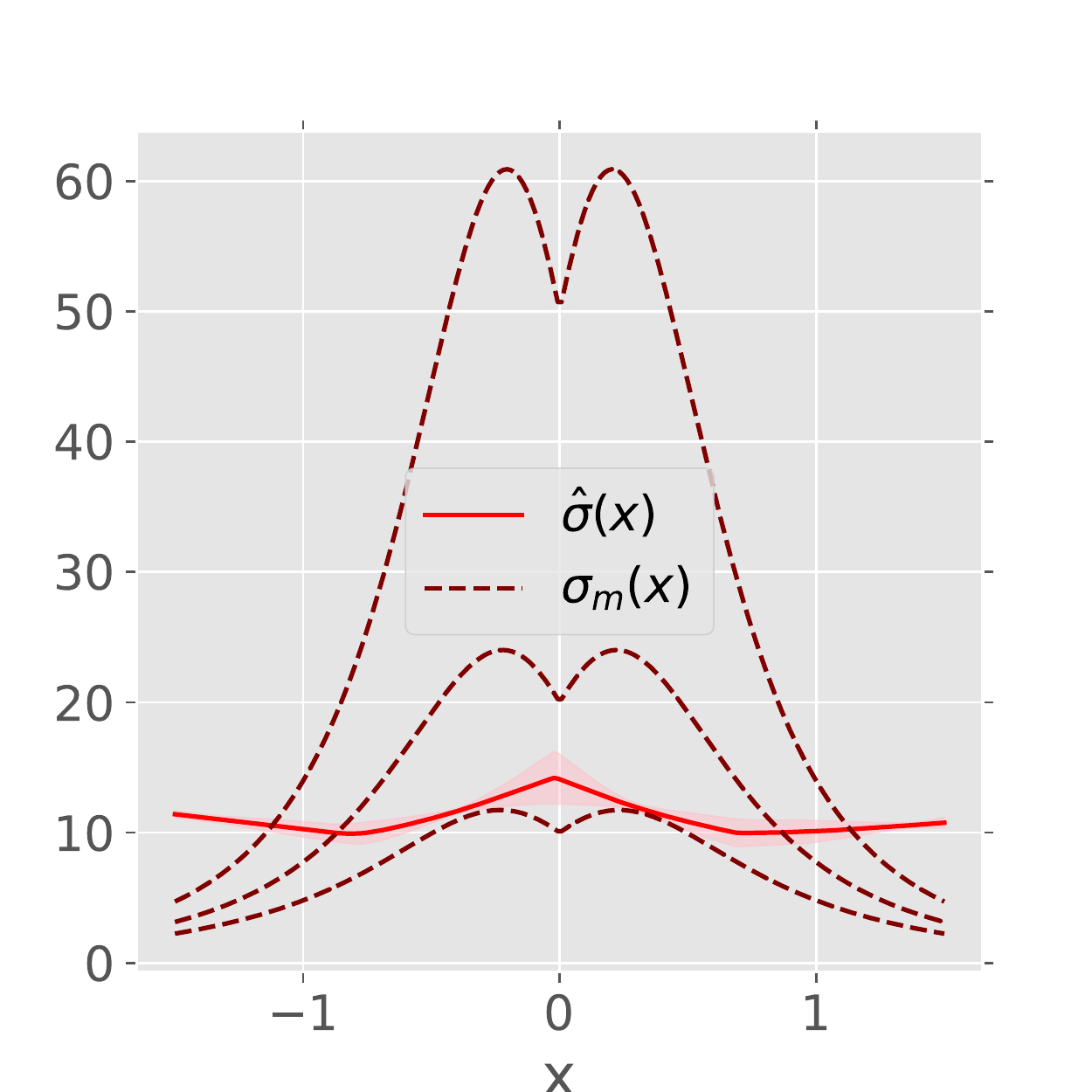}
    \caption{Estimation of the limiting scale parameter $\sigma$ of the maximum of the log-gamma distribution.}
    \label{fig:exp1-log_gamma_sigma}
\end{figure}

\subsubsection{Estimating conditional limiting GEV distributions}
For these experiments, we consider the $\varepsilon$-max-sampler with a total of $2000$ random samples of the limiting GEV.
In Appendix~\ref{sec:ablation}, we include detailed experiments on fitting conditional GEV distributions when they are directly observed. We generate data based on a distribution with a known MDA, and then compare the recovered GEV distribution parameters to sequences known for their convergence to the true parameters.
We provide full details of the convergence of these distributions to their corresponding GEV distributions in Appendix~\ref{sec:convergence}.
The results are presented in~\autoref{fig:exp1} where we compare the true functions to the estimated ones for the different datasets.
We additionally provide an ablation study on the impact of varying the block sizes and the dimensionality of the data in Appendix~\ref{sec:ablation}.

\textbf{Gaussian distribution: Gumbel MDA.}
%We assume that we obtain observations
%$ Y(x) \sim \mathcal{N}(\mu(x),\sigma(x)).$
We generate $X \sim \mathcal{N}(0,1)$, and $Y(x) \sim \mathcal{N}(\mu(x),\sigma(x))$, where $\mu(x) = |x|$ and $\sigma(x) = x^{2}$. The corresponding GEV is the Gumbel distribution. Moreover, we estimate the limiting conditional GEV parameters with $\hat{\mu}(x), \hat{\sigma}(x), \hat{\xi}(x)$.
Figure~\ref{fig:exp1-gaussian} illustrates this result where we compare a sequence converging to the ground truth to the estimated parameters. 

\textbf{Beta distribution: Weibull MDA.} 
We generate samples $X\sim \mathcal{N}(0,1)$ and $Y(x) \sim \mathcal{B}(\alpha(x),\beta(x))$ with $\alpha(x) = |x| +1$ and $\beta(x) = x^2 +1$.
The corresponding GEV is a Weibull distribution. We illustrate the estimates of these parameters in Figure~\ref{fig:exp1-beta}.

\textbf{Log-Gamma distribution: Fr\'echet MDA.} 
We generate samples $X\sim \mathcal{N}(0,1)$ and $Y(x) \sim \mathcal{LG}(\alpha(x),\beta(x))$ with $\alpha(x) = |x| +1$ and $\beta(x) = x^2 +1$ 
where $\mathcal{LG}$ is the log-gamma distribution. The corresponding GEV is a Fr\'echet distribution. Figure~\ref{fig:exp1-log_gamma} illustrates the performance of the proposed method. 
The results for the scale parameter are presented in Appendix~\ref{sec:ablation}

\paragraph{Ablation on $N$ and $D$}
We study how the performance of our method changes when varying the data size $N$ and the dimensionality $D$, as illustrated in Figure \ref{fig:ablation_cgev}.
The ablation suggests that the Weibull and Frechet cases scale when observing large $N$ but the Gumbel case is more difficult to estimate.

\paragraph{Ablation on number of clusters}
We study the performance of our method when varying the number of clusters $K$ and the dimensionality $D$ for a fixed number of data points $N=10^{4}$. 
The results are illustrated in Figure~\ref{fig:ablation_exp1}.

\end{document}